\newtheorem{theorem}{Theorem}[section]
\newtheorem{proposition}[theorem]{Proposition}
\newtheorem{lemma}[theorem]{Lemma}
\newtheorem{corollary}[theorem]{Corollary}
\theoremstyle{definition}
\newtheorem{definition}{Definition}[section]
\newtheorem{example}{Example}[section]
\newtheorem{idea memo}{Idea Memo}[section]
\theoremstyle{remark}
\newtheorem*{remark}{Remark}
  \newcommand{\al}{\alpha}
  \newcommand{\ga}{\gamma}
  \newcommand{\ch}{\chi}
  \newcommand{\ka}{\kappa}
  \newcommand{\la}{\lambda}
  \newcommand{\ph}{\phi}
  \newcommand{\rh}{\rho}
  \newcommand{\si}{\sigma}
  \newcommand{\vp}{\varphi}
  \newcommand{\Ga}{\Gamma}
  \newcommand{\De}{\Delta}
  \newcommand{\Ph}{\Phi}
  \newcommand{\Ps}{\Psi}
\newcommand{\red}{\textcolor{red}}
\newcommand{\cut}[1]{}
\newcommand{\II}{I}
\newcommand{\alg}{\mathrm{alg}}
\newcommand{\bin}{\mathrm {bin}}
\newcommand{\nor}{\mathrm {nor}}
\newcommand{\Rep}{\mathrm{Rep}}
\newcommand{\bx}{\mathbf{x}}
\newcommand{\by}{\mathbf{y}}
\newcommand{\bA}{\mathbf{A}}
\newcommand{\bS}{\mathbf{S}}
\newcommand{\bbM}{\M}
\newcommand{\cA}{\mathcal{A}}
\newcommand{\cB}{\mathcal{B}}
\newcommand{\cE}{\mathcal{E}}
\newcommand{\cF}{\mathcal{F}}
\newcommand{\cH}{\mathcal {H}}
\newcommand{\cI}{\mathcal {I}}
\newcommand{\cK}{\mathcal{K}}
\newcommand{\cL}{\mathcal{L}}
\newcommand{\cM}{\mathcal{M}}
\newcommand{\cN}{\mathcal{N}}
\newcommand{\cO}{\mathcal{O}}
\newcommand{\cX}{\mathcal{X}}
\newcommand{\cY}{\mathcal{Y}}
\newcommand{\cS}{\mathcal{S}}
\newcommand{\tT}{\widetilde{T}}
\newcommand{\id}{{\rm id}}
\newcommand{\beq}{\begin{equation}}
\newcommand{\beqa}{\begin{eqnarray}}
\newcommand{\beqas}{\begin{eqnarray*}}
\newcommand{\beql}[1]{\begin{equation}\label{eq:#1}}
  \newcommand{\eq}[1]{(\ref{eq:#1})}
    \newcommand{\Eq}[1]{Eq.~(\ref{eq:#1})}
\newcommand{\eeq}{\end{equation}}
\newcommand{\eeqa}{\end{eqnarray}}
\newcommand{\eeqas}{\end{eqnarray*}}
\newcommand{\SM}{\cS_n(\cM)}
\newcommand{\bracket}{\braket}
\newcommand{\Ir}{\|\cI\rh\|}
\renewcommand{\ae}{a.e.\ }
\newcommand{\BH}{\mathcal{B}(\mathcal{H})}
\newcommand{\B}{\mathcal{B}}
\newcommand{\CP}{{\mbox{CP}}}
\newcommand{\R}{\mathbb{R}}
\newcommand{\M}{\mathbb{M}}
\newcommand{\aotimes}{\otimes_{\alg}}
\newcommand{\botimes}{\otimes_{\bin}}
\newcommand{\motimes}{\otimes_{\min}}
\newcommand{\ootimes}{\overline{\otimes}}
\newcommand{\os}{\widetilde{ [s]}}
\newcommand{\Cite}[1]{Ref.~\onlinecite{#1}}
\newcommand{\Cites}[1]{Refs.~\onlinecite{#1}}
\newcommand{\ba}{\mathbf{a}}
\newcommand{\bb}{\mathbf{b}}
\begin{document}

\title{\Large Measurement theory in local quantum physics}
\author{Kazuya Okamura}
\email{okamura@math.cm.is.nagoya-u.ac.jp}
\author{Masanao Ozawa}
\email{ozawa@is.nagoya-u.ac.jp}
\affiliation{${}^1$Graduate School of Information Science, Nagoya University,\\
Chikusa-ku, Nagoya 464-8601, Japan }
\date{}
\begin{abstract}
In this paper, we aim to establish foundations of measurement theory in local quantum physics.
For this purpose, we discuss a representation theory of completely positive (CP) instruments 
on arbitrary von Neumann algebras.  We introduce a condition called the normal extension property (NEP) and establish a one-to-one correspondence between CP instruments with  the NEP and statistical equivalence classes of measuring processes.  We show that every CP instrument on an atomic von Neumann algebra has the NEP, extending the well-known result for type I factors.
Moreover, we show that every CP instrument on an injective von Neumann algebra is approximated by CP instruments with the NEP.  The concept of posterior states is also discussed to show that the NEP is equivalent to the existence of a strongly measurable family of posterior states for every normal state.
Two examples of CP instruments without the NEP are obtained from this result.
It is thus concluded that in local quantum physics not every CP instrument represents 
a measuring process, but in most of physically relevant cases every CP instrument 
can be realized by a measuring process within arbitrary error limits, as every
approximately finite dimensional (AFD) von Neumann algebra on a separable Hilbert space is injective.
To conclude the paper, the concept of local measurement in algebraic quantum field theory is 
examined in our framework.  In the setting of the  Doplicher-Haag-Roberts 
and Doplicher-Roberts (DHR-DR) theory describing local excitations, 
we show that an instrument on a local algebra can be extended to a local instrument on 
the global algebra if and only if it is a CP instrument with the NEP, provided that the split 
property holds for the net of local  algebras.
\end{abstract}

\maketitle

\section{Introduction}
\label{se:introduction}
This paper represents the first step of our attempt towards establishing 
measurement theory in local quantum physics \cite{Haa96}.
Quantum measurement theory is an indispensable part of quantum theory, 
which was demanded for
foundations of quantum theory \cite{vN32} and 
provided 
a theoretical basis for quantum information technology \cite{NC00}.
In particular, mathematical theory of quantum measurements
established in the 1970s and the 1980s has made a great success 
in revealing our ability of making precision measurements much broader 
than what was assumed in the conventional approach established in the 1930s,
as shown in the resolution of a dispute about the sensitivity limit for gravitational 
wave detectors \cite{BVT80,Yue83,Cav85,88MS,89RS,Mad88},
and the derivations of universally-valid measurement uncertainty relations
\cite{Hei27,App98c,03UVR,03HUR,03UPQ,04URJ,04URN,Hal04,Wer04,Bra13,BLW13,14NDQ,BLW14JMP,Bra14}
with their experimental demonstrations  
\cite{12EDU,RDMHSS12,13EVR,13VHE,RBBFBW14,14ETE,15A1}.

Mathematical study of quantum measurement began with the famous book 
\cite{vN32} written by von Neumann, based on the so-called repeatability hypothesis.  
Nakamura and Umegaki \cite{NU62} attempted to generalize von Neumann's theory
to continuous observables by the mathematical concept of conditional expectation \cite{Ume54}.
Arveson \cite{Arv67}, however, later showed a no-go theorem for their approach.
Davies and Lewis \cite{DL70} proposed abandoning the repeatability hypothesis
to develop a more flexible approach to quantum measurement theory.
To this end, they introduced the mathematical concept of instrument \cite{DL70,Dav76}
as a framework to analyze statistical properties of general quantum measurements 
which do not necessarily satisfy the repeatability hypothesis, 
extending the notions of operation introduced, for instance,
by Schwinger \cite{Sch59,*Sch60b} and Haag and Kastler \cite{HK64}
as well as effects introduced, for instance, by Ludwig \cite{Lud67,*Lud68}.
At almost the same time, Kraus \cite{Kra71,Kra83} introduced complete positivity 
in the concept of operation and studied measurement processes of yes-no measurements.  
Following those studies, one of the present authors introduced complete positivity
in the concept of instrument and showed that completely positive (CP) instruments
perfectly describe measuring processes in quantum mechanics up to statistical 
equivalence \cite{84QC}.  This result finalized the mathematical characterization
of general measurements in quantum mechanics
(see also Ref.~\onlinecite{04URN} for an axiomatic characterization of general
quantum measurements).

More specifically, the above result is based on a representation theorem of 
CP instruments stating that every CP instrument defined for a quantum system 
with finite degrees of freedom, algebraically represented by a type I factor, 
can be obtained from a measuring process specified by a unitary evolution of the
composite system with a measuring apparatus and by a subsequent direct measurement 
of a meter in the apparatus, and vice versa.
However, this theorem does not have a straightforward extension to arbitrary
von Neumann algebras, since the proof relies on the uniqueness of irreducible normal 
representations of a type I factor up to unitary equivalence \cite{Arv76}. 
Naturally, this difficulty is considered one of major obstacles in generalizing 
quantum measurement theory to quantum systems with infinite degrees of freedom.
In order to overcome this difficulty, here, we study possible extensions of 
the above representation theorem of CP instruments to general ($\si$-finite) 
von Neumann algebras and apply to quantum systems of infinite degrees of freedom
in the framework of algebraic quantum field theory \cite{Ara00}.

One of main results in the present paper is to give a necessary and sufficient
condition for a CP instrument to describe a physical process of measurement.
In mathematical description of a quantum measurement,
it is essential to consider both the von Neumann algebra $\cM$ of 
bounded observables of the measured system and the probability measure 
space $(S,\cF,\mu)$ describing the possible outcomes of measurement 
shown by the meter in the apparatus.
An essential role of CP instruments is to connect them.
From an algebraic point of view, the outcome of measurement is also described by
the abelian von Neumann algebra $L^{\infty}(S,\mu)$ of bounded random variables 
on $(S,\cF,\mu)$, so that it is natural to 
form a certain tensor product 
algebra of $\cM$ and $L^{\infty}(S,\mu)$.
Apart from their algebraic tensor product $\cM\aotimes L^{\infty}(S,\mu)$,
their C*-algebraic binormal tensor product $\cM\botimes L^{\infty}(S,\mu)$ 
arises naturally, since we, first of all,  shall show that every CP instrument $\cI$
can be uniquely extended to a unital CP map $\Psi_\cI$ of $\cM\botimes L^{\infty}(S,\mu)$ 
to $\cM$.  There are many different kinds of operator algebraic tensor product, but no simple
algebraic consideration can suggest what kind of tensor product is a relevant choice,
due to the lack of a general treatment for compositions of different systems in algebraic
quantum theory.  On the other hand, in the case where $\cM$ is a type I factor, 
it is known by the previous investigation that a CP instrument is uniquely extended 
to a unital normal CP map of the $W^*$-tensor product 
$\cM\ootimes L^{\infty}(S,\mu)$ to $\cM$.
In view of this case, it is natural to examine the extendability of the unital CP map 
$\Psi_\cI$ of $\cM\botimes L^{\infty}(S,\mu)$ to $\cM$ 
to a unital normal CP map $\widetilde{\Psi_\cI}$ of $\cM\ootimes L^{\infty}(S,\mu)$ 
to $\cM$.  We regard this extendability as a key property of CP instruments and call it the 
normal extension property (NEP).
Then, it is easily seen that every measuring process determines a CP instrument with  the NEP.
We shall prove the converse that every CP instrument with  the NEP has a corresponding
measuring process by applying a structure theorem of normal representations of von Neumann 
algebras to the Stinespring representation of $\widetilde{\Psi_\cI}$.
In this way, we can avoid the use of the uniqueness theorem of irreducible normal 
representations of a type I factor.
Therefore,  the NEP for CP instruments is a condition equivalent to the existence of the 
corresponding measuring processes.

It should be mentioned that usually it is not easy to check whether a given CP instrument 
has the NEP or not.  We consider the problem as to
how ubiquitously such CP instruments exist.   
As above, the set of instruments describing measuring processes is characterized 
by the set $\mathrm{CPInst}_{\mathrm{NE}}(\cM,S)$ of CP instruments with the NEP,
which is a subset of the set $\mathrm{CPInst}(\cM,S)$ of CP instruments.
In the case where $\cM$ is a type I factor,  it is known that 
$\mathrm{CPInst}_{\mathrm{NE}}(\cM,S)=\mathrm{CPInst}(\cM,S)$ holds \cite{84QC}.  
We shall show that this relation also holds if $\cM$ is a direct sum of type I factors, 
while the equality does not hold even for a type I von Neumann algebra as discussed below. 
Thus, the next problem is whether  $\mathrm{CPInst}_{\mathrm{NE}}(\cM,S)$ is
experimentally dense in $\mathrm{CPInst}(\cM,S)$ in the sense that every CP instrument
can be approximated by a CP instrument with the NEP within arbitrary error limits.
This problem is affirmatively solved for injective von Neumann algebras.
In most of physically relevant cases the algebras of local observables are known to be
injective, as they are separable approximately finite dimensional (AFD) 
von Neumann algebras, and hence this result will provide a satisfactory basis 
for measurement theory in local quantum physics.

In their seminal paper \cite{DL70} Davies and Lewis conjectured the non-existence 
of (weakly) repeatable instruments for continuous observables in the standard
formulation of quantum mechanics.
An instrument is called weakly repeatable if it satisfies an analogous condition with
von Neumann's repeatability hypothesis \cite{vN32}.
In Ref.~\onlinecite{85CA}, this conjecture is proved by connecting discreteness of
a weakly repeatable instrument with the existence of a family of posterior states, 
which determines the state just after the measurement given each individual 
value of measurement outcome.
In this paper, we shall prove that the NEP is equivalent to the existence of a strongly 
measurable family of posterior states for every normal state.
From this result, two examples of CP instruments on injective von Neumann algebras
without the NEP are obtained,
which arise from weakly repeatable CP instruments for continuous observables in
a commutative (type I) von Neumann algebra and a type II$_1$ factor.
Thus, in the general case there exists a weakly repeatable CP instrument for a
continuous observable that does not have the corresponding measuring process, 
whereas for a separable type I factor every CP instrument has the corresponding measuring 
process but no weakly repeatable instrument exists for continuous observables.

By making use of the NEP, we also develop measurement theory in algebraic quantum field 
theory (AQFT) and characterize local measurements under the Haag duality as postulated 
in the DHR-DR theory
\cite{DHR69a, *DHR69b,*DHR71,*DHR74,DR89a,*DR89b,*DR90} 
and the split property, which is derived, for instance, by the nuclearity
condition \cite{BW86}.  Under those assumptions,
we show that an instrument on a local algebra can be extended to a local instrument on 
the global algebra if and only if it is a CP instrument with the NEP.
Thus, we conclude that the experimental closure of the statistical equivalence classes 
of local measurements on a given spacetime region is represented by the set of CP interments
defined on the local algebra of that region.
As above, we overcome a difficulty in the previous investigations in mathematical 
characterizations of quantum measurement and open up local measurement theory 
in quantum systems of infinite degrees of freedom.

The necessary preliminaries are given in section \ref{se:preliminaries};
several tensor products of operator algebras and a structure theorem 
of normal representations of von Neumann algebras are summarized.
In section \ref{se:completely}, we discuss a representation theory of CP instruments and
establish a one-to-one correspondence between CP instruments
with  the NEP and statistical equivalence classes of measuring processes.
In section  \ref{se:approximations}, 
it is shown that every CP instrument on an atomic von Neumann algebra has  the NEP.
Moreover, we prove a density theorem stating that every CP instrument on an injective 
von Neumann algebra can be approximated by CP instruments with  the NEP.
Thus, we establish that the NEP holds approximately in most of physically relevant cases.
In section \ref{se:existence}, the existence problem of a family of posterior states is discussed. 
It is proved that  the NEP is equivalent to
the existence of a strongly measurable family of posterior states for every normal state.
From this result, two examples of CP instruments without  the NEP are obtained.
In section \ref{se:dhr-dr},  the concept of local measurements 
in algebraic quantum field theory is examined in our framework.
In the setting of the DHR-DR theory describing local excitations,
we show that any physically relevant local measurement carried out in a local spacetime 
region is represented by a CP instrument with NEP defined on the corresponding local 
algebra and conversely that every CP instrument defined on a local algebra represents 
a local measurement within arbitrary error limits.

\section{Preliminaries}
\label{se:preliminaries}
A representation of a C*-algebra  $\cX$ on a Hilbert space  $\cH$ is
a *-homomorphism of $\cX$ into the algebra $\B(\cH)$ 
of bounded linear operators on $\cH$.
Let $\cX$ and $\cY$ be C*-algebras and $\cH$ a Hilbert space.
We denote by $\Rep(\cX)$ the class of representations of $\cX$, by
$\Rep(\cX;\cH)$ the set of representations of $\cX$ on $\cH$,
and by $\mathbf{Hilb}$ the class of Hilbert spaces.
We define two norms $\|\cdot\|_{\min}$ and $\|\cdot\|_{\max}$
on the algebraic tensor product $\cX\otimes_{\alg}\cY$
of $\cX$ and $\cY$ by
\begin{align}
\| A\|_{\textrm{min}}&=\sup_{(\pi_1,\pi_2)\in\Rep(\cX)\times\Rep(\cY)}
\| \sum_{j=1}^n\pi_1(X_j)\otimes\pi_2(Y_j)\|, \\
\| A\|_{\textrm{max}}&=\sup_{(\pi_1,\pi_2)\in
\II_{\max}(\cX,\cY)}\| \sum_{j=1}^n\pi_1(X_j)\pi_2(Y_j)\|,
\end{align}
respectively,
for every $A=\sum_{j=1}^nX_j\otimes_{\alg}Y_j\in\cX\otimes_{\alg}\cY$,
where
\begin{align}
\II_{\max}(\cX,\cY)&=\bigcup_{\cH\in\mathbf{Hilb}}
\II_{\max}(\cX,\cY;\cH),\\
\II_{\max}(\cX,\cY;\cH) &= 
\{(\pi_1,\pi_2)\in\Rep(\cX;\cH)\times\Rep(\cY;\cH)
 |  \nonumber\\
&\hspace{15mm} \mbox{$[\pi_1(X),\pi_2(Y)]=0$ for all $X\in\cX$ and $Y\in\cY$}\}.
\end{align}
We call the completion
$\cX\motimes \cY$ ($\cX\otimes_{\max}\cY$, resp.)
of $\cX\otimes_{\alg}\cY$ with respect to
the norm $\|\cdot\|_{\min}$ ($\|\cdot\|_{\max}$, resp.)
the minimal (maximal, resp.) tensor product of $\cX$ and $\cY$.

Let $\cM$ be a von Neumann algebra and $\cY$ a C*-algebra.
We denote by $\Rep_n(\cM;\cH)$ the set of
normal representations of $\cM$ on $\cH$.
We call the completion $\cM\otimes_{\nor}\cY$
of $\cM\otimes_{\alg}\cY$ with respect to
the norm $\|\cdot\|_{\nor}$ defined below the normal tensor product
of $\cM$ and $\cY$:
\begin{equation}
\| X\|_{\textrm{nor}}=\sup_{(\pi_1,\pi_2)\in
\II_{\nor}(\cM,\cY)}\| \sum_{j=1}^n\pi_1(M_j)\pi_2(Y_j)\|
\end{equation}
for every $X=\sum_{j=1}^nM_j\otimes_{\alg}Y_j\in\cM\otimes_{\alg}\cY$,
where
\begin{align}
\II_{\nor}(\cM,\cY)&=
\bigcup_{\cH\in\mathbf{Hilb}}
\II_{\nor}(\cM,\cY;\cH),\\
\II_{\nor}(\cM,\cY;\cH) &= 
\{(\pi_1,\pi_2)\in\Rep_n(\cM;\cH)\times\Rep(\cY;\cH) |
  \nonumber\\
&\hspace{15mm} \mbox{$[\pi_1(M),\pi_2(Y)]=0$ for all $M\in\cM$ and $Y\in\cY$}\}.
\end{align}

Let $\cM$ and $\cN$ be von Neumann algebras.
We call the completion $\cM\botimes \cN$
of $\cM\otimes_{\alg}\cN$ with respect to
the norm $\|\cdot\|_{\bin}$ defined below
the binormal tensor product of $\cM$ and $\cN$:
\begin{equation}
\| X\|_{\textrm{bin}}=\sup_{(\pi_1,\pi_2)\in
\II_{\bin}(\cM,\cN)}\| \sum_{j=1}^n\pi_1(M_j)\pi_2(N_j)\|
\end{equation}
for every $X=\sum_{j=1}^nM_j\otimes_{\alg}N_j\in\cM\otimes_{\alg}\cN$,
where
\begin{align}
\II_{\bin}(\cM,\cN)&=
\bigcup_{\cH\in\mathbf{Hilb}}
\II_{\bin}(\cM,\cN;\cH),\\
\II_{\bin}(\cM,\cN;\cH) &= 
\{(\pi_1,\pi_2)\in\Rep_n(\cM;\cH)\times\Rep_n(\cN;\cH) |
  \nonumber\\
&\hspace{15mm}\mbox{$[\pi_1(M),\pi_2(N)]=0$ for all $M\in\cM$ and $N\in\cN$}\}.
\end{align}

The maximal tensor product $\cX\otimes_{\max}\cY$,
the normal tensor product $\cM\otimes_{\nor}\cY$,
and the binormal tensor product $\cM\botimes \cN$
have the following properties:

\begin{proposition}[\text{Ref.~\onlinecite[Chapter IV, Proposition 4.7]{Tak79}}]\label{Max}
Let $\cX$ and $\cY$ be C*-algebras.
Let $\cM$ and $\cN$ be W*-algebras.
Let $\cH$ be a Hilbert space.
For every $(\pi_1,\pi_2)\in\II_{\max}(\cX,\cY;\cH)$,
{\rm [}$(\pi_1,\pi_2)\in\II_{\nor}(\cM,\cY;\cH)$, or
$(\pi_1,\pi_2)\in\II_{\bin}(\cM,\cN;\cH)$,
resp.{\rm ]}
there exists a representation $\pi$ of $\cX\otimes_{\max}\cY$ 
{\rm [}$\cM\otimes_{\nor}\cY$, or
$\cM\botimes \cN$, resp.{\rm ]}
on $\cH$
such that
\begin{equation}
\pi(X\otimes Y)=\pi_1(X)\pi_2(Y)
\end{equation}
for all $X\in\cX$ and $Y\in\cY$
{\rm [}$X\in\cM$ and $Y\in\cY$,
or $X\in\cM$ and $Y\in\cN$, resp.{\rm]}.
\end{proposition}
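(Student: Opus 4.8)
The plan is to handle the three cases of the proposition in one stroke, since they differ only in the ambient commuting-pair class ($\II_{\max}(\cX,\cY;\cH)$, $\II_{\nor}(\cM,\cY;\cH)$, or $\II_{\bin}(\cM,\cN;\cH)$) and in the corresponding cross seminorm on the algebraic tensor product ($\|\cdot\|_{\max}$, $\|\cdot\|_{\nor}$, or $\|\cdot\|_{\bin}$). Write $(\cA,\cB)$ for $(\cX,\cY)$, $(\cM,\cY)$, or $(\cM,\cN)$ as appropriate, and let $(\pi_1,\pi_2)$ be the given commuting pair on $\cH$. First I would produce the candidate $\pi$ at the algebraic level: the map $(a,b)\mapsto\pi_1(a)\pi_2(b)$ from $\cA\times\cB$ to $\B(\cH)$ is bilinear, so by the universal property of $\cA\otimes_{\alg}\cB$ it factors through a unique linear map $\pi_0\colon\cA\otimes_{\alg}\cB\to\B(\cH)$ with $\pi_0(a\otimes b)=\pi_1(a)\pi_2(b)$.

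Next I would check that $\pi_0$ is a $*$-homomorphism, which is precisely where the commutation relation built into the definition of each class is used. On elementary tensors,
\begin{align*}
\pi_0\bigl((a\otimes b)(a'\otimes b')\bigr)&=\pi_1(aa')\,\pi_2(bb')=\pi_1(a)\pi_1(a')\pi_2(b)\pi_2(b')\\
&=\pi_1(a)\pi_2(b)\pi_1(a')\pi_2(b')=\pi_0(a\otimes b)\,\pi_0(a'\otimes b'),
\end{align*}
the third equality being $[\pi_1(a'),\pi_2(b)]=0$; and $\pi_0\bigl((a\otimes b)^*\bigr)=\pi_1(a^*)\pi_2(b^*)=\pi_2(b)^*\pi_1(a)^*=\bigl(\pi_1(a)\pi_2(b)\bigr)^*=\pi_0(a\otimes b)^*$, again by commutativity. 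Extending by bilinearity then shows $\pi_0$ is a $*$-homomorphism on $\cA\otimes_{\alg}\cB$.

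The remaining step is boundedness, and this is immediate from how the seminorms are defined. For $A=\sum_{j=1}^n a_j\otimes b_j$ we have $\pi_0(A)=\sum_{j=1}^n\pi_1(a_j)\pi_2(b_j)$, and since $(\pi_1,\pi_2)$ belongs to $\II_{\max}(\cX,\cY;\cH)\subseteq\II_{\max}(\cX,\cY)$ (and likewise in the $\nor$ and $\bin$ cases), which is exactly the class over which the supremum defining the relevant seminorm is formed, we get $\|\pi_0(A)\|\le\|A\|$ with $\|\cdot\|$ one of $\|\cdot\|_{\max}$, $\|\cdot\|_{\nor}$, $\|\cdot\|_{\bin}$. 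Hence $\pi_0$ is contractive and extends uniquely by continuity to a bounded linear map $\pi$ on the completion $\cX\otimes_{\max}\cY$ ($\cM\otimes_{\nor}\cY$, or $\cM\botimes\cN$, resp.); it is still a $*$-homomorphism because $\{(u,v):\pi(uv)=\pi(u)\pi(v)\}$ and $\{u:\pi(u^*)=\pi(u)^*\}$ are closed and contain the dense subalgebra $\cA\otimes_{\alg}\cB$. The identity $\pi(X\otimes Y)=\pi_1(X)\pi_2(Y)$ then holds by continuity, which is the assertion.

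I do not expect a genuine obstacle here: the proposition is essentially an unwinding of the definitions of the maximal, normal, and binormal cross seminorms, with commutativity of the pair $(\pi_1,\pi_2)$ doing all the work. The only points deserving a little care — and they are routine — are, first, confirming that $\|\cdot\|_{\nor}$ and $\|\cdot\|_{\bin}$ are submultiplicative $*$-seminorms on the algebraic tensor product (submultiplicativity by regrouping $\sum_{i,j}\rho_1(M_iM'_j)\rho_2(N_iN'_j)$ after applying $[\rho_1(M'_j),\rho_2(N_i)]=0$, and $*$-invariance from $(\rho_1(M)\rho_2(N))^*=\rho_1(M)^*\rho_2(N)^*$), so that the completions are Banach $*$-algebras and the extension-by-continuity argument applies; and second, noting that these are in fact norms, so no passage to a quotient is needed and $\cA\otimes_{\alg}\cB$ really does embed densely in its completion.
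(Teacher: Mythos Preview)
The paper does not give its own proof of this proposition; it simply cites Takesaki \cite[Chapter IV, Proposition 4.7]{Tak79}. Your argument is correct and is precisely the standard one: build $\pi_0$ on the algebraic tensor product by bilinearity, verify multiplicativity and $*$-preservation using the commutation hypothesis, and obtain contractivity for free because $(\pi_1,\pi_2)$ is one of the pairs over which the defining supremum is taken. This is essentially how Takesaki proceeds as well, so there is nothing to contrast.
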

 
A C*-algebra $\cX$ is said to be nuclear if
\begin{equation}
\cX\motimes \cY=\cX\otimes_{\max}\cY
\end{equation}
for every C*-algebra $\cY$ \cite{Tak02}.
It is known that C*-tensor products with nuclear C*-algebras are unique.
A C*-algebra $\cX$ on a Hilbert space $\cH$ is said to be injective if
there exists a norm one projection of $\B(\cH)$ onto $\cX$.
It is proven by Effros and Lance\cite{EL77} that
a von Neumann algebras $\cM$ is injective if and only if
\begin{equation}
\cM\motimes \cY=\cM\otimes_{\nor}\cY
\end{equation}
for every C*-algebra $\cY$.
Abelian C*-algebras are both nuclear and injective.
A characterization of von Neumann algebras which are nuclear as C*-algebras
is given in Brown-Ozawa \cite[Proposition 2.4.9]{BO08}. 

\begin{theorem}[Arveson \protect{\cite[Theorem 1.3.1]{Arv69},
Ref.~\onlinecite[Theorem 12.7]{Pau02}}]\label{CommLift}
Let $\cH$, $\cK$ be Hilbert spaces, and $\mathcal{B}$ a unital C*-subalgebra of
$\B(\cK)$.  Let $V\in\B(\cH,\cK)$ be such that
$\cK=\overline{\mathrm{span}}(\mathcal{B}V\cH)$.
For every $A\in(V^* \mathcal{B}V)'$, there exists a unique $A_1\in\mathcal{B}'$ such that
$VA=A_1V$. Furthermore, the map $\pi':A\in(V^* \mathcal{B}V)'\ni A\mapsto
A_1\in\mathcal{B}'\cap\{VV^*\}'$ is a normal surjective *-homomorphism.
\end{theorem}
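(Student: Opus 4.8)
\emph{Proof strategy.}\ The plan is to construct $A_{1}$ explicitly on the dense subspace $\mathcal{D}:=\mathrm{span}\{\,BV\xi\mid B\in\mathcal{B},\ \xi\in\cH\,\}$ of $\cK$ and then transfer all algebraic structure through this construction. Given $A\in(V^{*}\mathcal{B}V)'$, set
\[
A_{1}\Big(\sum_{j=1}^{n}B_{j}V\xi_{j}\Big):=\sum_{j=1}^{n}B_{j}VA\xi_{j},\qquad B_{j}\in\mathcal{B},\ \xi_{j}\in\cH .
\]
The crucial step is to show this is well defined with $\|A_{1}\|\le\|A\|$. For this I would introduce the operator matrix $T:=\big[\,V^{*}B_{j}^{*}B_{k}V\,\big]_{j,k}$ on $\cH^{\oplus n}$, which is positive since $T=\Phi^{*}\Phi$ for $\Phi\colon\cH^{\oplus n}\to\cK$, $\Phi(\xi_{1},\dots,\xi_{n})=\sum_{j}B_{j}V\xi_{j}$. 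Because $\mathcal{B}$ is an algebra, every entry of $T$ lies in $V^{*}\mathcal{B}V$, so the ampliation $A^{(n)}:=A\oplus\cdots\oplus A$ commutes with $T$, hence with $T^{1/2}$; therefore, writing $\vec\xi=(\xi_{1},\dots,\xi_{n})$,
\[
\Big\|\sum_{j}B_{j}VA\xi_{j}\Big\|^{2}=\big\langle A^{(n)}\vec\xi,\,TA^{(n)}\vec\xi\big\rangle=\big\|A^{(n)}T^{1/2}\vec\xi\big\|^{2}\le\|A\|^{2}\big\langle\vec\xi,T\vec\xi\big\rangle=\|A\|^{2}\Big\|\sum_{j}B_{j}V\xi_{j}\Big\|^{2}.
\]
This gives well-definedness (the case $\sum_{j}B_{j}V\xi_{j}=0$) and the norm bound, so $A_{1}$ extends by continuity to a bounded operator on $\cK$.

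Taking $B=I$ (this is where unitality of $\mathcal{B}$ enters) gives $A_{1}V=VA$, a one-line check on $\mathcal{D}$ shows $A_{1}\in\mathcal{B}'$, and uniqueness within $\mathcal{B}'$ is automatic since any $W\in\mathcal{B}'$ with $WV=VA$ satisfies $W(BV\xi)=B\,WV\xi=BVA\xi$ on the dense set $\mathcal{D}$. For the ``furthermore'' part: $V^{*}\mathcal{B}V$ is self-adjoint, so $(V^{*}\mathcal{B}V)'$ is a von Neumann algebra stable under the involution and $A^{*}$ also admits a lift; an inner-product computation on $\mathcal{D}$, moving $A$ past $V^{*}B^{*}CV$ by the commutation hypothesis, identifies it as $(A^{*})_{1}=A_{1}^{*}$. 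Together with $(AB)_{1}=A_{1}B_{1}$ and $(\alpha A+B)_{1}=\alpha A_{1}+B_{1}$, checked likewise on $\mathcal{D}$, this shows $\pi'\colon A\mapsto A_{1}$ is a unital $*$-homomorphism of $(V^{*}\mathcal{B}V)'$ with $\|\pi'(A)\|\le\|A\|$. Applying the construction to $A^{*}$ gives $A_{1}^{*}V=VA^{*}$, whose adjoint $V^{*}A_{1}=AV^{*}$, combined with $A_{1}V=VA$, yields $VV^{*}A_{1}=VAV^{*}=A_{1}VV^{*}$; hence $\pi'$ takes values in $\mathcal{B}'\cap\{VV^{*}\}'$. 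Normality follows because $\langle\pi'(A)\,BV\xi,\,CV\eta\rangle=\langle A\xi,\,V^{*}B^{*}CV\eta\rangle$ is $\sigma$-weakly continuous in $A$ while the vectors $BV\xi$ span a dense subspace and $\|\pi'(A)\|\le\|A\|$, so $\pi'$ is $\sigma$-weakly continuous on bounded sets, i.e.\ normal.

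It remains to prove that $\pi'$ is onto $\mathcal{B}'\cap\{VV^{*}\}'$, which I expect to be the most delicate point. Given $B_{0}\in\mathcal{B}'\cap\{VV^{*}\}'$, I would use the polar decomposition $V=u|V|$ with $|V|=(V^{*}V)^{1/2}$ and $u$ the partial isometry having initial projection $u^{*}u$ onto $(\ker V)^{\perp}=\overline{\mathrm{ran}}\,|V|$ and final projection $uu^{*}$ onto $\overline{\mathrm{ran}}\,V$, and set $A:=u^{*}B_{0}u$. From $[B_{0},VV^{*}]=0$ and $VV^{*}=u|V|^{2}u^{*}$ one deduces $[A,|V|^{2}]=0$, hence $[A,|V|]=0$; using $Vu^{*}=(VV^{*})^{1/2}$, which commutes with $B_{0}$, and $(VV^{*})^{1/2}u=V$, one then gets $VA=B_{0}V$; and a short computation with the relations $u^{*}uu^{*}=u^{*}$, $uu^{*}u=u$ and $[B_{0},uu^{*}]=0$ shows $A\in(V^{*}\mathcal{B}V)'$. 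By the uniqueness already proved, $\pi'(A)=B_{0}$, establishing surjectivity. The genuine obstacle here is twofold: recognizing that positivity of $T$ together with $[A^{(n)},T]=0$ is precisely what controls $A_{1}$; and, in the surjectivity step, seeing that the hypothesis $B_{0}\in\{VV^{*}\}'$ is exactly what makes $u^{*}B_{0}u$ land in $(V^{*}\mathcal{B}V)'$ with the correct lift. Everything else is density bookkeeping on $\mathcal{D}$.
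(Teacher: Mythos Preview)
The paper does not supply its own proof of this theorem: it is quoted in the preliminaries with citations to Arveson \cite[Theorem 1.3.1]{Arv69} and Paulsen \cite[Theorem 12.7]{Pau02}, and is then used as a black box (e.g.\ in the proof of Proposition~\ref{Naimark-Ozawa}). So there is nothing in the paper to compare against beyond the references themselves.

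Your argument is essentially the standard one in those references and is correct. A couple of small points worth tightening in the surjectivity step: the cleanest way to see $A:=u^{*}B_{0}u\in(V^{*}\mathcal{B}V)'$ is not via $[A,|V|]$ but via the two intertwining relations you already have. From $Vu^{*}=(VV^{*})^{1/2}$ and $[B_{0},(VV^{*})^{1/2}]=0$ you get $VA=B_{0}V$; applying the same computation to $B_{0}^{*}$ (which also lies in $\mathcal{B}'\cap\{VV^{*}\}'$) and taking adjoints gives $AV^{*}=V^{*}B_{0}$. Then for any $B\in\mathcal{B}$,
\[
A\,V^{*}BV=V^{*}B_{0}BV=V^{*}BB_{0}V=V^{*}BV\,A,
\]
which is the commutation you need. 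Your sketch alludes to this but routes it through $[A,|V|^{2}]=0$, which is true yet not the load-bearing step; the pair $VA=B_{0}V$, $AV^{*}=V^{*}B_{0}$ is what does the work.
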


Let $\cX$ and $\cY$ be C*-algebras.
We denote by $\mathrm{CP}(\cX,\cY)$ the set of completely positive linear maps
on $\cX$ to $\cY$.
Let $\cM$ be a von Neumann algebra on a Hilbert space $\cH$.
For every $T\in\mathrm{CP}(\cX,\cM)$,
we denote by $(\pi_T,\cK_T,V_T)$ the minimal Stinespring representation of $T$.
The following theorem is known as the Arveson extension theorem.

\begin{theorem}[Arveson \protect{\cite[Theorem 1.2.3]{Arv69},
Ref.~\onlinecite[Theorem 7.5]{Pau02}}]\label{ArExTh}
Let $\cX$ and $\cY$ be C*-algebras such that $\cX\subset\cY$.
Let $\cH$ be a Hilbert space.
For every $T\in\mathrm{CP}(\cX,\B(\cH))$,
there exists $\tT\in\mathrm{CP}(\cY,\B(\cH))$ such that
$\tT(X)=T(X)$, $X\in\cX$.
\end{theorem}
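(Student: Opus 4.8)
The plan is to reduce the statement to a finite-dimensional Hilbert space, where it becomes a Hahn--Banach-type extension for positive functionals, and then to recover the general case by a compactness argument over finite-rank corners.

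First I would treat $\cH=\C^n$. Recall the standard dictionary between completely positive maps into $M_n$ and positive functionals on matrix amplifications: a linear map $T\colon\cX\to M_n$ is completely positive if and only if the functional $s_T$ on $M_n(\cX)$ given by $s_T(\left[X_{ij}\right])=\sum_{i,j}\langle T(X_{ij})e_j,e_i\rangle$, with $\{e_i\}$ the standard basis of $\C^n$, is positive. Since $\cX\subset\cY$ induces an inclusion $M_n(\cX)\subset M_n(\cY)$ of C*-algebras (adjoining a common unit if necessary), and since every positive functional on a C*-subalgebra admits a norm-preserving extension to the ambient algebra which is automatically positive again, we obtain a positive functional $\tilde s$ on $M_n(\cY)$ with $\tilde s|_{M_n(\cX)}=s_T$. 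Running the dictionary backwards, $\tilde s=s_{\tT}$ for a unique $\tT\in\mathrm{CP}(\cY,M_n)$, and by construction $\tT(X)=T(X)$ for all $X\in\cX$.

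For general $\cH$ I would approximate $T$ by its finite-dimensional corners. Fix an increasing net $(P_\lambda)$ of finite-rank projections on $\cH$ with $P_\lambda\to I$ in the strong operator topology. The compressions $T_\lambda(\cdot):=P_\lambda T(\cdot)P_\lambda$ lie in $\mathrm{CP}(\cX,\B(P_\lambda\cH))$, so by the finite-dimensional case each extends to some $\tT_\lambda\in\mathrm{CP}(\cY,\B(P_\lambda\cH))\subset\mathrm{CP}(\cY,\B(\cH))$. These extensions are uniformly bounded, since $\|\tT_\lambda\|=\|\tT_\lambda(1)\|=\|P_\lambda T(1)P_\lambda\|\le\|T(1)\|=\|T\|$, so the net lies in a fixed multiple of the unit ball of the space of bounded maps $\cY\to\B(\cH)$, which is compact in the point--weak* (BW) topology. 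Passing to a convergent subnet produces a bounded map $\tT\colon\cY\to\B(\cH)$.

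It remains to verify that $\tT$ is completely positive and extends $T$. Complete positivity passes to point--weak* limits: for any $\left[Y_{ij}\right]\ge 0$ in $M_n(\cY)$ and any $\xi_1,\dots,\xi_n\in\cH$, the scalar $\sum_{i,j}\langle\tT(Y_{ij})\xi_j,\xi_i\rangle$ is the limit of the nonnegative scalars $\sum_{i,j}\langle\tT_\lambda(Y_{ij})\xi_j,\xi_i\rangle$. And for $X\in\cX$ and $\xi,\eta\in\cH$ one has $\langle\tT_\lambda(X)\xi,\eta\rangle=\langle T(X)P_\lambda\xi,P_\lambda\eta\rangle\to\langle T(X)\xi,\eta\rangle$, so $\tT(X)=T(X)$. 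I expect the main obstacle to be not the finite-dimensional extension (which is essentially classical Hahn--Banach for positive functionals) but the passage to infinite-dimensional range: one must secure the uniform bound on the compressed extensions and invoke BW-compactness to obtain a limit map, and then check that complete positivity --- tested by finitely many scalar inequalities, each closed under limits --- survives. A minor technical point is the non-unital case, which is absorbed by compatibly unitizing $\cX\subset\cY$ and controlling the image of the added unit.
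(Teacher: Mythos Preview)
Your proposal is a correct outline of the standard proof of Arveson's extension theorem, essentially the argument in Paulsen's book cited in the statement. Note, however, that the paper does not supply its own proof of this theorem: it is quoted as a preliminary result with references \cite{Arv69,Pau02}, so there is no ``paper's proof'' to compare against.
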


Let $\cM$ be a von Neumann algebra on a Hilbert space  $\cH$.
Denote by $\cM_{*}$ the predual of $\cM$, i.e., 
the space of ultraweakly continuous linear functionals on $\cM$.
Denote by $\braket{\cdot,\cdot}$ the duality pairing between $\cM_{*}$ 
(or $\cM^{*}$) and $\cM$.
We adopt the following notations:
\begin{align}
\cM_{\ast,+} &=\{ \varphi\in\cM_{\ast}| \varphi\geq 0\},\nonumber\\
\SM &= \{ \varphi\in\cM_{\ast,+}| \varphi(1)=1\}.
\end{align}

A measurable space is a pair $(S,\cF)$ of a set $S$ and a $\si$-algebra, (equivalently,
a $\si$-field, or a tribe) of subsets of $S$.  
As in some of our previous works \cite{84QC,85CA,85CC} 
a measurable space $(S,\cF)$ is also called a 
Borel space whether  $S$ is a topological space and $\cF$ is the $\si$-algebra 
generated by open subsets of $S$ or not \cite{Mac57,Tak79}.

Let $(S,\cF,\mu)$ be a finite measure space, i.e, a measurable space $(S,\cF)$ 
with a finite measure $\mu$ on $\cF$. 
Denote by $\cL(S,\mu)$ be the *-algebra of complex-valued $\mu$-measurable functions on $(S,\cF,\mu)$.
A $\mu$-measurable function $f$ is called negligible if $f(s)=0$ for $\mu$-\ae $s\in S$.
Denote by $\cN(S,\mu)$ the ideal of $\mu$-negligible functions on $S$.
Denote by $L(S,\mu)$ the quotient *-algebra modulo the negligible functions, i.e., 
$L(S,\mu)=\cL(S,\mu)/\cN(S,\mu)$.
We write $[f]=f+\cN(S,\mu)$ for any $f\in\cL(S,\mu)$.
Denote by $\cL^{1}(S,\mu)$ the space of complex-valued $\mu$-integrable 
functions on $S$.  The quotient space of $\cL^{1}(S,\mu)$
modulo the negligible functions, denoted by $L^{1}(S,\mu)$, 
is a Banach space with the $L^1$ norm defined by
$\|[f]\|_{1}=\int_S |f(s)| d\mu(s)$ for all $f\in \cL^{1}(S,\mu)$.
Denote by $M^{\infty}(S,\mu)$ the *-subalgebra of bounded complex-valued 
$\mu$-measurable functions on $S$.
A function $g\in M^{\infty}(S,\mu)$ is called $\mu$-negligible if
$g(s)=0$ for $\mu$-\ae $s\in S$.
The quotient algebra of $M^{\infty}(S,\mu)$ modulo the negligible functions,
denoted by $L^{\infty}(S,\mu)$, is a commutative W$^*$-algebra, with the predual
$L^{1}(S,\mu)$,
with respect to the essential supremum norm defined by 
$\|[g]\|_{\infty}={\mbox{ess sup}_{s\in S}}|g(s)|$ for all $g\in M^{\infty}(S,\mu)$.

\begin{definition}[\protect{CP measure space, \Cite{OOS15}, Definition 5.1}]
A triplet $(S,\cF,\Phi)$ is called a CP measure space
if it satisfies the following two conditions.

(i)  $(S,\cF)$ is a measurable space.

(ii) $\Phi$ is a $\mathrm{CP}(\cX,\cM)$-valued map on $\cF$ satisfying
\begin{equation}
\braket{\rh,\Phi(\cup_i \De_i) X}=\sum_i\braket{\rh,\Phi(\De_i) X}
\end{equation}
for any mutually disjoint sequence $\{\De_i\}_{i\in\mathbb{N}}$ in $\cF$,
$\rh\in\cM_\ast$, and $X\in\cX$.

A CP measure space $(S,\cF,\Phi)$ is called a CP measure space with barycenter
$T\in\mathrm{CP}(\cX,\cM)$ or a CP measure space of $T$ if $T=\Phi(S)$.
\end{definition}

For a normal positive linear functional $\rh$ on $\cM$,
the positive finite measure $\rh\circ\Phi$ on $S$
is defined by $(\rh\circ\Phi)(\De)=\braket{\rh,\Phi(\De)1}$ for all $\De\in\cF$.
If  $\rh$ is faithful, $L^{\infty}(S,\rh\circ\Phi)$ 
is identical with the space $L^{\infty}(S,\Phi)$ of essentially bounded 
$\Phi$-measurable functions.

\begin{lemma}[\protect{Ref.~\onlinecite[Lemma 5.3]{OOS15}}]\label{tomita5}
If $(S,\cF,\Phi)$ is a CP measure space of $T\in\mathrm{CP}(\cX,\cM)$,
then there exists a unique positive contractive linear map
$L^{\infty}(S,\Phi)\ni f\mapsto \kappa_\Phi(f)\in\pi_T(\cX)'$
satisfying
\begin{equation}
V_T^* \kappa_\Phi(f)\pi_T(X)V_T=\int f(s) d\Phi(s)X
=:\Phi(f)X,\hspace{5mm}f\in L^{\infty}(S,\Phi),X\in\cX,
\end{equation}
i.e., for every $\rh\in\cM_\ast$,
\begin{equation}
\bracket{\rh,V_T^* \kappa_\Phi(f)\pi_T(X)V_T}
=\int f(s) d\bracket{\rh,\Phi(s)X},
\hspace{5mm}f\in L^{\infty}(S,\Phi),X\in\cX. \label{tomita0}
\end{equation}
Furthermore, if $f\in L^{\infty}(S,\Phi)_+$ satisfies $\kappa_\Phi(f)=0$, then $f=0$.
If $L^{\infty}(S,\Phi)$ is equipped with the $\si(L^{\infty}(S,\mu),L^1(S,\mu))$-topology
and $\pi_T(\cX)'$ with the weak topology, 
where $\mu=\varphi\circ\Phi$ for some normal faithful state $\varphi$ on $\cM$,
then the map $\kappa_\Phi$ 
is continuous.
\end{lemma}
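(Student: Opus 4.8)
The plan is to construct $\kappa_\Phi$ first on indicator functions by a Radon--Nikodym-type argument for CP maps, extend it to $L^{\infty}(S,\Phi)$ by linearity and norm continuity, and then read off faithfulness and the $\si(L^{\infty},L^1)$-continuity by testing against the total set of vectors $\pi_T(X)V_T\xi$ in $\cK_T$. Uniqueness is free: because $\cK_T=\overline{\mathrm{span}}(\pi_T(\cX)V_T\cH)$, any $A\in\pi_T(\cX)'$ is determined by the numbers $\bracket{\pi_T(X)V_T\xi,A\,\pi_T(Y)V_T\eta}=\bracket{\xi,V_T^*A\,\pi_T(X^*Y)V_T\eta}$, so $A\mapsto(X\mapsto V_T^*A\,\pi_T(X)V_T)$ is injective on $\pi_T(\cX)'$; this pins down each $\kappa_\Phi(f)$ and forces linearity once existence is known.

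For the construction, fix $\De\in\cF$. Finite additivity of the CP measure gives $T=\Phi(S)=\Phi(\De)+\Phi(S\setminus\De)$ with both summands in $\mathrm{CP}(\cX,\cM)$, so $0\le\Phi(\De)\le T$ in the CP order. Hence the sesquilinear form $(\pi_T(X)V_T\xi,\pi_T(Y)V_T\eta)\mapsto\bracket{\xi,\Phi(\De)(X^*Y)\eta}$ on $\mathrm{span}(\pi_T(\cX)V_T\cH)$ is well defined (the null-vector obstruction vanishes by $\Phi(\De)\le T$), positive, and dominated by the inner product of $\cK_T$; it is therefore represented by a unique $A_{\De}\in\B(\cK_T)$ with $0\le A_{\De}\le 1$, and one checks directly that $A_{\De}\in\pi_T(\cX)'$ and $V_T^*A_{\De}\,\pi_T(X)V_T=\Phi(\De)X$. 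By uniqueness $A_S=1$. If $\mu(\De)=\bracket{\vp,\Phi(\De)1}=0$ with $\vp$ faithful, then $\Phi(\De)1=0$, whence $2$-positivity of $\Phi(\De)$ gives $\Phi(\De)=0$ and $A_{\De}=0$; so $\De\mapsto A_{\De}$ factors through the measure algebra of $(S,\mu)$.

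Set $\kappa_\Phi(\sum_k c_k 1_{\De_k})=\sum_k c_k A_{\De_k}$ on simple functions; well-definedness and the identity $V_T^*\kappa_\Phi(f)\pi_T(X)V_T=\int f\,d\Phi(s)X$ for simple $f$ follow from additivity of $\Phi$ and the uniqueness above, and $\kappa_\Phi$ is unital and positive there, hence contractive, $\|\kappa_\Phi(f)\|\le\|f\|_{\infty}$. As simple functions are uniformly dense in $L^{\infty}(S,\Phi)$ and $\pi_T(\cX)'$ is norm closed, $\kappa_\Phi$ extends uniquely to a positive contractive linear map $L^{\infty}(S,\Phi)\to\pi_T(\cX)'$, and the defining identity survives the limit since $f\mapsto\kappa_\Phi(f)$ and $f\mapsto\int f\,d\Phi(s)X$ (the $\cM$-valued integral given by $\bracket{\rh,\Phi(f)X}=\int f\,d\bracket{\rh,\Phi(s)X}$) are both uniformly continuous. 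Faithfulness on the positive cone follows: if $f\ge0$ and $\kappa_\Phi(f)=0$, then $\int f\,d\mu=\bracket{\vp,V_T^*\kappa_\Phi(f)\pi_T(1)V_T}=0$ with $\mu=\vp\circ\Phi$, so $f=0$ in $L^{\infty}(S,\Phi)$.

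Finally, for the continuity, fix $X,Y\in\cX$, $\xi,\eta\in\cH$, and use $\kappa_\Phi(f)\in\pi_T(\cX)'$ to get $\bracket{\pi_T(X)V_T\xi,\kappa_\Phi(f)\pi_T(Y)V_T\eta}=\bracket{\xi,V_T^*\kappa_\Phi(f)\pi_T(X^*Y)V_T\eta}=\int f\,d\nu$, where $\nu(\De)=\bracket{\xi,\Phi(\De)(X^*Y)\eta}$ is a countably additive complex set function (apply the defining identity of a CP measure space to the vector functional $\bracket{\xi,(\cdot)\eta}\in\cM_\ast$), hence of bounded variation, and it kills $\mu$-null sets by the previous step, so it has a density in $L^1(S,\mu)$; thus $f\mapsto\bracket{\pi_T(X)V_T\xi,\kappa_\Phi(f)\pi_T(Y)V_T\eta}$ is $\si(L^{\infty}(S,\mu),L^1(S,\mu))$-continuous, and since these vectors are total in $\cK_T$ and $\kappa_\Phi$ is bounded, $\kappa_\Phi$ is continuous into $\pi_T(\cX)'$ with its weak topology. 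The only substantive point is the Radon--Nikodym step---equivalently, the positivity and boundedness of the sesquilinear form---derived from the CP inequality $0\le\Phi(\De)\le T$; the rest is measure-theoretic bookkeeping (that $\int f\,d\Phi(s)X$ is a genuine $\cM$-valued integral, and that the scalar measures $\nu$ are countably additive and absolutely continuous with respect to $\mu$, so the identities pass from simple functions to all of $L^{\infty}$).
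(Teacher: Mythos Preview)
The paper does not actually prove this lemma; it is quoted from Ref.~\onlinecite{OOS15} (their Lemma~5.3) and stated in the preliminaries without proof, so there is no in-paper argument to compare against. Your proof is correct and follows the standard route one would expect for this kind of statement: the Radon--Nikodym theorem for CP maps (Arveson) to produce $A_\De\in\pi_T(\cX)'$ from the CP inequality $0\le\Phi(\De)\le T$, extension from simple functions by density, and then reading off the remaining claims by testing against the total set $\pi_T(\cX)V_T\cH$.

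One small point worth tightening: in the continuity step you conclude weak-operator continuity from having checked the functionals $f\mapsto\langle\pi_T(X)V_T\xi,\kappa_\Phi(f)\pi_T(Y)V_T\eta\rangle$ lie in $L^1(S,\mu)$ only for vectors in the total set, invoking ``total and $\kappa_\Phi$ bounded''. Since $\si(L^\infty,L^1)$-convergent nets need not be norm-bounded, the passage to arbitrary $u,v\in\cK_T$ requires a word: for general $u,v$ the functional $f\mapsto\langle u,\kappa_\Phi(f)v\rangle$ is a norm limit in $(L^\infty)^*$ of functionals represented by $L^1$ densities (approximate $u,v$ in norm from the dense span), and $L^1(S,\mu)$ is norm-closed in $(L^\infty(S,\mu))^*$, so the limit functional is again given by an $L^1$ density. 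With that sentence added, the argument is complete.
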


\begin{theorem}[\protect{Ref.~\onlinecite[Part I, Chapter 4, Theorem 3]{Dix81};
Ref.~\onlinecite[Chapter IV, Theorem 5.5]{Tak79}}] \label{vNhom}
Let $\cM_1$ and $\cM_2$ be von Neumann algebras on Hilbert spaces $\cH_1$
and $\cH_2$, respectively. If $\pi$ is a normal *-homomorphism of $\cM_1$ onto
$\cM_2$, there exist a Hilbert space $\cK$, a projection $E$ of $\cM_1'\otimes
\B(\cK)$, and an isometry $U$ of $\cH_2$ onto $E(\cH_1\otimes\cK)$
 such that
\begin{equation}
\pi(M)=U^*j_E(M\otimes 1_\cK)U,\hspace{5mm}M\in\cM_1,
\end{equation}
where $j_E$ is a CP map of $\B(\cH_1\otimes\cK)$ onto
$E\B(\cH_1\otimes\cK)E$ defined by
$j_E(X)=EXE$, $X\in\B(\cH_1\otimes\cK)$.
\end{theorem}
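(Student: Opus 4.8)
The plan is to reprove the classical amplification--induction--spatial decomposition of a normal representation. View $\pi$ as a unital (hence nondegenerate) normal representation of $\cM_1$ on $\cH_2$. First I would split it into cyclic subrepresentations: by Zorn's lemma pick a maximal family $\{\xi_i\}_{i\in I}$ in $\cH_2$ with the cyclic subspaces $\cH_i:=\overline{\pi(\cM_1)\xi_i}$ mutually orthogonal, so that $\cH_2=\bigoplus_{i\in I}\cH_i$, each $\cH_i$ being $\pi(\cM_1)$-invariant. Each $\om_i:=\langle\xi_i,\pi(\cdot)\xi_i\rangle$ is a normal positive functional on $\cM_1$, hence a countable sum of vector functionals; equivalently $\om_i(M)=\langle\Om_i,(M\otimes 1)\Om_i\rangle$ for some $\Om_i\in\cH_1\otimes\ell^2$.

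Next I would match each cyclic piece with one living inside an amplification of the identity representation. Set $\cL_i:=\overline{(\cM_1\otimes 1_{\ell^2})\Om_i}\subseteq\cH_1\otimes\ell^2$; this is $(\cM_1\otimes 1_{\ell^2})$-invariant, and the cyclic representation it carries induces the same state $\om_i$ as $(\pi|_{\cH_i},\cH_i,\xi_i)$, so by uniqueness of the GNS construction there is a unitary $u_i:\cH_i\to\cL_i$ intertwining $\pi|_{\cH_i}$ with $(\,\cdot\otimes 1_{\ell^2})|_{\cL_i}$. Now put $\cK:=\ell^2\otimes\ell^2(I)$, identify $\cH_1\otimes\cK$ with $\bigoplus_{i\in I}(\cH_1\otimes\ell^2)$, let $E$ be the projection onto $\bigoplus_{i\in I}\cL_i$ (the $i$-th copy of $\cL_i$ sitting in the $i$-th summand), and let $U:\cH_2\to\cH_1\otimes\cK$ be, on each $\cH_i$, the isometry $u_i$ followed by the $i$-th summand embedding.

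It then remains to read off the statement. Since $\cM_1\otimes 1_\cK$ acts diagonally across the summands and each $\cL_i$ is invariant under $\cM_1\otimes 1_{\ell^2}$, the projection $E$ commutes with $\cM_1\otimes 1_\cK$, so $E\in(\cM_1\ootimes\C 1_\cK)'=\cM_1'\ootimes\B(\cK)$ by the commutation theorem for $W^*$-tensor products. The operator $U$ is an isometry with range $\bigoplus_{i\in I}\cL_i=E(\cH_1\otimes\cK)$, whence $U^*U=1_{\cH_2}$, $UU^*=E$, $EU=U$, and the intertwining relations of the $u_i$ assemble into $U\pi(M)=(M\otimes 1_\cK)U$ for all $M\in\cM_1$. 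Consequently $\pi(M)=U^*(M\otimes 1_\cK)U=U^*E(M\otimes 1_\cK)E\,U=U^*j_E(M\otimes 1_\cK)U$, which is the asserted form; surjectivity of $\pi$ is what guarantees that the right-hand side exhausts $\cM_2$.

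The substantive inputs are three classical facts, each to be cited rather than reproved: a normal positive functional on a von Neumann algebra is a countable sum of vector functionals (hence becomes a vector functional after tensoring with $\ell^2$); the GNS cyclic representation attached to a state is unique up to the unitary fixing the cyclic vector; and the commutation theorem $(\cM_1\ootimes\C 1_\cK)'=\cM_1'\ootimes\B(\cK)$, which is exactly what places $E$ in $\cM_1'\ootimes\B(\cK)$. Given these, the one genuinely delicate point is the bookkeeping: the subspaces $\cL_i$ must be spread over mutually orthogonal summands of $\cH_1\otimes\cK$ so that $\bigoplus_i\cL_i$ is simultaneously the range of a single isometry and the range of a projection commuting with $\cM_1\otimes 1_\cK$; crowding them into one copy of $\cH_1\otimes\ell^2$ would wreck both properties.
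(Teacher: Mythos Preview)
The paper does not supply its own proof of this theorem; it is quoted as a preliminary result from Dixmier and Takesaki. Your argument---cyclic decomposition, realization of each normal vector functional as a genuine vector functional after an $\ell^2$-amplification, GNS uniqueness to produce the local unitaries, and the commutation theorem to place $E$ in $\cM_1'\ootimes\B(\cK)$---is correct and is precisely the standard proof given in those references.
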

We also use the following form of Theorem \ref{vNhom}:
\begin{corollary}[] \label{vNhom2}
Let $\cH_1$ and $\cH_2$ be Hilbert spaces.
If $\pi$ is a normal representation of $\B(\cH_1)$ on
$\cH_2$, there exist a Hilbert space $\cK$
and a unitary operator $U$ of $\cH_2$ onto $\cH_1\otimes\cK$ such that
\begin{equation}
\pi(X)=U^*(X\otimes 1_\cK)U,\hspace{5mm}X\in\B(\cH_1).
\end{equation}
\end{corollary}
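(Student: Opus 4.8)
The plan is to obtain Corollary~\ref{vNhom2} as the special case of Theorem~\ref{vNhom} in which $\cM_1$ is a full operator algebra, so that its commutant degenerates to the scalars. Concretely, I would apply Theorem~\ref{vNhom} with $\cM_1=\B(\cH_1)$ and $\cM_2=\pi(\B(\cH_1))$. Since the image of a von Neumann algebra under a normal $*$-homomorphism is again a von Neumann algebra, $\cM_2$ is a von Neumann algebra on $\cH_2$ and $\pi$ is a normal $*$-homomorphism of $\cM_1$ onto $\cM_2$, so the hypotheses of Theorem~\ref{vNhom} are satisfied. (As is implicit in the statement, $\pi$ is taken to be unital; this is exactly what forces the $U$ produced below to be a genuine unitary.) The theorem then yields a Hilbert space $\cK$, a projection $E\in\cM_1'\otimes\B(\cK)$, and an isometry $U$ of $\cH_2$ onto $E(\cH_1\otimes\cK)$ with $\pi(M)=U^*j_E(M\otimes 1_\cK)U$ for all $M\in\B(\cH_1)$.

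The key observation is that $\cM_1'=\B(\cH_1)'=\C 1_{\cH_1}$, since $\B(\cH_1)$ acts irreducibly on $\cH_1$. Hence $\cM_1'\otimes\B(\cK)=\{1_{\cH_1}\otimes T\mid T\in\B(\cK)\}$, and in particular the projection $E$ has the form $E=1_{\cH_1}\otimes F$ for some projection $F\in\B(\cK)$. I would then compute
\[
j_E(M\otimes 1_\cK)=(1_{\cH_1}\otimes F)(M\otimes 1_\cK)(1_{\cH_1}\otimes F)=M\otimes F,\qquad E(\cH_1\otimes\cK)=\cH_1\otimes F\cK.
\]

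To finish, I would set $\cK_0:=F\cK$, regarded as a Hilbert space in its own right, and let $W\colon\cH_2\to\cH_1\otimes\cK_0$ be $U$ with its codomain co-restricted to its range $\cH_1\otimes\cK_0$, so that $W$ is unitary. On the invariant subspace $\cH_1\otimes\cK_0$ the operator $M\otimes F$ acts as $M\otimes 1_{\cK_0}$, whence $\pi(M)=U^*(M\otimes F)U=W^*(M\otimes 1_{\cK_0})W$ for every $M\in\B(\cH_1)$; renaming $\cK_0$ as $\cK$ and $W$ as $U$ gives the assertion. No step of this argument is a genuine obstacle: the only points requiring care are the standard fact that $\pi(\B(\cH_1))$ is ultraweakly closed (needed to apply Theorem~\ref{vNhom}) and the bookkeeping involved in replacing the isometry $U$ onto the compressed subspace by a unitary onto the full tensor product $\cH_1\otimes\cK$.
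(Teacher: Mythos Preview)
Your proof is correct and follows exactly the route the paper intends: the corollary is stated there without proof, merely as ``the following form of Theorem~\ref{vNhom},'' and you have supplied precisely the specialization $\cM_1=\B(\cH_1)$, $\cM_1'=\C 1_{\cH_1}$, hence $E=1_{\cH_1}\otimes F$, that makes this explicit. The only thing one might add is that the paper's convention for ``representation'' (a $*$-homomorphism into $\B(\cH)$) is implicitly nondegenerate here, which you correctly flag as the reason the resulting $U$ is genuinely unitary onto $\cH_1\otimes\cK_0$.
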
 

\section{Completely Positive Instruments and Quantum Measuring Processes}
\label{se:completely}

Let $\cM$ be a von Neumann algebra on a Hilbert space $\cH$
and $(S,\cF)$ a measurable space.
In the rest of this paper, we assume that von Neumann algebras are $\si$-finite.
We denote by $P(\cM_{*})$ [or, $\CP(\cM_{*})$] the set of positive [or, completely positive \cite{Tak79}] 
linear maps on $\cM$
and $P_n(\cM)$ [or, $\CP_n(\cM)$] the set of normal positive [or, completely positive, resp.] 
linear maps on $\cM$.
Note that every $\Ph\in P_n(\cM)$ has the unique predual map $\Ph_{*}\in P(\cM_*)$
such that $(\Ph_{*})^{*}=\Ph$ and the corresponding $\Ph\mapsto\Ph_{*}$ is a bijection
between $P_n(\cM)$ and $P(\cM_{*})$ and also implements a bijection between
${\rm CP}_n(\cM)$ and ${\rm CP}(\cM_{*})$.
Now we introduce the concept of instrument, which plays a central role in quantum
measurement theory.

\begin{definition}[Instruments, Davies-Lewis \protect{\cite[Section 3]{DL70}}]
An instrument $\cI$ for $(\cM,S)$ is a $P(\cM_{*})$-valued map on $\cF$
satisfying the following two conditions.

(i) $\|\cI(S)\rh\|=\|\rh\|$ for all $\rh\in\cM_{*}$.

(ii) For each countable mutually disjoint sequence $\{\De_j\}\subset\cF$,
\begin{equation}
\cI(\cup_j \De_j)\rh=\sum_j \cI(\De_j)\rh
\end{equation}
for all $\rh\in\cM_{*}$.  

An instrument $\cI$ for $(\cM,S)$ is called completely
positive (CP) if $\cI(\De)$ is a completely positive map 
on $\cM_{*}$ for every $\De\in\cF$.
We denote by $\mathrm{CPInst}(\cM,S)$ the set of CP instruments for $(\cM,S)$.
\end{definition}

Let $\cI^{*}(\De)$ be the dual map on $\cM$ of  
$\cI(\De)$ defined by $\braket{\rh,\cI^{*}(\De)M}=
\braket{\cI(\De)\rh,M)}$ for all $\rh\in\cM_{*}$ and $M\in\cM$.
In this case, $\cI^{*}$ is a $P_n(\cM)$-valued measure on $\cF$.
We also write $\cI(M,\De)=\cI^{*}(\De)M$
for all $\De\in\cF$ and $M\in\cM$.
Then, a map $\cI(\cdot,\cdot)$ of $\cM\times\cF$ into $\cM$ arises from
an instrument in this way if and only if the following three conditions hold \cite{Dav70}.

(i) $M\mapsto \cI(M,\De)$ is a normal positive linear map on $\cM$ for all $\De\in\cF$.

(ii) $\De\mapsto\bracket{\rh,\cI(M,\De)}$ is a countably additive finite signed measure
for all $\rh\in\cM_{*}$ and $M\in\cM$.

(iii) $\cI(1,S)=1$.

If $\cI$ is completely positive, $(S,\cF,\cI^{*})$ is a CP measure space.
For any normal state $\rh$ on $\cM$, denote 
by $\cI\rh$ the $\cM_{*}$-valued measure on $(S,\cF)$ defined by 
$\cI\rh(\De)=\cI(\De)\rh$, where $\De\in\cF$, and 
by $\|\cI\rh\|$ the probability measure on $(S,\cF)$ defined by 
$\|\cI\rh\|(\De)=\|\cI(\De)\rh\|$.
For any $M\in\cM$, $\bracket{\cI\rh,M}$ stands for the signed measure
such that $\bracket{\cI\rh,M}(\De)=\bracket{\rh,\cI(M,\De)}$.
We have $\|\cI\rh\|(\De)=\bracket{\rh,\cI(1,\De)}$.

To discuss the role of CP instruments in quantum measurement theory, here
we assume that the system $\bS$ of interest is described by a von Neumann algebra 
$\cM$ on a Hilbert space $\cH$; observables of $\bS$ are represented by
self-adjoint operators affiliated with $\cM$ and states of $\bS$ are described 
by normal states on $\cM$. 
Consider an apparatus $\bA(\bx)$ measuring $\bS$ having the output variable $\bx$ 
with values in a measurable space $(S,\cF)$.
In standard experimental situations, the measuring apparatus $\bA(\bx)$
is naturally assumed to have the following statistical properties: 
(i) The probability $\mathrm{Pr}\{\bx\in\De\|\rh\}$ of the outcome event $\bx\in\De $
for any input state $\rh$ of $\bS$.
(ii) The state change $\rh\mapsto\rh_{\{\bx\in\De\}}$ from any input state $\rh$
to the output state $\rh_{\{\bx\in\De\}}$ given the outcome event $\bx\in\De$
provided that $\mathrm{Pr}\{\bx\in\De\|\rh\}>0$, otherwise 
$\rh_{\{\bx\in\De\}}$ is indefinite.
Consider the successive measurements carried out by two apparatuses $\bA(\bx)$ with the output 
variable $\bx$ and $\bA({\bf y})$ with the output variable $\by$ in this order,
where ${\bf y}$ values in a measurable space $(S',\cF')$.
Then, the joint probability distribution $\mathrm{Pr}\{({\bf y},\bx)\in\De\|\rh\}$
of $\bx$ and $\by$ on $(S'\times S,\cF'\times \cF)$ is uniquely 
determined by the formula
\begin{equation}\label{JPD}
\Pr\{\by\in\De_2,\bx\in\De_1\|\rh\}
=\Pr\{\by\in\De_2\|\rh_{\{\bx\in\De_1\}}\}
\Pr\{\bx\in\De_1\|\rh\}
\end{equation}
for all $\De_1\in\cF$ and $\De_2\in\cF'$,  where 
$\Pr\{\by\in\De_2,\bx\in\De_1\|\rh\}=
  \Pr\{(\by,\bx)\in(\De_2,\De_1)\|\rh\}$.
We naturally assume that
the joint probability distribution $\mathrm{Pr}\{{\bf y}\in\De_2,\bx\in\De_1\|\rh\}$ 
is an affine function of $\rh\in\SM$.
For every $\De\in\cF$,
we then define a map $\cI(\De):\SM\to\cM_{\ast,+}$ by
\beq
\cI(\De)\rh=\mathrm{Pr}\{\bx\in\De\|\rh\}\rh_{\{\bx\in\De\}}
\eeq
for all $\rh\in\SM$.
Under the above assumptions, it is shown in 
\Cites{97OQ,04URN} that $\rh\mapsto\cI(\De)\rh$
is an affine map of $\SM$
for all $\De\in\cF$, so that it uniquely extends to a positive
linear map on $\cM_\ast$ satisfying $\bracket{\cI(S)\rh,1}=\bracket{\rh,1}$
for all $\rh\in\cM_\ast$, and
$ 
\bracket{\cI(\cup_j\De_j)\rh,M}=\sum_j \bracket{\cI(\De_j)\rh,M}
$ 
for all $M\in\cM$, $\rh\in\cM_{\ast}$, and mutually disjoint sequence $\{\De_j\}\subset\cF$.
Then, the map $\De\to\cI(\De)$ is an instrument for $(\cM,S)$.
Thus, every measuring apparatus $\bA(\bx)$ defines an instrument $\cI$
satisfying the following characteristic conditions proposed by Davies and Lewis
\cite{DL70} (see \Cite{04URN}, Sections 2.2--2.6, for more detailed discussions).

(i) $\Pr\{\bx\in\De\|\rh\}=\|\cI(\De)\rh\|$.

(ii) $\rh_{\{\bx\in\De\}}=\dfrac{\cI(\De)\rh}{\|\cI(\De)\rh\|}$.

For quantum systems with finite degrees of freedom,  
we can further advance our analysis of statistical properties of measuring
apparatuses.
In this case, every observable $A$ of $\bS$ can be identified with the observable 
$A\otimes I$ of the extended system $\bS+\bS'$ with any external system $\bS'$.
By the same token, it is natural to require the trivial extendability condition stating
that every instrument $\cI$ for $\bS$ can be extended to an instrument $\cI'$ for 
$\bS+\bS'$ such that $\cI'(\De)=\cI(\De)\otimes \id$ for all $\De\in\cF$.
Then, it is concluded that the instrument $\cI$ should be completely positive, if it
describes a physically realizable measurement at all.
See \Cite{04URN}, Section 2.9 for more detailed discussion.
We shall reconstruct the above argument for algebraic quantum field theory
in the last section.

The next problem is to determine which CP instrument arises from a measuring 
apparatus. To discuss this problem, we introduce the concept of measuring process 
as a general class of models of quantum measurement for the system described 
by a von Neumann algebra.

Let $\cM$ and $\cN$ be von Neumann algebras.
For every $\si\in \cN_\ast$, the normal unital CP map
$\id\otimes\si:\cM \ootimes  \cN
\to\cM$ is defined by
$\braket{\rh,(\id\otimes\si) X}=\braket{\rh\otimes\si,X}$
for all $X\in\cM \ootimes  \cN$
and $\rh\in\cM_\ast$.

\begin{definition}[Measuring processes,  \Cite{84QC}, Definition 3.1]
A measuring process $\M$ for $(\cM,S)$
is a 4-tuple $\M=(\cK,\si,U,E)$
consisting of a Hilbert space $\cK$, a normal
state $\si$ on $\B(\cK)$,
a unitary operator $U$ on $\cH\otimes\cK$,
and a spectral measure $E:\cF\to \B(\cK)$
satisfying 
\beql{def-instrument}
(\id\otimes\si)[U^*(M\otimes E(\De))U]\in\cM
\eeq
for every $M\in\cM$ and $\De\in\cF$.
\end{definition}

As shown in Section 5 in Ref.~\onlinecite{84QC} for any measuring 
process $\M=(\cK,\si,U,E)$ the relation 
\beql{realization}
\cI_{\bbM}(X,\De)=(\id\otimes\si)[U^*(X\otimes E(\De))U],
\eeq
where $X\in \B(\cH)$ and $\De\in\cF$, defines a CP instrument $\cI_{\bbM}$ for
$(\B(\cH),S)$, which describes the statistical properties of the measuring process
$\M$. Condition \eq{def-instrument} ensures that the restriction $\cI_{\bbM}|_{\cM}$
of $\cI_{\bbM}$ to $\cM$ defined by $\cI_{\bbM}|_{\cM}(M,\De)=\cI_{\bbM}(M,\De)$ for all 
$\De\in\cF$ and $M\in\cM$ is a CP instrument for $(\cM,S)$.  
We say that a measuring process  $\M=(\cK,\si,U,E)$ for $(\cM,S)$ realizes 
an instrument $\cI$ for $(\cM,S)$ if $\cI=\cI_{\bbM}|_{\cM}$.

Now, the converse problem is posed naturally: Does every CP instrument on $\B(\cH)$
arise from a measuring apparatus $\bA(\bx)$ for $\bS$?
In the previous investigation \cite{84QC} this problem was solved affirmatively
as follows.

\begin{theorem}[\protect{\Cite{84QC}, Theorem 5.1}]\label{th:realization}
Let $\cH$ be a  Hilbert space and $(S,\cF)$ be a measurable space.
For every CP instrument $\cI$ for $(\cB(\cH),S)$ there exists a measuring process 
$\M=(\cK,\si,U,E)$ for $(\cB(\cH),S)$ that realizes $\cI$.
\end{theorem}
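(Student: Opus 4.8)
The plan is to realize $\cI$ in two stages. First I would produce an \emph{isometric} realization $\cI(M,\De)=V^{*}(M\otimes F(\De))V$, in which $V$ is an isometry of $\cH$ into $\cH\otimes\cK_{0}$ for some ancilla Hilbert space $\cK_{0}$ and $F\colon\cF\to\B(\cK_{0})$ is a normalized positive operator-valued measure; then I would dilate $F$ to a spectral measure and $V$ to a unitary, so as to bring the expression into the form \eq{realization}.

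For the first stage, put $T:=\cI(\cdot,S)$, a unital normal CP map on $\B(\cH)$ (unitality being the condition $\cI(1,S)=1$), and let $(\pi_{T},\cK_{T},V_{T})$ be its minimal Stinespring representation; unitality makes $V_{T}$ an isometry, and since $T$ is normal, $\pi_{T}$ is a normal representation of $\B(\cH)$. By Corollary~\ref{vNhom2} there are then a Hilbert space $\cK_{0}$ and a unitary $W\colon\cK_{T}\to\cH\otimes\cK_{0}$ with $\pi_{T}(M)=W^{*}(M\otimes 1)W$, so that $\pi_{T}(\B(\cH))'=W^{*}(1_{\cH}\otimes\B(\cK_{0}))W$. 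Since $(S,\cF,\cI^{*})$ is a CP measure space with barycenter $T$, Lemma~\ref{tomita5} yields a positive contractive map $\kappa\colon L^{\infty}(S,\cI^{*})\to\pi_{T}(\B(\cH))'$ with $V_{T}^{*}\kappa(f)\pi_{T}(M)V_{T}=\int f(s)\,d\cI^{*}(s)M$. Transporting $\kappa$ through $W$ and evaluating on indicator functions defines $F(\De)\in\B(\cK_{0})$ by $W^{*}(1_{\cH}\otimes F(\De))W=\kappa(\chi_{\De})$, and with $V:=WV_{T}$ a short computation gives $\cI(M,\De)=V^{*}(M\otimes F(\De))V$. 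Here two facts need checking: that $F(S)=1_{\cK_{0}}$, which follows from $\kappa(1)\in\pi_{T}(\B(\cH))'$ together with the minimality of the Stinespring dilation of $T$; and that $F$ is countably additive in the weak operator topology, which follows from the continuity of $\kappa$ from the $\si(L^{\infty},L^{1})$-topology to the weak operator topology stated in Lemma~\ref{tomita5} (apply dominated convergence to the indicators).

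For the second stage I would apply the Naimark dilation theorem to $F$, obtaining a Hilbert space $\cK_{1}$, an isometry $J\colon\cK_{0}\to\cK_{1}$, and a spectral measure $E_{1}\colon\cF\to\B(\cK_{1})$ with $F(\De)=J^{*}E_{1}(\De)J$; then $V_{1}:=(1\otimes J)V$ is an isometry of $\cH$ into $\cH\otimes\cK_{1}$ and $\cI(M,\De)=V_{1}^{*}(M\otimes E_{1}(\De))V_{1}$. Next, enlarge the ancilla to $\cK:=\cK_{1}\otimes\ell^{2}(I)$ with $I$ infinite of cardinality at least that of an orthonormal basis of $\cH$, set $E(\De):=E_{1}(\De)\otimes 1$, fix unit vectors $\xi_{1}\in\cK_{1}$ and $e_{0}\in\ell^{2}(I)$, put $\xi:=\xi_{1}\otimes e_{0}$ and $\widehat{V}_{1}\eta:=V_{1}\eta\otimes e_{0}$, and extend the isometry $\eta\otimes\xi\mapsto\widehat{V}_{1}\eta$ of $\cH\otimes\C\xi$ onto $\widehat{V}_{1}\cH$ to a unitary $U$ on $\cH\otimes\cK$; the extension exists because, after the enlargement, the orthogonal complements of $\cH\otimes\C\xi$ and of $\widehat{V}_{1}\cH$ in $\cH\otimes\cK$ have equal Hilbert-space dimension. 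With $\si$ the normal vector state on $\B(\cK)$ determined by $\xi$, the identity $U(\eta\otimes\xi)=\widehat{V}_{1}\eta$ gives $(\id\otimes\si)[U^{*}(M\otimes E(\De))U]=V_{1}^{*}(M\otimes E_{1}(\De))V_{1}=\cI(M,\De)$ for all $M\in\B(\cH)$ and $\De\in\cF$; condition~\eq{def-instrument} is vacuous here because the target algebra is $\cM=\B(\cH)$. Thus $\M=(\cK,\si,U,E)$ realizes $\cI$.

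I expect the first stage to be the crux: the whole construction hinges on producing the commuting operator family $F$ inside $\pi_{T}(\B(\cH))'$ compatibly with the Stinespring dilation of $T=\cI(\cdot,S)$, which is precisely what Lemma~\ref{tomita5} packages, together with the identification $\pi_{T}(\B(\cH))'\cong 1_{\cH}\otimes\B(\cK_{0})$ coming from the structure theory of normal representations of $\B(\cH)$ (Corollary~\ref{vNhom2}). Were Lemma~\ref{tomita5} not at hand, reproving it --- essentially a vector-valued Radon--Nikodym type statement for CP-measure-valued maps --- would be the real work. The second stage is by comparison routine, the only delicate point being the cardinality bookkeeping that allows the partial isometry to extend to a genuine unitary. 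Finally, I would verify at the outset that $T=\cI(\cdot,S)$ is normal, unital and completely positive so that the dilation theorems apply, and I would note that the last computation in fact gives $\cI=\cI_{\M}$ outright, not merely $\cI=\cI_{\M}|_{\B(\cH)}$, since the target algebra is already $\B(\cH)$.
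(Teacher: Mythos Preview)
Your proposal is correct and follows essentially the same route as the paper. The paper does not give a separate proof of Theorem~\ref{th:realization} (it is cited from \Cite{84QC}), but the argument is reproduced inside the proof of Theorem~\ref{NEP}, implication (iii)$\Rightarrow$(iv), and your two stages match that argument almost step for step: minimal Stinespring dilation of $T=\cI(\cdot,S)$, identification of $\pi_T(\B(\cH))'$ with $1\otimes\B(\cK_0)$ via Corollary~\ref{vNhom2}, Lemma~\ref{tomita5} to produce the commuting family, dilation to a spectral measure, and finally the dimension-counting unitary extension.

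The only cosmetic difference is that the paper combines your ``produce the POVM $F$'' and ``Naimark-dilate $F$'' into a single step by taking the minimal Stinespring representation $(E_0,\cL_2,W_2)$ of $\kappa$ itself; since the domain $L^{\infty}(S,\cI)$ is abelian, $E_0$ is automatically a $*$-homomorphism and hence already yields a spectral measure, so no separate Naimark step is needed. Your version makes the POVM visible as an intermediate object, which is perhaps pedagogically clearer, but mathematically the two are the same construction.
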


Thus, the measurement described by any CP instrument $\cI$ is 
realized by an interaction described by a unitary operator $U$  with the probe 
prepared in a state $\si$  and the subsequent measurement 
of the meter observable described by the spectral measure $E$  in the probe,
and we can conclude that the description of measurement by every CP instrument is 
consistent with the description of measurement by the unitary evolution of the system 
plus the probe based on von Neumann's postulates for quantum mechanics.
We refer the reader to \Cites{04URN,14MFQ} for detailed expositions on quantum 
measurement theory for systems with finite degrees of freedom.
We now try to generalize the above correspondence between CP instruments and
measuring processes to quantum systems of infinite degrees of freedom as follows.

We first observe that every CP instrument admits the following representation.

\begin{proposition}[\protect{\Cite{84QC}, Proposition 4.2}]\label{Naimark-Ozawa}
For any CP instrument $\cI$ for $(\cM,S)$, 
there are a Hilbert space $\cK$,
a nondegenerate normal faithful representation
$E:L^{\infty}(S,\cI)\to \B(\cK)$,
a nondegenerate normal representation
$\pi:\cM\to\B(\cK)$
and an isometry $V\in\B(\cH,\cK)$ satisfying
\begin{align}
\cI(M,\De) &=V^* E([\chi_{\De}])\pi(M)V, \label{CPrep}\\
E([\chi_{\De}])\pi(M) &=\pi(M)E([\chi_{\De}])
\label{CPrep2}
\end{align}
for any $\De\in\cF$ and $M\in\cM$.
\end{proposition}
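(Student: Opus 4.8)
The plan is to reduce the statement to a Stinespring dilation of the barycenter of the CP measure space $(S,\cF,\cI^{*})$, to transport the measure $\cI^{*}$ into the commutant of that dilation by Lemma \ref{tomita5}, and then to relift everything to a single Hilbert space on which the two commuting representations $\pi$ and $E$ act, using Arveson's commutant lifting theorem (Theorem \ref{CommLift}). Since $\cI$ is a CP instrument, $\cI^{*}=\cI(\cdot,\cdot)$ is a $\CP_n(\cM)$-valued measure and $(S,\cF,\cI^{*})$ is a CP measure space whose barycenter is the normal completely positive map $T:=\cI(\cdot,S)\in\CP_n(\cM)$; condition (iii) for instruments gives $T(1)=\cI(1,S)=1$, so $T$ is unital.

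First I would take the minimal Stinespring representation $(\pi_T,\cK_T,V_T)$ of $T$ regarded as a normal unital CP map of $\cM$ into $\B(\cH)$. Unitality forces $V_T^{*}V_T=T(1)=1$, so $V_T$ is an isometry, and minimality gives $\cK_T=\overline{\mathrm{span}}(\pi_T(\cM)V_T\cH)$ and $T(M)=V_T^{*}\pi_T(M)V_T$; moreover $\pi_T$ is normal, since the minimal Stinespring dilation of a normal CP map is normal (the usual monotone-net argument on the matrix elements of $\pi_T$). Applying Lemma \ref{tomita5} to the CP measure space $(S,\cF,\cI^{*})$ of $T$ then yields a positive contractive, $\sigma$-weakly continuous map $\kappa:=\kappa_{\cI^{*}}\colon L^{\infty}(S,\cI)\to\pi_T(\cM)'$ with $V_T^{*}\kappa(f)\pi_T(M)V_T=\int f(s)\,d\cI^{*}(s)\,M$ for all $f\in L^{\infty}(S,\cI)$ and $M\in\cM$. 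I would check that $\kappa$ is unital: with $f=1$ and $M=1$ one gets $V_T^{*}\kappa(1)V_T=1$, and since $0\le\kappa(1)\le 1$ this forces $(1-\kappa(1))^{1/2}V_T=0$, hence $\kappa(1)V_T=V_T$; then $\kappa(1)\pi_T(M)V_T=\pi_T(M)\kappa(1)V_T=\pi_T(M)V_T$ because $\kappa(1)\in\pi_T(\cM)'$, so $\kappa(1)=1$ by minimality.

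Because $L^{\infty}(S,\cI)$ is abelian, $\kappa$ is completely positive; being unital and normal it admits a minimal Stinespring representation $(E,\cK,W)$ with $E\colon L^{\infty}(S,\cI)\to\B(\cK)$ a normal, unital (hence nondegenerate) representation, $W\colon\cK_T\to\cK$ an isometry, $\kappa(f)=W^{*}E(f)W$, and $\cK=\overline{\mathrm{span}}(E(L^{\infty}(S,\cI))W\cK_T)$. This $E$ is the required representation of $L^{\infty}(S,\cI)$; it is faithful because $E(f)=0$ implies $\kappa(|f|^{2})=W^{*}E(|f|^{2})W=0$, hence $|f|^{2}=0$ by the last clause of Lemma \ref{tomita5}. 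To build $\pi$, note that $\kappa(L^{\infty}(S,\cI))\subset\pi_T(\cM)'$ gives $\pi_T(M)\in(\kappa(L^{\infty}(S,\cI)))'=(W^{*}E(L^{\infty}(S,\cI))W)'$ for every $M\in\cM$. Applying Theorem \ref{CommLift} with $\cK_T$, $\cK$, $E(L^{\infty}(S,\cI))$, and $W$ in the roles of $\cH$, $\cK$, $\mathcal{B}$, and $V$ there, we obtain for each $M$ a unique $\pi(M)\in E(L^{\infty}(S,\cI))'$ with $W\pi_T(M)=\pi(M)W$, and $M\mapsto\pi(M)$ is a normal representation of $\cM$, being the composition of $\pi_T$ with the normal $*$-homomorphism supplied by Theorem \ref{CommLift}; it is unital, hence nondegenerate. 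Since $\pi(M)\in E(L^{\infty}(S,\cI))'$, in particular $E([\chi_{\De}])\pi(M)=\pi(M)E([\chi_{\De}])$, which is \eqref{CPrep2}. Finally, setting $V:=WV_T\in\B(\cH,\cK)$ (an isometry, as a product of isometries), one computes $V^{*}E([\chi_{\De}])\pi(M)V=V_T^{*}W^{*}E([\chi_{\De}])W\pi_T(M)V_T=V_T^{*}\kappa([\chi_{\De}])\pi_T(M)V_T=\cI^{*}(\De)M=\cI(M,\De)$ by Lemma \ref{tomita5}, which is \eqref{CPrep}.

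I expect the delicate step to be the commutant-lifting part: one must verify carefully that the normal surjective $*$-homomorphism produced by Theorem \ref{CommLift} restricts to $\pi_T(\cM)$ in such a way that the resulting $\pi$ simultaneously extends $\pi_T$ (via $\pi(M)W=W\pi_T(M)$), takes values in $E(L^{\infty}(S,\cI))'$, and recombines with $E$ through Lemma \ref{tomita5} to reproduce $\cI$. A secondary point to spell out is the normality of the two Stinespring dilations $\pi_T$ and $E$, which is exactly where the normality of $\cI$ (equivalently of $T$ and of $\kappa$) is used; the remaining verifications --- unitality of $\kappa$, that $V$ is an isometry, faithfulness of $E$ --- are routine.
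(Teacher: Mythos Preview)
Your proof is correct and follows essentially the same route as the paper's: Stinespring dilation of the barycenter $T=\cI(\cdot,S)$, Lemma \ref{tomita5} to produce $\kappa$ into $\pi_T(\cM)'$, a second minimal Stinespring dilation $(E,\cK,W)$ of $\kappa$, and Arveson's commutant lifting (Theorem \ref{CommLift}) to obtain $\pi$ with $\pi(M)W=W\pi_T(M)$, then $V=WV_T$. You supply more detail than the paper does --- the unitality of $\kappa$, the faithfulness of $E$ via the last clause of Lemma \ref{tomita5}, and the normality of the two Stinespring dilations --- but the architecture is identical.
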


\begin{proof}
By Lemma \ref{tomita5}, there exists a
positive contractive linear map $\ka: L^{\infty}(S,\cI)\to \pi_T(\cX)'$ such that
\begin{equation}
\cI(M,\De)=\cI^*(\De)M=V_T^* \ka([\ch_{\De}])\pi_T(M)V_T
\end{equation}
for all $M\in\cM$ and $\De\in\cF$.
Let $(E,\cK,W)$ be the minimal Stinespring representation of $\ka$.
By Theorem \ref{CommLift}, there exists a nondegenerate normal
representation $\pi$ of $\cM$ on $\cK$ such that
\begin{align}
 \pi(M)W&=W\pi_T(M), \\
E([\ch_{\De}])\pi(M) &=\pi(M)E([\ch_{\De}]) 
\end{align}
for all $M\in\cM$ and $\De\in\cF$. 
We denote $WV_T$ by $V$,
which is seen to be an isometry.  Then, we have
\begin{align}
\cI(M,\De)&=V_T^* \ka([\ch_{\De}])\pi_T(M)V_T
=V_T^* W^* E([\ch_{\De}])W\pi_T(M)V_T\nonumber\\
 &= (WV_T)^* E([\ch_{\De}])\pi(M)WV_T=V^* E([\ch_{\De}])\pi(M)V
\end{align}
for all $M\in\cM$ and $\De\in\cF$,
which concludes the proof.
\end{proof}

\begin{remark}
An alternative proof using Proposition 4.2 of \Cite{84QC}
instead of Lemma \ref{tomita5} runs as follows.
From Proposition 4.2 of \Cite{84QC} we can construct a Hilbert space $\cK$,
a spectral measure $E_0: \cF \to \B(\cK)$, 
a nondegenerate normal representation 
$\pi:\cM\to\B(\cK)$,
and an isometry $V\in\B(\cH,\cK)$ 
satisfying relations analogous to (\ref{CPrep}) and (\ref{CPrep2}).  
By modifying the construction it is easy to see that we can assume 
that $\cK$ is spanned by 
$\{E_0(\De)\pi(M)V\xi\mid \De\in\cF, M\in\cM,\xi\in\cH\}$.
Then, $E_0(\De)=0$ if and only if $\cI(\De)=0$ for any $\De\in\cF$,
and the relation $E([f])=\int_{S} f(s) dE_0(s)$ defines a nondegenerate normal faithful representation $E:L^{\infty}(S,\cI)\to \B(\cK)$,
which satisfies all the assertions of Proposition  \ref{Naimark-Ozawa}.
\end{remark}

Let $\cM$ be a von Neumann algebra and $(S,\cF)$ a measurable space.
Let $\cM\aotimes L^{\infty}(S,\cI)$ be the algebraic tensor product of $\cM$
and $L^{\infty}(S,\cI)$.   Any $\rh\in\cM_{*}$ and $M\in\cM$
defines a finite signed measure $\De\mapsto\bracket{\rh,\cI(M,\De)}$ on $(S,\cF)$
absolutely continuous with respect to $\cI$.
Thus, the relation
\beq
\Phi_\cI(M\otimes [f])=\int_{S}f(s)\, d\bracket{\rh,\cI(M,s)},
\eeq
where $M\in\cM$ and $f\in M^{\infty}(S,\cI)$,
defines a unique positive linear map $\Phi_{\cI}$ of 
$\cM\aotimes L^{\infty}(S,\cI)$
into $\cM$.  The positive map $\Phi_{\cI}$ determined above is called the linear extension 
of $\cI$.  The following proposition shows that if $\cI$ is a CP instrument, $\Phi_{\cI}$ can 
be further extended to the unique C*-norm closure of 
$\cM\aotimes L^{\infty}(S,\cI)$.

\begin{proposition}\label{th:linear}
Let $\cM$ be a von Neumann algebra and $(S,\cF)$ a measurable space.
Every CP instrument $\cI$ for $(\cM,S)$ can be uniquely extended to a completely 
positive map $\Psi_{\cI}$ of $\cM\motimes  L^{\infty}(S,\cI)$ into $\cM$ such that 
$\Psi_{\cI}(M \otimes [\ch_{\De}])=\cI(M,\De)$ for all $\De\in\cF$ and $M\in\cM$.
In this case, $\Psi_{\cI}$ is binormal,
i.e., normal on each tensor factor, and extends $\Phi_{\cI}$.
\end{proposition}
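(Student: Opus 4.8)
The plan is to exhibit the extension $\Psi_\cI$ directly from the Stinespring-type data furnished by Proposition \ref{Naimark-Ozawa}, and then establish uniqueness separately. First I would invoke Proposition \ref{Naimark-Ozawa} to obtain the Hilbert space $\cK$, the commuting pair consisting of the normal faithful representation $E:L^\infty(S,\cI)\to\B(\cK)$ and the normal representation $\pi:\cM\to\B(\cK)$, and the isometry $V\in\B(\cH,\cK)$ with $\cI(M,\De)=V^*E([\ch_\De])\pi(M)V$ and $[E([\ch_\De]),\pi(M)]=0$. Since $E$ is normal and $\pi$ is normal with commuting ranges, the pair $(\pi,E)$ lies in $\II_\nor(\cM,L^\infty(S,\cI);\cK)$ — indeed in $\II_\bin$ — so by Proposition \ref{Max} there is a representation $\widehat\pi$ of $\cM\motimes L^\infty(S,\cI)$ (note $L^\infty$ is nuclear, so the min tensor product coincides with $\otimes_\nor$ and $\otimes_\bin$ here) on $\cK$ with $\widehat\pi(M\otimes[f])=\pi(M)E([f])$. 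Then I define
\begin{equation}
\Psi_\cI(Z)=V^*\widehat\pi(Z)V,\qquad Z\in\cM\motimes L^\infty(S,\cI).
\end{equation}
This is manifestly completely positive as a compression of a $*$-representation, it is unital because $\widehat\pi$ is (or at least $V^*\widehat\pi(1)V=V^*V=1$ since $V$ is an isometry), and on simple tensors $M\otimes[\ch_\De]$ it reproduces $V^*\pi(M)E([\ch_\De])V=\cI(M,\De)$ as required. It maps into $\cM$ because $\Psi_\cI$ agrees on the dense $*$-subalgebra $\cM\aotimes L^\infty(S,\cI)$ with the positive linear map $\Phi_\cI$, whose range lies in $\cM$ by the definition of an instrument (condition (i) in the characterization of $\cI(\cdot,\cdot)$: $M\mapsto\cI(M,\De)$ is normal on $\cM$), and $\cM$ is norm-closed in $\B(\cH)$; hence $\Psi_\cI$ extends $\Phi_\cI$.

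For uniqueness, any completely positive (hence norm-bounded) extension of $\cI$ to $\cM\motimes L^\infty(S,\cI)$ must agree with $\Phi_\cI$ on the algebraic tensor product $\cM\aotimes L^\infty(S,\cI)$ by linearity (since $[\ch_\De]$ for $\De\in\cF$ span a norm-dense, indeed ultraweakly dense, subspace of $L^\infty(S,\cI)$, one first checks agreement on $M\otimes[f]$ for bounded measurable $f$ via simple-function approximation — here boundedness of $\Psi_\cI$ is used together with the fact that $\|\ \cdot\ \|_{\min}$-norm bounds dominate these approximations), and $\cM\aotimes L^\infty(S,\cI)$ is norm-dense in $\cM\motimes L^\infty(S,\cI)$ by definition of the minimal tensor product. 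Continuity then forces the two extensions to coincide.

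Finally, binormality: I need $\Psi_\cI$ to be normal in the $\cM$-variable and in the $L^\infty$-variable separately. In the $\cM$-variable, for fixed $[f]\in L^\infty(S,\cI)$ the map $M\mapsto\Psi_\cI(M\otimes[f])=V^*\pi(M)E([f])V$ is normal because $\pi$ is a normal representation and multiplication by the fixed bounded operator $E([f])$ together with compression by $V$ preserves ultraweak continuity. In the $L^\infty$-variable, for fixed $M$ the map $[f]\mapsto V^*\pi(M)E([f])V$ is normal because $E$ is a normal representation of $L^\infty(S,\cI)$; alternatively one reads this off the measure property of $\De\mapsto\cI(M,\De)$ and the monotone convergence theorem applied to $\int f\,d\langle\rh,\cI(M,\cdot)\rangle$. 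I expect the genuinely delicate point to be the passage from the algebraic tensor product to the minimal C*-tensor product in the uniqueness argument — specifically, verifying that the positive map $\Phi_\cI$ is bounded for the minimal C*-norm so that it extends continuously at all; this is really the content of the complete positivity of $\cI$ feeding into the existence half via Proposition \ref{Naimark-Ozawa}, after which boundedness of $\Psi_\cI$ is automatic and the extension is forced to be unique. The commutation relation \eqref{CPrep2} is what makes $(\pi,E)$ a legitimate element of $\II_\nor$ (equivalently $\II_\bin$), and without it no single representation $\widehat\pi$ of the tensor product would exist; so that relation, already secured in Proposition \ref{Naimark-Ozawa}, is the linchpin of the whole argument.
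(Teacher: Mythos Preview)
Your proposal is correct and follows essentially the same approach as the paper: invoke Proposition \ref{Naimark-Ozawa} to obtain the commuting normal representations $\pi$ and $E$ together with the isometry $V$, use Proposition \ref{Max} (together with nuclearity of $L^\infty(S,\cI)$, which collapses $\otimes_{\bin}$, $\otimes_{\nor}$, and $\otimes_{\min}$) to produce a single representation $\widehat\pi$ of $\cM\motimes L^\infty(S,\cI)$, and set $\Psi_\cI=V^*\widehat\pi(\,\cdot\,)V$. Your additional discussion of uniqueness, the range lying in $\cM$, and binormality is more explicit than the paper's proof, which leaves these as one-line remarks, but the argument is the same.
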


\begin{proof}
Let $\cI$ be a CP instrument for $(\cM,S)$.
By Proposition \ref{Naimark-Ozawa} and 
by a universal property of binormal tensor product (Proposition \ref{Max}),
there exists a representation $\widetilde{\pi}:\cM\botimes L^{\infty}(S,\cI)
\to\B(\cK)$
such that $\widetilde{\pi}(M\otimes f)=\pi(M)E(f)$ for every $M\in\cM$ and $f\in L^{\infty}(S,\cI)$.
We can define a CP map $\Psi_\cI:
\cM\botimes  L^{\infty}(S,\cI)\to\B(\cH)$ by
\begin{equation}
\Psi_\cI(X)=V^* \widetilde{\pi}(X)V,\hspace{5mm}
X\in \cM\botimes  L^{\infty}(S,\cI),
\end{equation}
which is binormal. Since every commutative C*-algebra is nuclear, it holds that
\begin{equation}
\cM\botimes L^{\infty}(S,\cI)
=\cM\motimes L^{\infty}(S,\cI)
 (=\cM\otimes_{\max}L^{\infty}(S,\cI)),
\end{equation}
and the assertion follows.
\end{proof}

\sloppy
Note that $\cM\motimes L^{\infty}(S,\cI)$ is a weakly 
dense C*-subalgebra of the von Neumann algebra $\cM \ootimes L^{\infty}(S,\cI)$.
By the Arveson extension theorem (Theorem \ref{ArExTh}), 
$\Psi_\cI$ can be extended to a (not necessarily normal) CP map $\widetilde{\Psi_\cI}:
\cM \ootimes   L^{\infty}(S,\cI)\to\B(\cH)$, which satisfies
$\widetilde{\Psi_\cI}|_{\cM\motimes L^{\infty}(S,\cI)}=\Psi_\cI$.
In the following, we shall show that a CP instrument $\cI$ has a measuring process
$\bbM$  if and only if the binormal CP map ${\Psi_\cI}$ on $\cM\motimes L^{\infty}(S,\cI)$
has a normal extension to $\cM \ootimes   L^{\infty}(S,\cI)$.
Motivated as above we introduce the following definitions, where we call the above property
the normal extension property.

\begin{definition}[Normal extension property]
Let $\cI$ be a CP instrument for $(\cM,S)$.

(i) $\cI$ has the normal extension property (NEP) if there exists a unital normal CP map
$\widetilde{\Psi_\cI}:\cM \ootimes   L^{\infty}(S,\cI)\to
\B(\cH)$ such that
$\widetilde{\Psi_\cI}|_{\cM\motimes  L^{\infty}(S,\cI)}=\Psi_\cI$.

(ii) $\cI$ has the unique normal extension property (UNEP) if
there exists a unique unital normal CP map
$\widetilde{\Psi_\cI}:\cM \ootimes   L^{\infty}(S,\cI)\to
\B(\cH)$ such that
$\widetilde{\Psi_\cI}|_{\cM\motimes  L^{\infty}(S,\cI)}=\Psi_\cI$.
\end{definition}

We denote by
$\mathrm{CPInst}_{\mathrm{NE}}(\cM,S)$ the set of CP instruments for $(\cM,S)$ with  the NEP.
We gave the name  ``normal extension property" in the light of the
unique extension property \cite{Arv08} used in operator system theory.
Let $\cI$ be a CP instrument for $(\cM,S)$ with  the NEP and 
$\widetilde{\Psi_\cI}:\cM \ootimes   L^{\infty}(S,\cI)\to
\B(\cH)$
a unital normal CP map such that
$\widetilde{\Psi_\cI}|_{\cM\motimes L^{\infty}(S,\cI)}=\Psi_\cI$.
Since $\cM \ootimes   L^{\infty}(S,\cI)=
\overline{\cM\otimes_{\alg} L^{\infty}(S,\cI)}^{uw}$ and
$\widetilde{\Psi_\cI}$
is ultraweakly continuous,
it follows that
$\widetilde{\Psi_\cI}(\cM \ootimes   L^{\infty}(S,\cI))\subset\cM$.
Thus, any normal extension $\widetilde{\Psi_\cI}$ of $\Psi_\cI$ ranges in $\cM$.

To show that every CP instrument $\cI$ having a measuring process $\bbM$
has the NEP, we begin with examining faithful measuring processes defined below.

\begin{definition}[Faithfulness of measuring process]
A measuring process $\M=(\cK,\si,U,E)$ for $(\cM,S)$
is said to be faithful if there exists a normal faithful representation
$\widetilde{E}:L^{\infty}(S,\cI_\M)\to \B(\cK)$
such that $\widetilde{E}([\chi_\De])=E(\De)$ for all $\De\in\cF$.
\end{definition}

Let  $\M=(\cK,\si,U,E)$ be a faithful measuring process for 
$(\cM,S)$ and $\cI_{\bbM}$ the CP instrument of $\bbM$.
Then, a unital normal CP map
$\Psi_\M:\cM \ootimes  
L^{\infty}(S,\cI_\M)\to \cM$ is defined by
\begin{equation}
\Psi_\M(X)=(\id\otimes\si)[U^*((\id_{\cM}\otimes \widetilde{E})(X))U]
\end{equation}
for all $X\in\cM \ootimes L^{\infty}(S,\cI_\M)$,
where $\widetilde{E}$ is a normal faithful representation of
$L^{\infty}(S,\cI_\M)$ on $\cK$
such that $\widetilde{E}([\chi_\De])=E(\De)$ for all $\De\in\cF$.
In this case, we have 
\beq
\Psi_\M(M\otimes[\chi_{\De}])=
\cI_{\bbM}(M,\De)=\Psi_{\cI_\M}(M\otimes[\chi_{\De}])
\eeq
for all $M\in\cM$ and $\De\in\cF$, and hence 
$\Psi_\M|_{\cM\motimes L^{\infty}(S,\cI)}=\Psi_\mathcal{I_{\bbM}}$.
Thus, the CP instrument $\cI_{\bbM}$ has the NEP.

Now we shall show that an instrument has a measuring process if and only if
it is completely positive and has the NEP.

\begin{theorem}\label{NEP}
For a CP instrument $\cI$ for $(\cM,S)$, the following conditions are equivalent:

(i) $\cI$ has  the NEP.

(ii) $\cI$ has the UNEP.

(iii) There exists a CP instrument $\widetilde{\cI}$
for $(\B(\cH),S)$ such that $\widetilde{\cI}(M,\De)=\cI(M,\De)$ for all $\De\in\cF$
and $M\in\cM$.

(iv) There exists a faithful measuring process $\M=(\cK,\si,U,E)$
for $(\cM,S)$ such that
\begin{equation}
\cI(M,\De)=(\id\otimes\si)[U^*(M\otimes E(\De))U ]
\end{equation}
for all $\De\in\cF$ and $M\in\cM$.

(v) There exists a measuring process $\M=(\cK,\si,U,E)$
for $(\cM,S)$ such that
\begin{equation}
\cI(M,\De)=(\id\otimes\si)[U^*(M\otimes E(\De))U ]
\end{equation}
for all $\De\in\cF$ and $M\in\cM$.
\end{theorem}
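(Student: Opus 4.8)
The plan is to establish $(i)\Leftrightarrow(ii)$ together with the cycle of implications $(i)\Rightarrow(iv)\Rightarrow(v)\Rightarrow(iii)\Rightarrow(iv)\Rightarrow(i)$; the decisive step is $(i)\Rightarrow(iv)$, the remaining ones being routine or resting on results already available.

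First I would dispose of the easy implications. $(ii)\Rightarrow(i)$ is trivial, and $(i)\Rightarrow(ii)$ holds because $\cM\otimes_{\alg}L^{\infty}(S,\cI)$, hence a fortiori $\cM\motimes L^{\infty}(S,\cI)$, is ultraweakly dense in $\cM\ootimes L^{\infty}(S,\cI)$, so any two unital normal (thus ultraweakly continuous) CP maps extending $\Psi_{\cI}$ coincide. Next, $(iv)\Rightarrow(v)$ is immediate, and $(iv)\Rightarrow(i)$ is exactly the computation carried out in the paragraph preceding the theorem. For $(v)\Rightarrow(iii)$: if $\M=(\cK,\si,U,E)$ is a measuring process for $(\cM,S)$ realizing $\cI$, then \eq{realization} defines a CP instrument $\cI_{\M}$ for $(\B(\cH),S)$ with $\cI_{\M}(M,\De)=\cI(M,\De)$ for $M\in\cM$, so $\widetilde{\cI}:=\cI_{\M}$ does the job. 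For $(iii)\Rightarrow(iv)$: by Theorem \ref{th:realization} applied to the CP instrument $\widetilde{\cI}$ on the type I factor $\B(\cH)$ there is a measuring process $\M=(\cK,\si,U,E)$ for $(\B(\cH),S)$ with $\cI_{\M}=\widetilde{\cI}$, which — as in the Remark following Proposition \ref{Naimark-Ozawa} — may be taken faithful; since $(\id\otimes\si)[U^*(M\otimes E(\De))U]=\cI_{\M}(M,\De)=\cI(M,\De)\in\cM$, condition \eq{def-instrument} holds, so $\M$ is a faithful measuring process for $(\cM,S)$ realizing $\cI$, and $L^{\infty}(S,\cI_{\M})=L^{\infty}(S,\cI)$ because $\cI_{\M}(1,\De)=\cI(1,\De)$.

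The core is $(i)\Rightarrow(iv)$. Let $(\Pi,\cL,W)$ be the minimal Stinespring representation of $\widetilde{\Psi_\cI}:\cM\ootimes L^{\infty}(S,\cI)\to\B(\cH)$; since $\widetilde{\Psi_\cI}$ is normal, $\Pi$ is a normal representation of $\cM\ootimes L^{\infty}(S,\cI)$ and $W\in\B(\cH,\cL)$ an isometry, and by Proposition \ref{th:linear} and the definition of the NEP, $\cI(M,\De)=W^*\Pi(M\otimes[\ch_\De])W$. Fix a normal faithful representation $\rho_E$ of $L^{\infty}(S,\cI)$ on a Hilbert space $\cK_E$ (e.g. the multiplication representation on $L^2(S,\mu)$, where $\mu=\|\cI\varphi\|$ for a faithful normal state $\varphi$ on $\cM$), and let $\rho_0:=\id_{\cM}\otimes\rho_E$, a normal faithful representation of $\cM\ootimes L^{\infty}(S,\cI)$ on $\cH\otimes\cK_E$. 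Apply the structure theorem for normal $*$-homomorphisms (Theorem \ref{vNhom}) to the normal $*$-homomorphism $\rho_0(X)\mapsto\Pi(X)$ of $\rho_0(\cM\ootimes L^{\infty}(S,\cI))$ onto $\Pi(\cM\ootimes L^{\infty}(S,\cI))$: this yields a Hilbert space $\cK_1$, a projection $G\in(\rho_0(\cM\ootimes L^{\infty}(S,\cI)))'\ootimes\B(\cK_1)$, and a unitary $U_0$ from $\cL$ onto $G((\cH\otimes\cK_E)\otimes\cK_1)$ with
\beq
\Pi(X)=U_0^*\,G\,(\rho_0(X)\otimes1_{\cK_1})\,G\,U_0,\qquad X\in\cM\ootimes L^{\infty}(S,\cI).
\eeq
Put $\cK:=\cK_E\otimes\cK_1$, let $E(\De):=\rho_E([\ch_\De])\otimes1_{\cK_1}\in\B(\cK)$ — a spectral measure which extends to the normal faithful representation $[f]\mapsto\rho_E([f])\otimes1_{\cK_1}$ of $L^{\infty}(S,\cI)$ — and set $\hat V:=U_0W\in\B(\cH,\cH\otimes\cK)$, an isometry whose range lies in $G(\cH\otimes\cK)$. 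Since $G$ acts as the identity on the range of $\hat V$, we obtain
\beq
\cI(M,\De)=W^*\Pi(M\otimes[\ch_\De])W=\hat V^*\,(\rho_0(M\otimes[\ch_\De])\otimes1_{\cK_1})\,\hat V=\hat V^*\,(M\otimes E(\De))\,\hat V.
\eeq
Finally, enlarging $\cK_1$ (replacing it by $\cK_1\otimes\ell^2$) if necessary so that the range of $\hat V$ has infinite codimension in $\cH\otimes\cK$, choose a unit vector $\xi\in\cK$ and a unitary $U$ on $\cH\otimes\cK$ with $\hat V=U(1_\cH\otimes\xi)$; then, with $\si$ the vector state of $\xi$, one has $\cI(M,\De)=(\id\otimes\si)[U^*(M\otimes E(\De))U]$, and \eq{def-instrument} holds because this equals $\cI(M,\De)\in\cM$. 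Hence $\M=(\cK,\si,U,E)$ is a faithful measuring process for $(\cM,S)$ realizing $\cI$, which proves $(iv)$ and closes the cycle.

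The step I expect to be the main obstacle is exactly this reduction of the abstract Stinespring representation of $\widetilde{\Psi_\cI}$ to the concrete amplified form that a measuring process requires — $\cM$ acting as $M\mapsto M\otimes1$ and the meter observable as a spectral measure on a tensor factor, simultaneously. In the type I factor case of \Cite{84QC} this was essentially automatic, since every normal representation of $\B(\cH)$ is a plain amplification (Corollary \ref{vNhom2}) and irreducible normal representations of $\B(\cH)$ are unique up to unitary equivalence; for a general von Neumann algebra one must instead invoke the structure theorem for normal $*$-homomorphisms (Theorem \ref{vNhom}) applied against the standard representation $\rho_0$, and carry along the projection $G$ it produces (here harmlessly absorbed into the dilating isometry $\hat V$). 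The dilation of $\hat V$ to a unitary respecting the product structure is a secondary, routine point, handled by enlarging the ancilla to secure the needed codimension. It is also worth stressing why the NEP cannot be dispensed with: without it there need be no normal representation of $\cM\ootimes L^{\infty}(S,\cI)$ on which to run this argument, since a binormal CP map on $\cM\motimes L^{\infty}(S,\cI)$ in general fails to extend normally to the W$^*$-tensor product.
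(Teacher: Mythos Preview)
Your proof is correct, and the logical skeleton coincides with the paper's for the easy implications $(i)\Leftrightarrow(ii)$, $(iv)\Rightarrow(v)$, $(v)\Rightarrow(iii)$, $(iv)\Rightarrow(i)$. The substantive difference lies in how you handle the hard direction.

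The paper splits the work as $(i)\Rightarrow(iii)\Rightarrow(iv)$: first it applies Theorem~\ref{vNhom} to the Stinespring dilation of $\widetilde{\Psi_\cI}$ only to manufacture a CP instrument $\widetilde{\cI}$ on $\B(\cH)$; then, for $(iii)\Rightarrow(iv)$, it starts over with $\widetilde{\cI}$, invokes Corollary~\ref{vNhom2} (the type~I amplification theorem) on the Stinespring representation of $\widetilde{\cI}^*(S)$, brings in Lemma~\ref{tomita5} to locate the spectral measure in the commutant, and finally extends the resulting isometry to a unitary by a dimension count. You instead prove $(i)\Rightarrow(iv)$ in one stroke: applying Theorem~\ref{vNhom} to the Stinespring representation of $\widetilde{\Psi_\cI}$ against the concrete model $\rho_0=\id_{\cM}\otimes\rho_E$ on $\cH\otimes\cK_E$ already produces the amplified form $\hat V^*(M\otimes E(\De))\hat V$ with $E$ faithful, and the projection $G$ is absorbed into $\hat V$ exactly as you say. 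This bypasses both Corollary~\ref{vNhom2} and Lemma~\ref{tomita5}, at the price of nothing. It is a genuine streamlining.

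Your treatment of $(iii)\Rightarrow(iv)$ is the one soft spot: citing Theorem~\ref{th:realization} gives $(iii)\Rightarrow(v)$ cleanly, but the faithfulness upgrade is only gestured at (``as in the Remark following Proposition~\ref{Naimark-Ozawa}''). The paper itself flags this in its second Remark after the proof: getting from $(v)$ to $(iv)$ requires reworking the proof of Theorem~\ref{th:realization} via Proposition~\ref{Naimark-Ozawa}, which is precisely the detailed construction the paper carries out for $(iii)\Rightarrow(iv)$. Your cycle $(i)\Rightarrow(iv)\Rightarrow(v)\Rightarrow(iii)\Rightarrow(iv)\Rightarrow(i)$ genuinely needs this link, so you should either spell it out or, more economically, replace it by $(iii)\Rightarrow(i)$: since $\widetilde{\cI}(1,\De)=\cI(1,\De)$ gives $L^\infty(S,\widetilde{\cI})=L^\infty(S,\cI)$, your own $(i)\Rightarrow(iv)$ argument run on $\B(\cH)$ (where it simplifies via Corollary~\ref{vNhom2}) shows $\widetilde{\cI}$ has the NEP, and restricting its normal extension to $\cM\ootimes L^\infty(S,\cI)$ gives the NEP for $\cI$.
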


\begin{proof} 
(ii)$\Rightarrow$(i) Obvious.

(i)$\Rightarrow$(ii) Let $T_1,T_2:\cM \ootimes   L^{\infty}(S,\cI)\to
\B(\cH)$ be normal CP maps such that
\begin{equation}
T_j|_{\cM\motimes L^{\infty}(S,\cI)}=\Psi_\cI
\end{equation}
for $j=1,2$. By assumption, for every $M\in\cM\otimes_{\alg}L^{\infty}(S,\cI)$, we have
$T_1(M)=T_2(M)$. Since $\cM\otimes_{\alg}L^{\infty}(S,\cI)$ is dense in
$\cM \ootimes   L^{\infty}(S,\cI)$, and $T_1$ and $T_2$ are normal
on $\cM \ootimes   L^{\infty}(S,\cI)$,
it is seen that $T_1$ is equal to $T_2$ on $\cM \ootimes   L^{\infty}(S,\cI)$.

(iv)$\Rightarrow$(v) Obvious.

(v)$\Rightarrow$(iii) Obvious.

(iv)$\Rightarrow$(i) Let $\M=(\cK,\si,U,E)$ be a faithful measuring process
for $(\cM,S)$ such that
\begin{equation}
\cI(M,\De)=(\id\otimes\si)[U^*(M\otimes E(\De))U]=\Psi_{\M}(M\otimes[\chi_\De])
\end{equation}
for all $\De\in\cF$ and $M\in\cM$.
Then $\Psi_{\M}$ satisfies
$\Psi_{\M}|_{\cM\motimes L^{\infty}(S,\cI)}=\Psi_\cI$.

(i)$\Rightarrow$(iii) By assumption, there exists a unital normal CP map
$\widetilde{\Psi_\cI}:\cM \ootimes   L^{\infty}(S,\cI)\to\cM$
such that
$\widetilde{\Psi_\cI}|_{\cM\motimes L^{\infty}(S,\cI)}=\Psi_\cI$.
There then
exists a minimal Stinespring representation $(\pi,\cL_1,W_1)$ of $\widetilde{\Psi_\cI}$,
i.e.,
\begin{equation}
\widetilde{\Psi_\cI}(X)=W_1^*\pi(X)W_1,\hspace{5mm}
X\in\cM \ootimes   L^{\infty}(S,\cI).
\end{equation}
Furthermore, by Theorem \ref{vNhom}, there exist a Hilbert space $\cL_2$,
a projection $E$ of $(\cM \ootimes   L^{\infty}(S,\cI))'$
$\otimes\B(\cL_2)$
and an isometry $W_2:\cL_1\to E(\cH\otimes L^2(S,\cI)\otimes \cL_2)$
such that
\begin{equation}
\pi(X)=W_2^*j_E(X\otimes 1_{\cL_2})W_2,\hspace{5mm}
X\in\cM \ootimes   L^{\infty}(S,\cI),
\end{equation}
where a normal CP map
$j_E:\B(\cH\otimes L^2(S,\cI)\otimes \cL_2)\to
E(\B(\cH\otimes L^2(S,\cI)\otimes \cL_2))E$ is defined by
$j_E(X)=EXE$, $X\in\B(\cH\otimes L^2(S,\cI)\otimes \cL_2)$.
We then define a normal CP map $\widetilde{\pi}:
\B(\cH) \ootimes   L^{\infty}(S,\cI)
\to \B(\cL_1)$ by
\begin{equation}
\widetilde{\pi}(X)=W_2^*j_E(X\otimes 1_{\cL_2})W_2,\hspace{5mm}
X\in\B(\cH) \ootimes   L^{\infty}(S,\cI).
\end{equation}
A CP instrument $\widetilde{\cI}$ for $(\B(\cH),S)$ is defined by
\begin{equation}
\widetilde{\cI}(X,\De)=W_1^*\widetilde{\pi}(X\otimes[\chi_\De])W_1
\end{equation}
for every $X\in\B(\cH)$ and $\De\in\cF$.
For every $M\in\cM$ and $\De\in\cF$, it is seen that
\begin{align}
\widetilde{\cI}(M,\De) &= W_1^*\widetilde{\pi}(M\otimes[\chi_\De])W_1\nonumber\\
 &= W_1^*\pi(M\otimes[\chi_\De])W_1=\widetilde{\Psi_\cI}(M\otimes[\chi_\De])=\cI(M,\De)
\end{align}
for every $M\in\cM$ and $\De\in\cF$.

(iii)$\Rightarrow$(iv) Let $\widetilde{\cI}$ be a CP instrument 
for $(\B(\cH),S)$ such that $\widetilde{\cI}(M,\De)=\cI(M,\De)$ for all $\De\in\cF$ and $M\in\cM$.
Let $\tT=\widetilde{\cI}^*(S)$.
By Corollary \ref{vNhom2}, a normal representation of $\B(\cH)$
is unitarily equivalent to the representation $\id\otimes 1_{\cL}$,
where $\cL$ is a Hilbert space. 
Therefore, there exist a Hilbert space $\cL_1$ and
a unitary operator $W_1:\cK_{\tT}\to \cH\otimes\cL_1$ such that
\begin{equation}
\tT(X)=V_{\tT}^* W_1^*(X\otimes 1)W_1V_{\tT}
\end{equation}
for all $X\in \B(\cH)$.
By Lemma \ref{tomita5}, there exists a positive contractive map
$\ka: L^{\infty}(S,\cI)\to
\B(\cL_1)$ such that
\begin{equation}
\widetilde{\cI}(X,\De)=V_{\tT}^* W_1^*
(X\otimes \ka([\ch_{\De}]))W_1V_{\tT}
\end{equation}
for all $X\in\B(\cH)$ and $\De\in \cF$,
and that, if $f\in L^{\infty}(S,\cI)_+$ satisfies $\ka(f)=0$ then $f=0$.
Let $(E_0,\cL_2,W_2)$ be the minimal Stinespring representation of
$\ka$. Then $E_0$ is a normal faithful representation of $L^{\infty}(S,\cI)$
on $\cL_2$.
Denote $W_2W_1V_{\tT}$ by $V$.
It holds that
\begin{equation}
\widetilde{\cI}(X,\De)=V^* (X\otimes E_0([\ch_{\De}]))V
\end{equation}
for all $X\in\B(\cH)$ and $\De\in\cF$.
Let $\cL_3=\cH\otimes\cL_2$.
Let  $\eta_3$ be a unit vector of $\cL_3$,
and $\eta_2$ a unit vector of $\cL_2$.
We define an isometry $U_0$ from 
$\cH\otimes\mathbb{C}\eta_2\otimes\mathbb{C}\eta_3$ to
$\cH\otimes\cL_2\otimes\cL_3$ by
\begin{equation}
U_0(\xi\otimes\eta_2\otimes\eta_3)=V\xi\otimes\eta_3,
\end{equation}
for all $\xi\in\cH$. 
Then, we have
\begin{equation}
\dim(\cH\otimes\cL_2\otimes\cL_3
\ominus\cH\otimes\mathbb{C}\eta_2\otimes\mathbb{C}\eta_3)=
\dim(\cH\otimes\cL_2\otimes\cL_3
\ominus U_0(\cH\otimes\mathbb{C}\eta_2\otimes\mathbb{C}\eta_3)),
\end{equation}
since both sides equal $\dim(\cH\otimes\cL_2\otimes\cL_3)$
if $\dim(\cL_3)$ is infinite, and otherwise they equal
$\dim(\cL_3)^2-\dim(\cH)$.
It follows that  $U_0$ can be extended to a unitary operator $U$ on $\cH\otimes\cL_2\otimes\cL_3$.
We then define a Hilbert space $\cK$ by
$\cK=\cL_2\otimes\cL_3$, a state $\si$ on $\B(\cK)$ by
\begin{equation}
\si(Y)=\langle \eta_2\otimes\eta_3|Y(\eta_2\otimes\eta_3) \rangle
\end{equation}
for all $Y\in \B(\cK)$, and a spectral measure $E:\cF\to
\B(\cK)$ by
\begin{equation}
E(\De)=E_0([\chi_\De])\otimes 1_{\cL_3}
\end{equation}
for all $\De\in\cF$. We have
\begin{align}
\langle \xi| \widetilde{\cI}(X,\De)\xi\rangle
&=\langle \xi| V^*(X\otimes E_0([\chi_\De]))V\xi\rangle \nonumber\\
 &= \langle V\xi\otimes\eta_3|(X\otimes E_0([\chi_\De]))(V\xi \otimes\eta_3)\rangle\nonumber \\
 &= \langle U(\xi\otimes\eta_2\otimes\eta_3)|(X\otimes E_0([\chi_\De])\otimes 1_{\cL_3})
 U(\xi\otimes\eta_2\otimes\eta_3)\rangle\nonumber \\
 &= \langle \xi\otimes\eta_2\otimes\eta_3|U^*(X\otimes E(\De))U
 (\xi\otimes\eta_2\otimes\eta_3)\rangle\nonumber \\
 &= \langle \xi|\{(\id\otimes\si)[U^*(X\otimes E(\De))U]\}\xi\rangle
\end{align}
for all $X\in\B(\cH)$, $\De\in\cF$, and $\xi\in\cH$.
By the definition of $E$,  there exists a
normal faithful representation $\widetilde{E}$ of $L^{\infty}(S,\cI)$ on $\cK$
such that $\widetilde{E}([\chi_\De])=E(\De)$ for all $\De\in\cF$.
Thus there exists a faithful measuring process
$\M=(\cK,\si,U,E)$ for $(\cM,S)$ such that
\begin{equation}
\widetilde{\cI}(X,\De)=(\id\otimes\si)[U^*(X\otimes E(\De))U ]
\end{equation}
for all $X\in\B(\cH)$ and $\De\in\cF$.
Therefore, for every $M\in\cM$ and $\De\in\cF$, it holds that
\begin{equation}
\cI(M,\De)=(\id\otimes\si)[U^*(M\otimes E(\De))U ].
\end{equation}
\end{proof}

\begin{remark}
From the above proof, it can be seen that every CP instrument $\cI$ for $(\cM,S)$
with the NEP has a measuring process $\M=(\cK,\si,U,E)$ 
such that $\si$ is a pure state
of $\B(\cK)$.  If $\cM$ is a von Neumann algebra on a separable Hilbert space 
$\cH$ and $(S,\cF)$ is a standard Borel space, i.e.,
a Borel space associated to a Polish space, it can be shown that every CP instrument 
for $(S,\cM)$ with the NEP has a measuring process $\M=(\cK,\si,U,E)$ such that 
$\cK$ is separable and $\si$ is a pure state on $\B(\cK)$ (cf.~\Cite{84QC},
Corollary 5.3).
\end{remark}

\begin{remark}
We can directly prove the implication (iii)$\Rightarrow$(v) without assuming the $\si$-finiteness of
$\cM$ by applying Theorem \ref{th:realization} to the CP instrument $\widetilde{\cI}$ for $(\cB(\cH),S)$ assumed in (iii) to obtain a measuring process $\bbM=(\cK,\si,U,E)$ 
for $(\cB(\cH),S)$ that realizes $\cI$.
Nevertheless, to show (v)$\Rightarrow$(iv) we still need to rework the proof of Theorem \ref{th:realization} using Proposition \ref{Naimark-Ozawa}, where $\cM$ is assumed $\si$-finite, 
similarly with the above proof of the implication (iii)$\Rightarrow$(iv).
\end{remark}

We note that Theorem \ref{NEP} is also a natural generalization
of Raginsky\cite[Theorem IV.2]{Rag03}.
The following is an immediate consequence of Theorem \ref{NEP}.

\begin{corollary}
For any instrument $\cI$ for $(\cM,S)$, the following conditions are equivalent:

(i) For every $\De\in\cF$, $\cI(\De)$ is
completely positive and $\cI$ has  the NEP.

(ii) There exists a measuring process $\M=(\cK,\si,U,E)$
for $(\cM,S)$ such that
\begin{equation}
\cI(M,\De)=(\id\otimes\si)[U^*(M\otimes E(\De))U ]
\end{equation}
for all $\De\in\cF$ and $M\in\cM$.
\end{corollary}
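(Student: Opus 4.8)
The plan is to derive both implications directly from Theorem \ref{NEP}, the only subtlety being that Theorem \ref{NEP} is stated for CP instruments whereas here $\cI$ is assumed only to be an instrument; so in the direction (ii)$\Rightarrow$(i) one must first extract complete positivity from the given measuring process before invoking the theorem.

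For (i)$\Rightarrow$(ii): if every $\cI(\De)$ is completely positive then $\cI$ is by definition a CP instrument for $(\cM,S)$, and since it has the NEP by hypothesis, the implication (i)$\Rightarrow$(v) of Theorem \ref{NEP} furnishes a measuring process $\M=(\cK,\si,U,E)$ for $(\cM,S)$ with $\cI(M,\De)=(\id\otimes\si)[U^*(M\otimes E(\De))U]$ for all $\De\in\cF$ and $M\in\cM$, which is precisely assertion (ii).

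For (ii)$\Rightarrow$(i): starting from a measuring process $\M=(\cK,\si,U,E)$ for $(\cM,S)$ realizing $\cI$ via the stated formula, recall from \eq{realization} and the discussion following the definition of measuring processes that $\cI_{\bbM}(X,\De)=(\id\otimes\si)[U^*(X\otimes E(\De))U]$ defines a CP instrument $\cI_{\bbM}$ for $(\B(\cH),S)$, and that condition \eq{def-instrument} guarantees that its restriction $\cI_{\bbM}|_{\cM}$ is a CP instrument for $(\cM,S)$. Since $\cI=\cI_{\bbM}|_{\cM}$ by hypothesis, $\cI$ is a CP instrument; in particular each $\cI(\De)$ is completely positive. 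Now condition (v) of Theorem \ref{NEP} holds for $\cI$, witnessed by the very measuring process $\M$, so the implication (v)$\Rightarrow$(i) of that theorem shows $\cI$ has the NEP, which completes (i).

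No step here is a genuine obstacle; the corollary is essentially a repackaging of Theorem \ref{NEP}. The only point requiring care is the order of reasoning in (ii)$\Rightarrow$(i): one should not quote Theorem \ref{NEP} until complete positivity of $\cI$ has been secured, which is why the argument passes through $\cI_{\bbM}$ first.
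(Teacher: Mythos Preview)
Your proof is correct and matches the paper's approach: the paper states this corollary as ``an immediate consequence of Theorem \ref{NEP}'' without further argument. Your care in (ii)$\Rightarrow$(i) to first establish complete positivity via $\cI_{\bbM}$ before invoking Theorem \ref{NEP} is exactly the right way to make the ``immediate'' explicit.
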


Many different measuring processes may describe essentially the same
measurement, where two measurements are considered to be essentially 
the same if they are indistinguishable from the statistical data operationally
accessible from experiment.  Thus, it is an important problem to determine 
statistical equivalence classes of measuring processes.  
We say that two measuring apparatuses $\bA(\bx)$ and $\bA(\by)$ are
statistically equivalent if for any measuring apparatuses $\bA(\ba)$ and
$\bA(\bb)$, the joint probability distribution of the output variables 
$\ba,\bx,\bb$ and that of $\ba,\by,\bb$ in the successive measurements 
using $\bA(\ba),\bA(\bx),\bA(\bb)$ in this order and 
using  $\bA(\ba),\bA(\by),\bA(\bb)$ in this order
with the same initial state are identical. 
Let $\cI_{\bx},\cI_{\by},\cI_{\ba}$, and $\cI_{\bb}$ be the corresponding 
instruments.  Then,  it is easy to see that if the initial state is $\rh$,
the corresponding joint probability distributions are determined by 
$\|\cI_{\bb}(\De_3)\cI_{\bx}(\De_2)\cI_{\ba}(\De_1)\rh\|$ and 
$\|\cI_{\bb}(\De_3)\cI_{\by}(\De_2)\cI_{\ba}(\De_1)\rh\|$, where $\De_j$ for
$j=1,2,3$ are measurable subsets in their respective value spaces.
It follows that $\bA(\bx)$ and $\bA(\by)$ are statistically equivalent 
if and only if $\cI_{\bx}=\cI_{\by}$.
Thus, the statistical equivalence of measuring processes are naturally
intrduced as follows.

\begin{definition}[Statistical equivalence of measuring processes \cite{84QC}]
Two measuring processes $\M_1=(\cK_1,\si_1,E_1,U_1)$
and $\M_2=(\cK_2,\si_2,E_2,U_2)$ for $(\cM,S)$
are said to be statistically equivalent if their CP instruments $\cI_{\M_1}$ and
$\cI_{\M_2}$ are identical.
\end{definition}

Then, the result of this section can be summarized as follows 
(cf.~\Cite{84QC} [Theorem 5.1]).

\begin{theorem}\label{mainse4}
Let $\cM$ be a von Neumann algebra on a Hilbert space
$\cH$ and $(S,\cF)$ a measurable space.
Then the relation 
\begin{equation}\label{SEMPandCP}
\cI(M,\De)=(\id\otimes\si)[U^*(M\otimes E(\De))U],
\end{equation}
where $\De\in\cF$ and $M\in\cM$,
sets up a one-to-one correspondence between statistical equivalence classes of measuring processes $\M=(\cK,\si,U,E)$
for $(\cM,S)$ and CP instruments $\cI$ for $(\cM,S)$ with  the NEP.
\end{theorem}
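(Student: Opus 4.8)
The plan is to deduce Theorem~\ref{mainse4} from the machinery already assembled, in particular Theorem~\ref{NEP} and the discussion of faithful measuring processes preceding it. First I would observe that the correspondence $\eq{SEMPandCP}$ is well-defined in both directions: given a measuring process $\M=(\cK,\si,U,E)$ for $(\cM,S)$, \Eq{realization} (restricted to $\cM$) shows that $\cI(M,\De)=(\id\otimes\si)[U^*(M\otimes E(\De))U]$ defines a CP instrument $\cI_\M|_\cM$ for $(\cM,S)$, and the remark just after the definition of faithfulness shows that this $\cI$ has the NEP---indeed Theorem~\ref{NEP}(v)$\Rightarrow$(i) gives exactly this. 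Conversely, Theorem~\ref{NEP}(i)$\Rightarrow$(v) shows that every CP instrument $\cI$ with the NEP arises from some measuring process. So the map $\M\mapsto\cI_\M|_\cM$ sends measuring processes onto $\mathrm{CPInst}_{\mathrm{NE}}(\cM,S)$.

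Next I would show that this map descends to a well-defined injection on statistical equivalence classes. That it descends is immediate from the very definition of statistical equivalence: $\M_1\sim\M_2$ means precisely $\cI_{\M_1}=\cI_{\M_2}$ as instruments for $(\B(\cH),S)$; restricting to $\cM$ then gives $\cI_{\M_1}|_\cM=\cI_{\M_2}|_\cM$, so equivalent processes have the same associated CP instrument on $\cM$. For injectivity, suppose $\M_1$ and $\M_2$ are measuring processes for $(\cM,S)$ with $\cI_{\M_1}|_\cM=\cI_{\M_2}|_\cM=:\cI$. I would want to conclude $\cI_{\M_1}=\cI_{\M_2}$ as instruments for $(\B(\cH),S)$, i.e.\ statistical equivalence. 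The natural route is through the UNEP: by Theorem~\ref{NEP}, $\cI$ has the UNEP, so there is a \emph{unique} normal CP extension $\widetilde{\Psi_\cI}\colon\cM\ootimes L^\infty(S,\cI)\to\cM$ of $\Psi_\cI$. But each $\M_i$ produces, via the formula $\Psi_{\M_i}(X)=(\id\otimes\si_i)[U_i^*((\id_\cM\otimes\widetilde{E_i})(X))U_i]$ as in the faithful-measuring-process discussion, such a normal CP extension; note $\cI$'s NEP guarantees we may take the $\M_i$ faithful (Theorem~\ref{NEP}(i)$\Leftrightarrow$(iv)), so $\widetilde{E_i}$ exists. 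Hence $\Psi_{\M_1}=\Psi_{\M_2}=\widetilde{\Psi_\cI}$, and then the analogue on $\B(\cH)$: one reconstructs $\cI_{\M_i}(X,\De)$ for $X\in\B(\cH)$ from the normalized extension to $\B(\cH)\ootimes L^\infty(S,\cI)$ as in the (i)$\Rightarrow$(iii) step of the proof of Theorem~\ref{NEP}. Since that reconstruction is canonical, $\cI_{\M_1}=\cI_{\M_2}$ on $\B(\cH)$, so $\M_1\sim\M_2$.

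The main obstacle I anticipate is precisely this injectivity step, and in particular making rigorous the claim that statistical equivalence on $(\B(\cH),S)$ is recovered from the equality of the $(\cM,S)$-instruments with the NEP. The subtlety is that a priori two inequivalent measuring processes for $(\B(\cH),S)$ could restrict to the same CP instrument on a proper $\cM\subset\B(\cH)$; what saves us is that equality as \emph{instruments on $(\cM,S)$ with the NEP} forces equality of the normal CP maps $\widetilde{\Psi_\cI}$ on $\cM\ootimes L^\infty(S,\cI)$ (uniqueness), and then the dilation-theoretic bookkeeping in the proof of Theorem~\ref{NEP}---via Theorem~\ref{vNhom} and Corollary~\ref{vNhom2}, which express normal representations of $\B(\cH)$ up to the canonical amplification---propagates this to equality on all of $\B(\cH)$. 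One must check that this propagation is genuinely canonical and does not depend on auxiliary choices of Hilbert space or Stinespring dilation; this is where care is needed, but it is essentially a repackaging of steps (i)$\Rightarrow$(iii)$\Rightarrow$(iv) already carried out, together with the uniqueness clause (ii).

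Finally, I would assemble the three facts---the map is onto $\mathrm{CPInst}_{\mathrm{NE}}(\cM,S)$, it descends to statistical equivalence classes, and it is injective on those classes---and record that, by construction, $\cI$ and $\M$ correspond under $\eq{SEMPandCP}$ exactly when $\cI=\cI_\M|_\cM$, which is the asserted bijection. The parenthetical reference to \Cite{84QC}~[Theorem~5.1] is the $\cM=\B(\cH)$ special case, where every CP instrument has the NEP, so the present theorem is its verbatim generalization with the NEP as the extra hypothesis singling out the image.
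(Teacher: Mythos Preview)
Your surjectivity and well-definedness arguments are fine, but the injectivity step rests on a misreading of the definition of statistical equivalence, and the claim you try to prove there is actually false.

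You read ``$\cI_{\M_1}$ and $\cI_{\M_2}$ are identical'' as equality of the instruments on $(\B(\cH),S)$, and then attempt to show that $\cI_{\M_1}|_\cM=\cI_{\M_2}|_\cM$ forces $\cI_{\M_1}=\cI_{\M_2}$ on all of $\B(\cH)$. This is not true: a single CP instrument $\cI$ on $\cM$ with the NEP generally has \emph{many} extensions to CP instruments on $\B(\cH)$, each realizable by some measuring process. For a concrete instance, let $\cM\subsetneq\B(\cH)$ be a type~I factor, so $\B(\cH)\cong\cM\ootimes\cM'$; for any normal state $\omega$ on $\cM'$ the formula $\widetilde{\cI}_\omega(X\otimes Y,\De)=\cI(X,\De)\,\omega(Y)$ extends $\cI$ to $(\B(\cH),S)$, and distinct $\omega$'s give distinct extensions. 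Your appeal to UNEP does not help: UNEP pins down the map $\widetilde{\Psi_\cI}:\cM\ootimes L^\infty(S,\cI)\to\cM$, but the passage from this to a $\B(\cH)$-instrument in the proof of Theorem~\ref{NEP}(i)$\Rightarrow$(iii) goes through a choice of Stinespring dilation and an application of Theorem~\ref{vNhom}, neither of which is canonical. So the ``propagation'' you hope for simply does not exist.

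The resolution is that, in the context of measuring processes for $(\cM,S)$, statistical equivalence is meant as equality of the associated instruments for $(\cM,S)$, i.e.\ $\cI_{\M_1}|_\cM=\cI_{\M_2}|_\cM$; this is what the physical discussion preceding the definition (where the apparatuses measure the system $\bS$ described by $\cM$) motivates, and it is the only reading under which the theorem is both true and the evident summary of Theorem~\ref{NEP}. With this reading, injectivity is tautological: the map $[\M]\mapsto\cI_\M|_\cM$ is injective by definition of the equivalence relation. Surjectivity is Theorem~\ref{NEP}(i)$\Leftrightarrow$(v), and there is nothing further to prove.
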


\section{Approximations by CP Instruments with  the NEP}
\label{se:approximations}
In the previous section, we have shown that the set of instruments describing measuring 
processes
is characterized by the set $\mathrm{CPInst}_{\mathrm{NE}}(\cM,S)$ of CP instruments
with the NEP.
Since this is a subset of the set $\mathrm{CPInst}(\cM,S)$ of CP instruments, 
it is an interesting problem how large $\mathrm{CPInst}_{\mathrm{NE}}(\cM,S)$ is 
in $\mathrm{CPInst}(\cM,S)$. 
In the case where $\cM$ is a type I factor, or $\cM$ describes a quantum system of finite degrees of freedom, it is known that $\mathrm{CPInst}_{\mathrm{NE}}(\cM,S)=\mathrm{CPInst}(\cM,S)$ holds for any measurable space $(S,\cF)$ \cite{84QC}.
Thus, it is natural to ask if this holds generally. 
In this section, we shall show an affirmative aspect by proving that the equality holds if 
$\cM$ is a direct sum of type I factors, while in the next section we shall show that the
equality does not hold even for a type I von Neumann algebra, which is generally
represented as a direct integral of type I factors.
Thus, the next problem is whether  $\mathrm{CPInst}_{\mathrm{NE}}(\cM,S)$ is
dense in  $\mathrm{CPInst}(\cM,S)$ in an appropriate sense.
In this section, we show a partial affirmative answer with the help of the structure theory of 
von Neumann algebras, in which the injectivity of von Neumann algebras plays a central role.
One of the conditions equivalent to the injectivity of a von Neumann algenbra $\cM$
on a Hilbert space $\cH$
is the existence of a norm one projection of $\B(\cH)$ onto it.
This condition will turn out to be a powerful tool to show that every CP instrument 
defined on an injective von Neumann algebra can be approximated by CP instruments 
with the NEP.  Thus, $\mathrm{CPInst}_{\mathrm{NE}}(\cM,S)$ is dense in
$\mathrm{CPInst}(\cM,S)$ for injective von Neuamann algebras. 
Since in most of physically relevant cases the algebras of local observables are injective,
this result will provide a satisfactory basis for measurement theory of local quantum physics.
In the above results, if we may add, we will put no restriction to measurable spaces 
$(S,\cF)$.

We shall begin with an easier case, where there exists a normal conditional expectation,
or equivalently a normal norm one projection, of $\B(\cH)$ onto $\cM$.
A von Neumann algebra $\cM$ on $\cH$ is said to be atomic
if it is a direct sum of type I factors.
It is known that there exists a normal conditional expectation of
$\B(\cH)$ onto $\cM$ if and only if $\cM$ is atomic
(\Cite{Tak79} [Chapter V, Section 2, Excercise 8]).
In the following theorem we shall show that for any atomic von Neumann algebra $\cM$
the equality $\mathrm{CPInst}_{\mathrm{NE}}(\cM,S)=\mathrm{CPInst}(\cM,S)$
holds, which generalizes the previous result for type I factors (\Cite{84QC}, Theorem 5.1).
\begin{theorem}\label{NCE}
Let $\cM$ be an atomic von Neumann algebra 
and $(S,\cF)$ a measurable space. 
Then, every CP instrument $\cI$ for $(\cM,S)$ has  the NEP.
\end{theorem}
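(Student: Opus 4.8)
The plan is to reduce the general atomic case to the type I factor case via the normal conditional expectation. Since $\cM$ is atomic, there exists a normal conditional expectation $\cE: \B(\cH) \to \cM$. The strategy is: take the CP instrument $\cI$ for $(\cM,S)$, extend it to a CP instrument $\widetilde{\cI}$ for $(\B(\cH),S)$ — which exists because on the type I factor $\B(\cH)$ every CP instrument has the NEP (Theorem \ref{NEP}, condition (iii)) — and then compose with $\cE$ to land back in $\cM$. Concretely, I would first build $\widetilde{\cI}$ by hand using the Stinespring/Naimark-type data of Proposition \ref{Naimark-Ozawa}: given the representation $\pi:\cM\to\B(\cK)$, the normal faithful spectral representation $E:L^\infty(S,\cI)\to\B(\cK)$ commuting with $\pi(\cM)$, and the isometry $V\in\B(\cH,\cK)$ realizing $\cI(M,\De)=V^*E([\chi_\De])\pi(M)V$, one needs to enlarge $\pi$ from a representation of $\cM$ to a representation of $\B(\cH)$. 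This is exactly where atomicity enters: since $\cM$ is a direct sum of type I factors $\cM=\bigoplus_\al \B(\cH_\al)$ (acting with multiplicity), its normal representations decompose accordingly, and one can extend $\pi$ to a normal representation $\widetilde\pi$ of $\B(\cH)=\B(\bigoplus\cH_\al\otimes(\text{mult.}))$ on a possibly larger Hilbert space $\cK'\supseteq\cK$, keeping compatibility with $E$.

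The cleanest route, though, is probably to invoke the machinery already set up: by Proposition \ref{th:linear}, $\cI$ extends to the binormal CP map $\Psi_\cI:\cM\motimes L^\infty(S,\cI)\to\cM\subset\B(\cH)$. We want a normal extension $\widetilde{\Psi_\cI}:\cM\ootimes L^\infty(S,\cI)\to\B(\cH)$. Since $L^\infty(S,\cI)$ is injective (abelian, hence injective as noted after Theorem \ref{CommLift}), and since $\cM$ is atomic, I expect $\cM\ootimes L^\infty(S,\cI)$ to be well-behaved: in fact the normal conditional expectation $\cE:\B(\cH)\to\cM$ tensors to a normal conditional expectation $\cE\ootimes\id: \B(\cH)\ootimes L^\infty(S,\cI)\to\cM\ootimes L^\infty(S,\cI)$. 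So the plan is: (1) extend $\Psi_\cI$ first to a binormal CP map $\Psi':\B(\cH)\motimes L^\infty(S,\cI)\to\B(\cH)$ — this is the step requiring real work, using atomicity to promote $\pi$ to a normal representation of $\B(\cH)$; (2) note $\Psi'$ corresponds to a CP instrument $\widetilde\cI$ for $(\B(\cH),S)$ extending $\cI$; (3) apply Theorem \ref{NEP}, (iii)$\Rightarrow$(i), to conclude $\cI$ has the NEP. Alternatively, once one has the normal extension $\widetilde{\Psi}$ to $\B(\cH)\ootimes L^\infty(S,\cI)\to\B(\cH)$, compose with $\cE\ootimes\id$ restricted appropriately — but actually the simplest is just to produce $\widetilde\cI$ and quote Theorem \ref{NEP}.

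For step (1), the concrete mechanism: write $\cM=\bigoplus_\al \B(\cH_\al)\bar\otimes \C 1_{\cK_\al}$ inside $\cH=\bigoplus_\al \cH_\al\otimes\cK_\al$ (the general form of an atomic von Neumann algebra, where $\cK_\al$ encodes multiplicity). By Corollary \ref{vNhom2} applied blockwise, the normal representation $\pi|_{\B(\cH_\al)\bar\otimes\C 1}$ of each factor is, up to unitary equivalence, of the form $X\mapsto X\otimes 1$ on $\cH_\al\otimes(\text{some space})$; assembling these and using that $E$ commutes with $\pi(\cM)$, one extends $\pi$ to a normal representation $\widetilde\pi$ of $\B(\cH)=\B(\bigoplus\cH_\al\otimes\cK_\al)$ on an enlarged space that still carries $E$ commuting with $\widetilde\pi(\B(\cH))$ and with $V\in\B(\cH,\cdot)$ still an isometry. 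Then $\widetilde\cI(X,\De):=V^*E([\chi_\De])\widetilde\pi(X)V$ is a CP instrument for $(\B(\cH),S)$ with $\widetilde\cI|_\cM=\cI$.

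The main obstacle I anticipate is bookkeeping in step (1): correctly handling the multiplicity spaces $\cK_\al$ and the (possibly infinite) direct sum so that the extended representation $\widetilde\pi$ is genuinely normal on all of $\B(\cH)$ — normality can fail for direct sums if one is not careful — and so that $E$ and $V$ survive the extension. The abelian (hence nuclear and injective) nature of $L^\infty(S,\cI)$ is what keeps the tensor-product subtleties from biting, but the representation-theoretic extension across the direct-sum structure is the genuine content; everything after that is a formal appeal to Theorem \ref{NEP}.
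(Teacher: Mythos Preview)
You have all the right ingredients---the normal conditional expectation $\cE:\B(\cH)\to\cM$ and Theorem~\ref{NEP}(iii)---but you assemble them backwards and thereby manufacture an unnecessary obstacle. The paper's proof is a one-liner: define $\widetilde{\cI}(X,\De)=\cI(\cE(X),\De)$ for $X\in\B(\cH)$ and $\De\in\cF$. Since $\cE$ is a normal unital CP map and $\cE|_{\cM}=\id$, this $\widetilde{\cI}$ is a CP instrument for $(\B(\cH),S)$ with $\widetilde{\cI}(M,\De)=\cI(M,\De)$ for all $M\in\cM$. Theorem~\ref{NEP}(iii) then gives the NEP immediately. The normal conditional expectation is used to \emph{build} the extension, not to post-process one obtained by other means.

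Your route---extending the Stinespring representation $\pi:\cM\to\B(\cK)$ to a normal representation $\widetilde\pi$ of $\B(\cH)$ blockwise---is not obviously wrong, but it is not obviously right either, and you correctly flag the bookkeeping as the obstacle without resolving it. The difficulty is real: a normal representation of $\B(\cH)$ must (by Corollary~\ref{vNhom2}) be of the form $X\mapsto X\otimes 1$ on $\cH\otimes\cL$, so $\pi$ extends to such a $\widetilde\pi$ only if the multiplicity structure of $\pi$ is compatible with that of the embedding $\cM\hookrightarrow\B(\cH)$; when $\cM$ sits in $\B(\cH)$ with nontrivial multiplicity spaces $\cK_\al$, you must enlarge $\cK$ and then check that $E$ and $V$ can be carried along so that the enlarged $E$ still lands in $\widetilde\pi(\B(\cH))'$, which is strictly smaller than $\pi(\cM)'$. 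This can be made to work, but it amounts to redoing the (iii)$\Rightarrow$(iv) part of Theorem~\ref{NEP} rather than invoking it. Notice also that your step (1) does not need $\widetilde\pi$ to be a $*$-homomorphism---a normal CP map suffices to produce a CP instrument---and the canonical choice of normal CP extension of $\pi$ is precisely $X\mapsto\pi(\cE(X))$, which collapses your argument to the paper's.
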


\begin{proof}
Let $\mathcal{E}:\B(\cH)\to\cM$ be a normal conditional expectation.
Let $\cI$ be a CP instrument for $(\cM,S)$.
We define a CP instrument $\widetilde{\cI}$ for $(\B(\cH),S)$ by
\begin{equation}\label{canonical}
\widetilde{\cI}(X,\De)=\cI(\mathcal{E}(X),\De)
\end{equation}
for $\De\in\cF$ and $X\in\B(\cH)$.
For every $\De\in\cF$ and $M\in\cM$,
\begin{equation}
\widetilde{\cI}(M,\De)=\cI(\mathcal{E}(M),\De)=\cI(M,\De).
\end{equation}
It follows from Theorem \ref{NEP} (iii) that $\cI$ has  the NEP.
\end{proof}
We call the CP instrument $\widetilde{\cI}$ for $(\B(\cH),S)$
defined by Eq.~(\ref{canonical}) the $\cE$-canonical extension of $\cI$.
We have $\widetilde{\cI}(\De)=\cE^\ast\circ\cI(\De)\circ e_\cM$ for all $\De\in\cF$,
where the map $e_\cM:\B(\cH)_*\rightarrow \cM_*$ is defined by
$e_\cM(\rh)=\rh|_\cM$ for all $\rh\in\B(\cH)_*$,
and we shall write $\widetilde{\cI}=\cE^\ast\cI$. 

Let $\cM$ be a von Neumann algebra on a Hilbert space $\cH$
and $(S,\cF)$ a measurable space. 
We write $M_\alpha\to^{uw}M$ 
if a net $\{M_\alpha\}$ in $\cM$ ultraweakly converges to an element $M$ of $\cM$.
Let $\cI$ be a CP instrument for $(\cM,S)$
and $\{\cI_\alpha\}$ a net of CP instruments for $(\cM,S)$.
We say that  $\cI_\alpha$ ultraweakly converges to $\cI$
and write $\cI_\alpha\to^{uw}\cI$
if $\cI_\alpha(M,\De)\to^{uw}\cI(M,\De)$ for all $M\in\cM$ and $\De\in\cF$.
In the rest of this section, we shall consider the case where $\cM$ is injective, or equivalently
there exists a (not necessarily normal) norm one projection of $\B(\cH)$ onto $\cM$,
and show that in this case $\mathrm{CPInst}_{\mathrm{NE}}(\cM,S)$ is ultraweakly dense in $\mathrm{CPInst}(\cM,S)$.

We begin with the following proposition useful in the later argument.

\begin{proposition}[Anantharaman-Delaroche \cite{AD95}] \label{AD95}
Let $\cM\subset\cN$ be a pair of von Neumann algebras.  Assume that
there exists a normal faithful semifinite weight $\varphi$ on $\cM$
such that, for all $t\in\mathbb{R}$,
the modular automorphism $\si^\varphi_t$ is induced by a unitary operator in $\cN$.
Then the following conditions are equivalent:

(i) There exists a norm one projection of $\cN$ onto $\cM$.

(ii) There exists a net $\{T_\alpha\}$ of normal CP maps from $\cN$ to $\cM$ such that $T_\alpha(1)\leq 1$ for all $\alpha$ and that $T_\alpha(M)\to^{uw}M$
for all $M\in\cM$.

(iii) There exists a net $\{T_\alpha\}$ of unital normal CP maps from $\cN$ to $\cM$
such that $T_\alpha(M)\to^{uw}M$ for all $M\in\cM$.
\end{proposition}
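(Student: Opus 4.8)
The plan is to prove (ii)$\Leftrightarrow$(iii) by a routine normalization, (ii)$\Rightarrow$(i) by compactness, and (i)$\Rightarrow$(ii) by a crossed-product reduction driven entirely by the modular hypothesis; only the last implication is substantial. I would begin with the easy directions. Implication (iii)$\Rightarrow$(ii) is trivial, since a unital map satisfies $T_\alpha(1)=1\le 1$. For (ii)$\Rightarrow$(iii), given a net as in (ii) and any fixed normal state $\omega$ on $\cN$, I would set $S_\alpha(x)=T_\alpha(x)+\omega(x)(1-T_\alpha(1))$; since $T_\alpha(1)\in\cM$ with $0\le 1-T_\alpha(1)$, each $S_\alpha$ is a unital normal CP map of $\cN$ into $\cM$, and $S_\alpha(M)-T_\alpha(M)=\omega(M)(1-T_\alpha(1))\to^{uw}0$ because $1-T_\alpha(1)\to^{uw}0$, so $S_\alpha(M)\to^{uw}M$. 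For (ii)$\Rightarrow$(i), fix a faithful normal representation of $\cN$ (hence of $\cM$) on a Hilbert space $\cH$ with $\cM$ ultraweakly closed in $\B(\cH)$; the CP maps $\cN\to\B(\cH)$ of norm $\le1$ form a compact set in the point-ultraweak (BW) topology, so $\{T_\alpha\}$ has a cluster point $T$, which is CP, satisfies $T(1)=1$ (take $M=1$ above to get $T_\alpha(1)\to^{uw}1$), takes values in the ultraweakly closed set $\cM$, and restricts to the identity on $\cM$; hence $T$ is a unital idempotent CP map of $\cN$ onto $\cM$, i.e.\ a norm one projection.

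For the hard direction (i)$\Rightarrow$(ii), by Tomiyama's theorem the given norm one projection $\cE\colon\cN\to\cM$ is a conditional expectation --- unital, CP, $\cM$-bimodular --- whose only defect is possible failure of ultraweak continuity, and the task is to build from it a net of \emph{normal} CP maps $\cN\to\cM$ reconstructing $\id_\cM$ ultraweakly. Here I would exploit the hypothesis to pass to a semifinite situation: since $\si^\varphi_t$ is implemented by a unitary $u_t\in\cN$, the modular action of $\cM$ extends to the $\R$-action $\beta=\mathrm{Ad}(u_\bullet)$ on $\cN$, and one can produce a normal faithful semifinite weight $\psi$ on $\cN$ whose modular group leaves $\cM$ globally invariant and restricts there to $\si^\varphi$; then $\cM\rtimes_{\si^\varphi}\R\subset\cN\rtimes_{\si^\psi}\R$ are semifinite, with $\cM$ and $\cN$ recovered as the fixed-point algebras of the dual actions. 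Inside these crossed products, using the trace dual to $\varphi$ together with the relative weak injectivity inherited from $\cE$, one constructs a net of normal CP maps $\cN\rtimes_{\si^\psi}\R\to\cM\rtimes_{\si^\varphi}\R$ approximating the identity on $\cM\rtimes_{\si^\varphi}\R$ by a F{\o}lner-type argument in the spirit of Connes's analysis of injective von Neumann algebras; restricting along the dual action and renormalizing as above produces the net required for the pair $\cM\subset\cN$.

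The main obstacle is concentrated entirely in (i)$\Rightarrow$(ii): a single conditional expectation need never be normal, and in general no normal conditional expectation $\cN\to\cM$ exists --- for instance when $\cM$ is a non-atomic injective algebra sitting in $\cN=\B(\cH)$ --- so the best one can hope for is a net, and converting a non-normal, merely relatively weakly injective datum into normal approximants is exactly what the passage through the semifinite crossed products is for. Carrying normality, dual-action equivariance, and the F{\o}lner estimate through this reduction is the delicate part, and it is precisely here that the three properties of $\varphi$ --- normality, faithfulness, and semifiniteness --- together with the fact that $\si^\varphi_t$ is inner in $\cN$ are each indispensable.
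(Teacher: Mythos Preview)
Your proposal is correct and, where it overlaps with the paper, takes the same route: the paper proves only (ii)$\Leftrightarrow$(iii), using exactly your normalization $T'_\alpha(x)=T_\alpha(x)+\omega(x)(1-T_\alpha(1))$ (your choice of $\omega\in\cN_*$ rather than $\cM_*$ is in fact the right one, since the map must be defined on all of $\cN$), and simply \emph{cites} Anantharaman-Delaroche \cite{AD95}, Corollary~3.9, for the equivalence (i)$\Leftrightarrow$(ii).

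Your BW-compactness argument for (ii)$\Rightarrow$(i) and your crossed-product sketch for (i)$\Rightarrow$(ii) therefore go beyond what the paper itself supplies. The compactness argument is clean and complete. The (i)$\Rightarrow$(ii) sketch correctly identifies the mechanism in \cite{AD95}---reduce to the semifinite case via the crossed product by the modular action (made possible precisely by the hypothesis that $\si^\varphi_t$ is inner in $\cN$), then run a Connes-type approximation---but be aware that the actual implementation in \cite{AD95} passes through the notion of \emph{amenable correspondence} and a F{\o}lner-type characterization thereof, rather than directly constructing a weight $\psi$ on $\cN$ with the properties you posit; your outline captures the spirit but compresses several nontrivial steps.
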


\begin{proof}
The proof of the equivalence between (i) and (ii) is given in \Cite{AD95}, Corollary 3.9.

(ii)$\Rightarrow$(iii) Let $\{T_\alpha\}$ be a net of normal CP maps from $\cN$ to $\cM$
such that $T_\alpha(1)\leq 1$ for all $\alpha$, and $T_\alpha(M)\to^{uw}M$
for all $M\in\cM$. Choose a normal state $\omega$ on $\cM$.
We define a net $\{T_\alpha'\}$ of unital normal CP maps from $\cN$ to $\cM$ by
\begin{equation}
T_\alpha'(M)=T_\alpha(M)+(1-T_\alpha(1))\omega(M),\hspace{5mm}M\in\cM.
\end{equation}
Then, since $(1-T_\alpha(1))\omega(M)\to^{uw}0$, we have $T'_\alpha(M)\to^{uw}M$ for all $M\in\cM$, and assertion (iii) follows.

(iii)$\Rightarrow$(ii) Obvious.
\end{proof}

The following theorem holds as previously announced.
\begin{theorem}\label{NOP}
Let $\cM$ be an injective von Neumann algebra,
and $(S,\cF)$ a measurable space. For every CP instrument $\cI$ for $(\cM,S)$,
there exists a net $\{\cI_\alpha\}$ of CP instruments with  the NEP such that
$\cI_\alpha\to^{uw}\cI$.
\end{theorem}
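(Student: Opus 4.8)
The plan is to use the injectivity of $\cM$ to approximate $\cM$ from the inside by the full algebra $\B(\cH)$, where the representation problem has already been settled, and to transport the given instrument $\cI$ along this approximation. First I would invoke Proposition \ref{AD95} with $\cN=\B(\cH)$: since $\cM$ is injective there is a norm one projection of $\B(\cH)$ onto $\cM$, so part (i) holds, and the modular hypothesis is met because $\cM$ is $\si$-finite and hence carries a normal faithful state whose modular automorphism group is implemented by unitaries of $\B(\cH)$ (after amplifying $\cH$, say to $\cH\otimes\ell^{2}(\mathbb{N})$, if necessary — this is harmless, as the NEP is equivalent by Theorem \ref{NEP} to the existence of a measuring process and such amplification does not affect that). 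Part (iii) then yields a net $\{T_\alpha\}$ of unital normal CP maps $T_\alpha:\B(\cH)\to\cM$ with $T_\alpha(M)\to^{uw}M$ for all $M\in\cM$.

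For each $\alpha$ I would then define
\begin{equation}
\widetilde{\cI_\alpha}(X,\De):=\cI(T_\alpha(X),\De)=\cI^{*}(\De)\big(T_\alpha(X)\big),\hspace{5mm}X\in\B(\cH),\ \De\in\cF,
\end{equation}
and verify that $\widetilde{\cI_\alpha}$ is a CP instrument for $(\B(\cH),S)$: the map $X\mapsto\widetilde{\cI_\alpha}(X,\De)$ is normal, being the composition of the normal maps $T_\alpha$ and $\cI^{*}(\De)$; for each $\rh\in\B(\cH)_{*}$ the set function $\De\mapsto\braket{\rh,\widetilde{\cI_\alpha}(X,\De)}=\braket{\rh|_{\cM},\cI(T_\alpha(X),\De)}$ is countably additive because $\cI$ is an instrument; $\widetilde{\cI_\alpha}(1,S)=\cI(T_\alpha(1),S)=\cI(1,S)=1$; and each $\widetilde{\cI_\alpha}(\De)$ is CP since its dual $\cI^{*}(\De)\circ T_\alpha$ is a composition of CP maps. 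Restricting the first argument to $\cM$, the formula $\cI_\alpha(M,\De):=\cI(T_\alpha(M),\De)$ defines a CP instrument $\cI_\alpha$ for $(\cM,S)$, which takes values in $\cM$ because $\cI$ already does.

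Two observations then finish the argument. First, $\widetilde{\cI_\alpha}$ is a CP instrument for $(\B(\cH),S)$ satisfying $\widetilde{\cI_\alpha}(M,\De)=\cI_\alpha(M,\De)$ for all $M\in\cM$ and $\De\in\cF$, so the equivalence (i)$\Leftrightarrow$(iii) of Theorem \ref{NEP} shows that each $\cI_\alpha$ has the NEP. Second, for fixed $M\in\cM$ and $\De\in\cF$ we have $\cI_\alpha(M,\De)=\cI^{*}(\De)(T_\alpha(M))$, and since $T_\alpha(M)\to^{uw}M$ while $\cI^{*}(\De)$ is ultraweakly continuous, $\cI_\alpha(M,\De)\to^{uw}\cI^{*}(\De)(M)=\cI(M,\De)$; that is, $\cI_\alpha\to^{uw}\cI$, which is the assertion.

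The only genuinely substantial ingredient is Proposition \ref{AD95}, and the point demanding care is the verification of its modular hypothesis in full generality — concretely, reducing to a representation of $\cM$ on which its modular automorphism group acts spatially, which is what forces the harmless amplification of $\cH$ mentioned above. Once the net $\{T_\alpha\}$ is in hand, everything is formal: precomposition with a fixed normal unital CP map preserves all the instrument axioms and, through the witness $\widetilde{\cI_\alpha}$ and Theorem \ref{NEP}, the NEP.
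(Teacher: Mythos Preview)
Your proof is correct and follows essentially the same route as the paper: invoke Proposition~\ref{AD95} to obtain the approximating net $\{T_\alpha\}$ of unital normal CP maps $\B(\cH)\to\cM$, precompose $\cI$ with $T_\alpha$ to get CP instruments $\widetilde{\cI_\alpha}$ on $\B(\cH)$, restrict to $\cM$, and conclude via Theorem~\ref{NEP}(iii). The only cosmetic difference is in how the modular hypothesis of Proposition~\ref{AD95} is secured: the paper simply passes to the standard form of $\cM$ (where modular automorphisms are automatically spatially implemented), whereas you argue by amplifying so that the commutant becomes properly infinite; both devices work, and the paper's is slightly more direct.
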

\begin{proof}
Suppose that $\cM$ acts on a Hilbert space $\cH$ in a standard form 
without loss of generality.
Then, for every faithful normal state $\varphi$ on $\cM$ and for all $t\in\mathbb{R}$,
the modular automorphism $\si^\varphi_t$ is induced by a unitary operator in
$\B(\cH)$. By Proposition \ref{AD95}, there exists
a net $\{T_\alpha\}$ of unital normal CP maps from $\B(\cH)$ to $\cM$,
such that $T_\alpha(M)\to^{uw}M$ for all $M\in\cM$.
Let $\cI$ be a CP instrument for $(\cM,S)$. 
For every $\alpha$, let $\widetilde{\cI_\alpha}$ be a CP instrument for $(\B(\cH),S)$
defined by
\begin{equation}
\widetilde{\cI_\alpha}(X,\De)=\cI(T_\alpha(X),\De)
\end{equation}
for $\De\in\cF$ and $X\in\cB(\cH)$, 
and let $\cI_\alpha$ be a CP instrument for $(\cM,S)$ obtained by restricting 
$\widetilde{\cI_\alpha}^*$ to $\cM$,
i.e., $\cI_\alpha(M,\De)=\widetilde{\cI_\alpha}(M,\De)$
 for all $\De\in\cF$ and $M\in\cM$.  
Then, it follows from Theorem \ref{NEP} (iii) that $\{\cI_\alpha\}$ 
is a net of CP instruments for $(\cM,S)$ with  the NEP, and it is easy to see that 
$\{\cI_\alpha\}$ ultraweakly converges to $\cI$.
Thus,  the assertion follows. 
\end{proof}

\begin{definition}[Approximately normal extension property]
Let $\cM$ be a von Neumann algebra on a Hilbert space $\cH$,
and $(S,\cF)$ a measurable space.
A CP instrument $\cI$ for $(\cM,S)$ has the approximately normal extension property (ANEP)
if there exists a net $\{\cI_\alpha\}$ of CP instruments with  the NEP such that
$\cI_\alpha\to^{uw}\cI$. We denote by
$\mathrm{CPInst}_{\mathrm{AN}}(\cM,S)$ the set of CP instruments for $(\cM,S)$ with ANEP.  Note that $\mathrm{CPInst}_{\mathrm{AN}}(\cM,S)$ is the ultraweak
closure of $\mathrm{CPInst}_{\mathrm{NE}}(\cM,S)$.
\end{definition}
By definitions of  the NEP and the ANEP, we have
\begin{equation}
\mathrm{CPInst}_{\mathrm{NE}}(\cM,S)\subset\mathrm{CPInst}_{\mathrm{AN}}(\cM,S)
\subset\mathrm{CPInst}(\cM,S).
\end{equation}
By Proposition \ref{NCE} and Theorem \ref{NOP}, more strict relations
among these three sets for two classes of von Neumann algebras are summarized as the following theorem.
\begin{theorem} \label{NEPANEP}
Let $\cM$ be a von Neumann algebra on a Hilbert space $\cH$,
and $(S,\cF)$ a measurable space.
The following statements holds:

(i) $\mathrm{CPInst}_{\mathrm{NE}}(\cM,S)=\mathrm{CPInst}_{\mathrm{AN}}(\cM,S)
=\mathrm{CPInst}(\cM,S)$ if $\cM$ is atomic.

(ii) $\mathrm{CPInst}_{\mathrm{AN}}(\cM,S)=\mathrm{CPInst}(\cM,S)$ if
$\cM$ is injective.
\end{theorem}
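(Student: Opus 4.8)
The plan is to obtain both statements as direct corollaries of Theorems~\ref{NCE} and \ref{NOP} established above, together with the trivial inclusions
\begin{equation}
\mathrm{CPInst}_{\mathrm{NE}}(\cM,S)\subset\mathrm{CPInst}_{\mathrm{AN}}(\cM,S)\subset\mathrm{CPInst}(\cM,S)
\end{equation}
already recorded just before the statement; essentially no new argument beyond bookkeeping should be required.

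For part (i), I would argue as follows. Theorem~\ref{NCE} asserts that when $\cM$ is atomic every CP instrument for $(\cM,S)$ has the NEP, which is precisely the reverse inclusion $\mathrm{CPInst}(\cM,S)\subset\mathrm{CPInst}_{\mathrm{NE}}(\cM,S)$. Chaining this with the displayed inclusions forces all three sets to coincide. I would also point out that this holds for an arbitrary measurable space $(S,\cF)$, since Theorem~\ref{NCE} carries no restriction on $(S,\cF)$.

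For part (ii), I would invoke Theorem~\ref{NOP}: for injective $\cM$ every CP instrument $\cI$ for $(\cM,S)$ is the ultraweak limit of a net of CP instruments with the NEP, and hence lies in $\mathrm{CPInst}_{\mathrm{AN}}(\cM,S)$ by the very definition of the ANEP. This yields $\mathrm{CPInst}(\cM,S)\subset\mathrm{CPInst}_{\mathrm{AN}}(\cM,S)$, and the opposite inclusion is part of the displayed chain, so equality follows. It is worth recording that, by the definition of the ANEP, $\mathrm{CPInst}_{\mathrm{AN}}(\cM,S)$ is exactly the ultraweak closure of $\mathrm{CPInst}_{\mathrm{NE}}(\cM,S)$, so part (ii) may equivalently be phrased as ultraweak density of the CP instruments with the NEP inside all CP instruments.

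Since the genuine content is carried entirely by Theorems~\ref{NCE} and \ref{NOP}, there is no real obstacle internal to this proof; the only point deserving a moment's attention is that the approximating net supplied by Theorem~\ref{NOP} must consist of \emph{completely} positive instruments with the NEP, which is guaranteed by Theorem~\ref{NEP}(iii) applied to each $\widetilde{\cI_\alpha}(X,\De)=\cI(T_\alpha(X),\De)$ on $(\B(\cH),S)$. Once that verification is in hand, the present theorem is immediate.
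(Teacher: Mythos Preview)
Your proposal is correct and matches the paper's approach exactly: the paper presents Theorem~\ref{NEPANEP} as an immediate summary of Theorem~\ref{NCE} and Theorem~\ref{NOP} together with the displayed chain of inclusions, and gives no separate proof beyond the sentence ``By Proposition~\ref{NCE} and Theorem~\ref{NOP}\ldots''. Your closing remark about the complete positivity of the approximating instruments is already handled inside the proof of Theorem~\ref{NOP}, so it is redundant here, but harmless.
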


We believe that there is a (non-injective) von Neumann algebra $\cM$ such that
$\mathrm{CPInst}_{\mathrm{AN}}(\cM,S)\subsetneq\mathrm{CPInst}(\cM,S)$,
though we are not aware of such an example up to now.
In the next section, we will give examples of CP instruments without  the NEP to show 
the existence of (non-atomic but injective)  von Neumann algebras $\cM$ such that
$\mathrm{CPInst}_{\mathrm{NE}}(\cM,S)\subsetneq\mathrm{CPInst}_{\mathrm{AN}}(\cM,S)$ .

By the way, we have the following stronger results than Theorem \ref{NEPANEP}.
$\mathcal{M}$ is atomic if and only if every CP instrument ${\cI} $ for $(\mathcal{M},S)$ has the 
following property:
There exists a CP instrument $\widetilde{{\cI} }$ for $(\B(\mathcal{H}),S)$
such that $\widetilde{\cI}(M,\De)=\cI(M,\De)$
for all $\De\in\mathcal{F}$ and $M\in\mathcal{M}$
and that $\widetilde{\cI}(X,\De)\in\mathcal{M}$ for all $\De\in\mathcal{F}$
and $X\in \B(\mathcal{H})$.
Similarly, by Proposition \ref{AD95} $\mathcal{M}$ is injective if and only if
for every CP instrument $\cI$ for $(\mathcal{M},S)$
there exists a a net of CP instruments $\{\cI_\alpha\}$ for $(\mathcal{M},S)$
such that $\cI_\alpha\to^{uw}{\cI}$
and that each $\cI_\alpha$ has the above property.
These are mathematically meaningful results but their physical 
significance is not clear to the best of our knowledge.

From a physical point of view, Theorem \ref{NEPANEP} (ii) will provide a satisfactory basis for
 measurement theory of local quantum physics.
A von Neumann algebra $\cM$ is said to be
approximately finite dimensional (AFD) if there exists an increasing net $\{\cM_\alpha\}$
of finite-dimensional von Neumann subalgebras of $\cM$ such that
\begin{equation}
\cM=\overline{\bigcup_\alpha \cM_\alpha}^{uw}.
\end{equation}
It is known that von Neumann algebras describing local observables in quantum field theory
are AFD and separable (i.e.,  with a separable predual) 
under very general postulates, e.g., the Wightman axioms, nuclearity, and asymptotic scale
invariance  \cite{BDF87}. In addition,
a separable von Neumann algebra is injective if and only if it is AFD.
This famous result is established by Connes, Wassermann, Haagerup, Popa and other researchers
\cite{Con94, Tak79}. 
Therefore, the relation
$\mathrm{CPInst}_{\mathrm{AN}}(\cM,S)=\mathrm{CPInst}(\cM,S)$
holds for von Neumann algebras $\cM$ describing observable algebras of most of physically relevant systems;
in those situations every CP instrument on the observable algebra can be considered as a realizable measurement within arbitrarily given error limit $\epsilon>0$, i.e., for every CP instrument $\cI$
and $\phi_j\in\cS(\cM)$, $M_j\in\cM$, and $\De_j\in\cF$ with $j=1,\ldots,n$ 
there exists a measuring process $\bbM$ such that
\beq
|\braket{\phi_j,\cI(M_j,\De_j)}-\braket{\phi_j,\cI_{\bbM}(M_j,\De_j)}|<\epsilon
\eeq
for all $j=1, \ldots, n$.
As above,  accepting that local algebras are AFD and separable, 
we conclude that all CP instruments defined on a local algebra are physically 
realizable within arbitrary error limit.

\section{Existence of a Family of Posterior States and Its Consequences}
\label{se:existence}

In this section, we discuss the existence of a family of posterior states for a given instrument
and normal state. 
We follow notations used in the beginning of section \ref{se:completely}.
In each instance of measurement using a measuring apparatus $\bA(\bx)$
the output variable $\bx$ is assumed to take a value $\bx=s$ in a measurable 
space $(S,\cF)$ independent from how accurately the output variable $\bx$ is read out
by the observer.
Suppose that the measured system $\bS$ is in a state $\rh\in\SM$ just before the measurement
and the measurement leads to the output value $\bx=s$. 
Let $\rh_{\{\bx=s\}}$ be the state just after the measurement.
How is the state $\rh_{\{\bx=s\}}$ determined by the instrument $\cI$ of $\bA(\bx)$?
If $\mathrm{Pr}\{\bx\in\{s\}\|\rh\}>0$, the state $\rh_{\{\bx=s\}}$ should be given by the relation
\begin{equation}\label{eq:posteriori}
\rh_{\{\bx=s\}}=\rh_{\{\bx\in\{s\}\}}=
\frac{\cI(\{s\})\rh}{\|\cI(\{s\})\rh\|}.
\end{equation}
It is, however, impossible to apply this method generally to an arbitrary measuring
apparatus $\bA(\bx)$, since the output probability $\mathrm{Pr}\{\bx\in\De\|\rh\}$ is often
assumed to be continuously distributed.
To resolve this difficulty, being inspired by the concept of conditional probability
in classical probability theory, the concept of a family of posterior states
for any instruments and any normal states was introduced in Refs.~\onlinecite{85CA,85CC}.
Let $\bA(\by)$ be an arbitrary apparatus described by an instrument $\cI'$ for $(\cM,S')$
with measurable space $(S',\cF')$.
Suppose that a measurement using the apparatus $\bA(\bx)$ in a state $\rh$ is
immediately followed by another measurement using the apparatus $\bA(\by)$.
Then, we have the joint probability distribution of $\bx$ and $\by$ defined by
$\Pr\{\by\in\De',\bx\in\De\|\rh\}=\|\cI'(\De')\cI(\De)\rh\|$, where $\De\in\cF$
and $\De'\in\cF'$.
According to classical probability theory, the conditional probability distribution 
$\Pr\{\by\in\De|\bx=s\|\rh\}$
of the output variable $\by$ given the value $\bx=s$ of the output variable $\bx$
is defined by
\beq
\Pr\{\by\in\De',\bx\in\De\|\rh\}=\int_{\De}\Pr\{\by\in\De'|\bx=s\|\rh\}
d\Pr\{\bx\in s\|\rh\}.
\eeq
If the output variable $\bx$ takes the value $\bx=s$, the second measurement must be carried out
on the state $\rh_{\{\bx=s\}}$ and hence the probability distribution of the output 
$\by$ should satisfy the relation
\beql{posterior-conti}
\Pr\{\by\in\De'|\bx=s\|\rh\}=\Pr\{\by\in\De'\|\rh_{\{\bx=s\}}\}.
\eeq
A  family of state $\{\rh_{\{\bx=s\}}\}_{s\in S}$ satisfying
 the above condition 
for any instrument $\cI'$ is called a 
family of posterior states for $(\cI,\rh)$.  
If  $\mathrm{Pr}\{\bx\in\{s\}\|\rh\}>0$, we have
\beq
\Pr\{\by\in\De'|\bx=s\|\rh\}=\Pr\left\{\by\in\De'\left\|\frac{\cI(\{s\})\rh}{\|\cI(\{s\})\rh\|}\right.\right\}
\eeq
and hence \Eq{posterior-conti} generalizes \Eq{posteriori} to the 
continuous case.

In quantum mechanical systems, described by a type I factor, there always exists a 
family of posterior states for any instruments and normal states (\Cite{85CA},
Theorem 4.5).  In contrast to this case, it is already established also in \Cite{85CA}
that its existence for any instruments and normal states is not always guaranteed 
in general quantum systems.  
This has been derived from an interesting conflict between the weak repeatability 
for continuous observables and the existence of a family of posterior states.
In this section, we study an interesting connection between the concept of 
posterior states and the NEP introduced in the previous section, and prove 
that for any CP instrument the NEP is equivalent to the existence of a strongly 
measurable family of posterior states for every normal state.
From this result, we shall also obtain a condition for general instrument to have 
a family of posterior states for every normal state (see Corollary \ref{UNPM}).
On the other hand, we shall give two examples of CP instruments without the NEP,
which arise from weakly repeatable instruments for continuous observables in
a commutative (type I) von Neumann algebra and a type II$_1$ factor.
Thus, we shall conclude that 
in the general case there exists a weakly repeatable CP instrument for a
continuous observable that does not have the corresponding measuring process, 
whereas for separable type I factors every CP instrument has the corresponding measuring 
process but no weakly repeatable instrument exists for continuous observables.

Let $\cM$ be a von Neumann algebra on a Hilbert space $\cH$ and 
$(S,\cF)$ a measurable space.

\begin{definition}
Let $\mu$ be a positive finite measure on $(S,\cF)$.

(i) A family $\{\rh_s\}_{s\in S}$ of (not necessarily normal)
positive linear functionals on $\cM$ is said to be weakly$^*$ $\cF$-measurable 
if the function $s\mapsto\bracket{\rh_s,M}$ is $\cF$-measurable for all $M\in\cM$.

(ii) A family $\{\rh_s\}_{s\in S}$ of (not necessarily normal)
positive linear functionals on $\cM$ is said to be weakly$^*$ $\mu$-measurable 
if the function $s\mapsto\bracket{\rh_s,M}$ is $\mu$-measurable for all $M\in\cM$.

(iii) A family $\{\rh_s\}_{s\in S}$
of normal positive linear functionals
on $\cM$ is said to be strongly ($\cF$-)measurable if
there exists a sequence $\{F_n\}$ of $\cM_\ast$-valued simple functions on $S$ such that
$\lim_{n}\|\rh_s-F_n(s)\|=0$ for all $s\in S$.
\end{definition}

\begin{definition}[Disintegrations and families of posterior states]
Let $\cI$ be an instrument for $(\cM,S)$ and $\rh$ a normal state on $\cM$.
A weakly$^*$ $\Ir$-measurable family $\{\rh_s\}_{s\in S}$ 
of (not necessarily normal) states on $\cM$ is called a disintegration 
with respect to $(\cI,\rh)$ if the relation 
\begin{equation}
\bracket{\cI(\De)\rh,M}=\int_{\De} \bracket{\rh_s,M}d\|\cI(s)\rh\|.
\end{equation}
holds for all $M\in\cM$ and $\De\in\cF$.
A disintegration $\{\rh_s\}_{s\in S}$ with respect to $(\cI,\rh)$
is said to be proper if it satisfies that for any positive $M\in\cM$, 
if $\braket{\rh,\cI(M,S)}=0$ then $\bracket{\rh_s,M}=0$ for all $s\in S$.
A disintegration $\{\rh_s\}_{s\in S}$ with respect to $(\cI,\rh)$
is called a family of posterior states with respect to $(\cI,\rh)$ if
$\rh_s$ is a normal state, i.e., $\rh_s\in\SM$, for all $s\in S$.
\end{definition}

A disintegration $\{\rh_s\}_{s\in S}$ with respect to $(\cI,\rh)$ 
is unique in the following sense:
If $\{\rh'\}_{s\in S}$ is another disintegration with respect to $(\cI,\rh)$,
then $\bracket{\rh_s,M}=\bracket{\rh'_s,M}$ for
$\Ir$-a.e. $s\in S$ and all $M\in\cM$.

From a measurement theoretical point of view, a family $\{\rh_s\}_{s\in S}$ of 
posterior states with respect to $(\cI,\rh)$ is naturally required to 
satisfy the following two conditions.

(i) (Uniqueness) The state $\rh_s$ is uniquely determined by the instrument $\cI$
and the input state $\rh$ with probability one, 
in the sense that if  $\{\rh'_s\}_{s\in S}$ is another family of posterior states,
then we have $\rh_s=\rh'_s$ for $\Ir$-\ae $s\in S$.

(ii) (Integrability) For any $\De\in\cF$, the state $\cI(\De)\rh/\|\cI(\De)\rh\|$
after the measurement in any state $\rh\in\SM$ conditional upon the 
outcome event $\bx\in\De$ is the mixture of all state $\rh_s$ 
with the conditional probability distribution $d\|\cI(s)\rh\|/\|\cI(\De)\rh\|$, i.e., 
the Bochner integral formula
\beq
\frac{\cI(\De)\rh}{\|\cI(\De)\rh\|}=\int_{\De}\rh_s\,
\frac{d\|\cI(s)\rh\|}{\|\cI(\De)\rh\|}
\eeq 
holds for all $\De\in\cF$.

If a family $\{\rh_s\}_{s\in S}$ of posterior states
with respect to $(\cI,\rh)$ is strongly $\cF$-measurable,
then the $\SM$-valued function $s\mapsto \rh_s$ is Bochner integrable
with respect to every probability measure on $(S,\cF)$.  
In addition, for two strongly $\cF$-measurable families
$\{\rh_s\}_{s\in S}$, $\{\rh'_s\}_{s\in S}$ of posterior states
with respect to $(\cI,\rh)$, it holds that
$\rh_s=\rh'_s$ for $\Ir$-a.e. $s\in S$.
Thus, a family of posterior states $\{\rh_s\}_{s\in S}$  
satisfies the above two conditions, (i) and (ii),
if and only if it is strongly measurable.
Thus, the most physically relevant concept to describe the state 
after the measurement conditional upon the value of the output variable
is considered to be defined as a strongly measurable family of posterior states.

The following results were obtained in the previous investigations.

\begin{theorem}[\protect{\Cite{85CC}, Theorem 4.3}]
Let $\cM$ be a von Neumann algebra on a Hilbert space $\cH$
and $(S,\cF)$ a measurable space.
For any instrument $\cI$ for $(\cM,S)$ and normal state $\rh$ on $\cM$,
a proper disintegration $\{\rh_s\}_{s\in S}$
with respect to $(\cI,\rh)$ always exists.
\end{theorem}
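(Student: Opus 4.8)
The plan is to strip the operator-algebraic content down to a classical disintegration problem and then invoke a lifting of $L^{\infty}$. Fix the instrument $\cI$ for $(\cM,S)$ and the normal state $\rh$, and put $\nu=\Ir$, the probability measure on $(S,\cF)$ given by $\nu(\De)=\|\cI(\De)\rh\|=\braket{\rh,\cI(1,\De)}$ (so $\nu(S)=\|\rh\|=1$). For each $M\in\cM$ the map $\De\mapsto\mu_M(\De):=\braket{\rh,\cI(M,\De)}=\braket{\cI(\De)\rh,M}$ is a countably additive finite (complex) measure. It is absolutely continuous with respect to $\nu$: if $\nu(\De)=0$ then $\cI(\De)\rh=0$ in $\cM_{\ast,+}$, hence $\mu_M(\De)=0$; and for $M\ge0$ the positivity of $\cI(\De)\rh$ gives $0\le\mu_M(\De)\le\|M\|\,\nu(\De)$. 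So by Radon--Nikodym there is a well-defined map $\Lambda\colon\cM\to L^{\infty}(S,\nu)$, $\Lambda(M)=[d\mu_M/d\nu]$, which, by uniqueness of Radon--Nikodym derivatives, is linear, positive ($M\ge0\Rightarrow\Lambda(M)\ge0$), unital ($\mu_1=\nu$, so $\Lambda(1)=1$), and bounded ($\|\Lambda(M)\|_{\infty}\le\|M\|$ on self-adjoint $M$, hence bounded in general).

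Next I would pass, if necessary, to the $\nu$-completion of $\cF$ so that $(S,\cF,\nu)$ becomes a complete probability space, and use the Ionescu-Tulcea lifting theorem: there is a lifting $\rho\colon L^{\infty}(S,\nu)\to M^{\infty}(S,\nu)$, i.e. a positive unital linear (in fact multiplicative) right inverse of the quotient map, so $[\rho(g)]=g$ for all $g$ and in particular $\rho([0])=0$ as a genuine function. Define, for each $s\in S$, a functional $\rh_s$ on $\cM$ by $\braket{\rh_s,M}:=\rho(\Lambda(M))(s)$. Since $\rho\circ\Lambda$ is linear, unital and positive, each $\rh_s$ is a (not necessarily normal) state on $\cM$; since $s\mapsto\rho(\Lambda(M))(s)$ is a genuine, everywhere defined, $\nu$-measurable function for each $M$, the family $\{\rh_s\}_{s\in S}$ is weakly$^{*}$ $\Ir$-measurable. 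The disintegration identity is then immediate, $\rho(\Lambda(M))$ being a $\nu$-representative of $\Lambda(M)$:
\begin{equation}
\braket{\cI(\De)\rh,M}=\mu_M(\De)=\int_{\De}\Lambda(M)\,d\nu=\int_{\De}\rho(\Lambda(M))\,d\nu=\int_{\De}\braket{\rh_s,M}\,d\|\cI(s)\rh\|
\end{equation}
for all $M\in\cM$ and $\De\in\cF$. Properness comes for free: if $M\in\cM$ is positive and $\braket{\rh,\cI(M,S)}=\mu_M(S)=0$, then since $\mu_M\ge0$ we get $\mu_M=0$, so $\Lambda(M)=[0]$, so $\rho(\Lambda(M))=\rho([0])=0$ identically, i.e. $\braket{\rh_s,M}=0$ for every $s\in S$; and this holds simultaneously for all such $M$ precisely because a lifting delivers everywhere defined representatives rather than merely a.e.-defined ones.

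The one genuinely non-elementary ingredient, and the step I expect to be the main obstacle, is the existence of the lifting $\rho$: without a separability hypothesis on $\cM$ one cannot simply pick a representative of $\Lambda(M)$ for each $M$ and hope that $M\mapsto\rho(\Lambda(M))(s)$ is linear at a fixed $s$, so a lifting of the complete probability space is essential; one must also check that completing $\cF$ is harmless, which it is, since the disintegration only requires $\Ir$-measurability. If $\cM_{\ast}$ is separable, the lifting can be bypassed: choose representatives on a countable self-adjoint generating set, discard a single $\nu$-null set so that all rational linear combinations and order relations hold everywhere on its complement (setting $\rh_s$ equal to a fixed state on the exceptional null set), and extend by norm-continuity; but the lifting argument disposes of the general case uniformly.
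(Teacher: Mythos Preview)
The paper does not include a proof of this theorem; it is stated with a citation to \Cite{85CC}, Theorem~4.3, and no argument is given here. Your proof is correct and is essentially the same approach as in the cited reference: form the unital positive linear map $\Lambda:\cM\to L^{\infty}(S,\nu)$ by Radon--Nikodym differentiation of the measures $\mu_M=\braket{\cI(\cdot)\rh,M}$ against $\nu=\Ir$, and then compose with a lifting $\rho$ of $L^{\infty}(S,\nu)$ (supplied by the Ionescu-Tulcea theorem on the $\nu$-completion) to obtain genuine pointwise states $\rh_s=\rho(\Lambda(\cdot))(s)$. Properness falls out exactly as you say, because a lifting sends $[0]$ to the identically zero function rather than merely an a.e.\ zero representative. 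Your remark that passing to the $\nu$-completion is harmless is also correct: the definition of a disintegration asks only for weak$^{*}$ $\Ir$-measurability, i.e.\ $\nu$-measurability, which is precisely measurability with respect to the completed $\sigma$-algebra.
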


\begin{theorem}[\protect{\Cite{85CA}}, Theorem 4.5]
Let $\cH$ be a Hilbert space and  $(S,\cF)$ a measurable space.
For any CP instrument $\cI$ for $(\BH,S)$ and normal state $\rh$ on $\BH$,
a strongly measurable family $\{\rh_s\}_{s\in S}$ of posterior states 
with respect to $(\cI,\rh)$ always exists.
\end{theorem}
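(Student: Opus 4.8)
The plan is to reduce to the dilation data already available and then disintegrate a vector state of the dilation space over the abelian algebra that carries the measurement outcome. First I would apply Proposition~\ref{Naimark-Ozawa} to $\cI$ (with $\cM=\B(\cH)$): this yields a Hilbert space $\cK$, a normal faithful representation $E\colon L^{\infty}(S,\cI)\to\B(\cK)$, a nondegenerate normal representation $\pi\colon\B(\cH)\to\B(\cK)$, and an isometry $V\colon\cH\to\cK$ with $\cI(X,\De)=V^{*}E([\chi_{\De}])\pi(X)V$ and $E([\chi_{\De}])\pi(X)=\pi(X)E([\chi_{\De}])$. By Corollary~\ref{vNhom2} a nondegenerate normal representation of $\B(\cH)$ is, up to a unitary $W\colon\cK\to\cH\otimes\cL$, the ampliation $X\mapsto X\otimes 1_{\cL}$; since $E([\chi_{\De}])$ commutes with $\pi(\B(\cH))$, the operator $WE([\chi_{\De}])W^{*}$ lies in $(\B(\cH)\otimes 1_{\cL})'=1_{\cH}\otimes\B(\cL)$, so there is a normal faithful representation $\widetilde E\colon L^{\infty}(S,\cI)\to\B(\cL)$ with $WE([\chi_{\De}])W^{*}=1_{\cH}\otimes\widetilde E([\chi_{\De}])$. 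Putting $\widehat V=WV\colon\cH\to\cH\otimes\cL$ (an isometry) gives the clean form $\cI(X,\De)=\widehat V^{*}(X\otimes\widetilde E([\chi_{\De}]))\widehat V$.

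Next, given a normal state $\rh$ on $\B(\cH)$, I would take a purifying unit vector $\Xi\in\cH\otimes\cH'$ with $\rh(X)=\langle\Xi|(X\otimes 1_{\cH'})\Xi\rangle$ and set $\Phi=(\widehat V\otimes 1_{\cH'})\Xi\in\cH\otimes\cL\otimes\cH'$, a unit vector. Using $(\widehat V^{*}A\widehat V)\otimes 1=(\widehat V\otimes 1)^{*}(A\otimes 1)(\widehat V\otimes 1)$ one gets $\langle\rh,\cI(X,\De)\rangle=\langle\Phi|(X\otimes\widetilde E([\chi_{\De}])\otimes 1_{\cH'})\Phi\rangle$ for all $X\in\B(\cH)$ and $\De\in\cF$. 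Now I would decompose $\cL$ as a direct integral $\cL=\int_{S}^{\oplus}\cL_{s}\,d\nu(s)$ relative to the abelian von Neumann algebra $\widetilde E(L^{\infty}(S,\cI))$, where $\nu$ is a probability measure in the measure class of $\cI$ (e.g.\ $\nu=\|\cI\vp\|$ for a faithful normal state $\vp$ on $\B(\cH)$, which exists because $\B(\cH)$ is $\si$-finite, hence $\cH$ is separable, so that $L^{\infty}(S,\nu)=L^{\infty}(S,\cI)$) and $\widetilde E([\chi_{\De}])$ acts as multiplication by $\chi_{\De}$. Correspondingly $\Phi=\int_{S}^{\oplus}\Phi_{s}\,d\nu(s)$ with $\Phi_{s}\in\cH\otimes\cL_{s}\otimes\cH'$ and $\int_{S}\|\Phi_{s}\|^{2}\,d\nu(s)=1$. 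I would then set $\rh_{s}(X)=\|\Phi_{s}\|^{-2}\langle\Phi_{s}|(X\otimes 1_{\cL_{s}}\otimes 1_{\cH'})\Phi_{s}\rangle$ for every $s$ with $\Phi_{s}\neq 0$, and $\rh_{s}=\rh_{0}$ (a fixed reference normal state) otherwise; each $\rh_{s}$ is then a normal state on $\B(\cH)$, being a unit vector state composed with the partial trace over $\cL_{s}\otimes\cH'$.

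It remains to verify the two defining properties. Substituting the direct-integral expressions yields $\langle\cI(\De)\rh,X\rangle=\int_{\De}\langle\Phi_{s}|(X\otimes 1_{\cL_{s}}\otimes 1_{\cH'})\Phi_{s}\rangle\,d\nu(s)$; taking $X=1$ gives $d\|\cI(s)\rh\|=\|\Phi_{s}\|^{2}\,d\nu(s)$, whence $\langle\cI(\De)\rh,X\rangle=\int_{\De}\langle\rh_{s},X\rangle\,d\|\cI(s)\rh\|$ for all $X$ and $\De$, which is exactly the disintegration relation. For strong measurability, the operator field $s\mapsto X\otimes 1_{\cL_{s}}\otimes 1_{\cH'}$ is decomposable, so $s\mapsto\langle\rh_{s},X\rangle$ is $\cF$-measurable for each fixed $X$; thus $\{\rh_{s}\}$ is weakly$^{*}$ measurable, and since $\cH$ is separable the predual $\B(\cH)_{*}$ is a separable Banach space, so by Pettis's measurability theorem $s\mapsto\rh_{s}$ is strongly measurable in the almost-everywhere sense. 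Redefining $\rh_{s}$ on the ($\|\cI\rh\|$-null) exceptional set to be $\rh_{0}$, and the approximating simple functions likewise on that set, produces a sequence of $\B(\cH)_{*}$-valued simple functions converging to $\rh_{s}$ for every $s\in S$, as the definition in the excerpt demands. Hence $\{\rh_{s}\}_{s\in S}$ is a strongly measurable family of posterior states for $(\cI,\rh)$.

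I expect the main obstacle to be the direct-integral step: choosing the base measure $\nu$ inside the measure class of $\cI$, realizing the (possibly non-standard) measurable space $(S,\cF)$ in a form to which the measurable-field machinery applies, and arranging that $\widetilde E([\chi_{\De}])$ genuinely becomes multiplication by $\chi_{\De}$; together with the passage from weak$^{*}$ to strong measurability, where separability of $\cH$ is essential. By contrast, normality of the posterior states — the delicate point for general von Neumann algebras — is automatic here, because the fibers $\cL_{s}$ supply honest vector states of $\B(\cH)$.
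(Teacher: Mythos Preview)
Your argument via dilation plus a direct-integral disintegration is sound in outline but takes a different route from the paper. The paper does not prove this theorem directly (it is quoted from \Cite{85CA}); however it supplies an independent proof: since $\B(\cH)$ is atomic, every CP instrument on it has the NEP (Theorem~\ref{NCE}), and then Proposition~\ref{Measurable} manufactures the posterior states by pushing $\rh$ through the predual map $(\widetilde{\Psi_\cI})_*$ and invoking the isomorphism $(\cM\ootimes L^\infty(S,\mu))_*\cong L^1(S,\mu,\cM_*)$ of Theorem~\ref{Sakai}; the resulting Bochner-integrable $\cM_*$-valued function is (after normalising and cleaning up a null set) the desired family $\{\rh_s\}$.

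Your construction is more explicit and geometrically transparent---each $\rh_s$ is visibly a vector state on a fibre of the dilation---but the direct-integral step is precisely the obstacle you flag: for an arbitrary measurable space $(S,\cF)$ the auxiliary space $\cL$ need not be separable (separability of $\cH$ alone does not force it, since $L^2(S,\nu)$ can be non-separable), and even when $\cL$ is separable the standard measurable-field machinery indexes the fibres by the Gelfand spectrum of $L^\infty(S,\nu)$ rather than by $S$ itself, so extra work is needed to land back on $S$. The paper's route through Sakai's theorem sidesteps both issues at once, because $L^1(S,\mu,\cM_*)$ is defined for any finite measure space and its elements are $S$-indexed from the start. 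Thus your approach is complete for standard Borel $(S,\cF)$ (and is likely close in spirit to the original argument in \Cite{85CA}), while the paper's argument handles the general case uniformly; in exchange, your version gives a concrete dilation-theoretic picture of the posterior states that the abstract predual argument does not provide.
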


The next example shows that not all CP instruments defined on injective von Neumann algebras have  the NEP, 
and is strongly related to Theorems \ref{NEPAFAPS} and \ref{WeakRepeatable}
below.

\begin{example}[\protect{\Cite{85CA}, pp.~292--293}] \label{ex85}
Let $m$ be Lebesgue measure on $[0,1]$ and $\cI$ a CP instrument for
$(L^{\infty}([0,1],m), [0,1])$ defined by $\cI(f,\De)=[\chi_\De] f$.
for all $\De\in\mathcal{B}([0,1])$ and $f\in L^{\infty}([0,1],m)$.
Let $\rh=m\in \cS_n(L^{\infty}([0,1],m))$, i.e., 
$\bracket{\rh,[f]}=\int_{0}^{1}f(x) dx$ for all $f\in M^{\infty}([0,1],m)$.
Then, 
there exists no family $\{\rh_x\}_{x\in [0,1]}$ of
posterior states with respect to $(\cI,\rh)$.
$L^{\infty}([0,1],m)$ is an injective (maximal abelian) von Neumann subalgebra
of the von Neumann algebra $\B(L^2([0,1],m))$
of bounded operators on a separable Hilbert space $L^2([0,1],m)$.
It is well-known that there is no normal conditional expectation of
$\B(L^2([0,1],m))$ onto $L^{\infty}([0,1],m)$.
\end{example}

By the above example and Theorem \ref{NEPANEP}, we have
\begin{equation}
\mathrm{CPInst}_{\textrm{NE}}(L^{\infty}([0,1],m),[0,1])
\subsetneq \mathrm{CPInst}(L^{\infty}([0,1],m),[0,1]).
\end{equation}

Let $(S,\cF,\mu)$ be a finite measure space.
Let $E$ be a Banach space.
Denote by $\cL^{1}(S,\mu,E)$ the space of Bochner $\mu$-integrable 
$E$-valued functions on $S$.
A function $f\in \cL^{1}(S,\mu,E)$ is called strongly 
$\mu$-negligible if $f(s)=0$ holds for $\mu$-\ae $s\in S$.
The quotient space of $\cL^{1}(S,\mu,E)$ modulo the strongly $\mu$-negligible functions, 
denoted by $L^{1}(S,\mu,E)$, is a Banach space with the $L^1$ norm defined by 
$\|[f]\|_{1}=\int_S \|f(s)\| d\mu(s)$.

Let $\cM$ and $\cN$ are von Neumann algebras on Hilbert spaces $\cH$ and $\cK$, respectively.
Then, the algebraic tensor product $\cM\aotimes\cN$ can be defined on the tensor product
Hilbert space $\cH\otimes\cK$.  The uniform norm closure of the *-algebra 
$\cM\aotimes\cN$ on $\cH\otimes\cK$ is *-isomorphic with the injective 
C*-tensor product $\cH\motimes \cK$ and the weak closure of  $\cM\aotimes\cN$
on $\cH\otimes\cK$ is *-isomorphic with the W*-tensor product $\cM\ootimes\cN$.
The predual of  $\cM\ootimes\cN$ is isometrically isomorphic to the 
Banach space tensor product $\cM_{*}\otimes_{\min^{*}}\cN_{*}$ of the preduals 
$\cM_{*}$ and $\cN_{*}$ with the adjoint cross norm to the injective C*-cross
norm $\|\cdot\|_{\min}$.  For the case where $\cN=L^{\infty}(S,\mu)$,  we have
$\cM\motimes L^{\infty}(S,\mu)=\cM\otimes_{\la}L^{\infty}(S,\mu)$,
where $\la$ stands for the least cross norm, and $L^{\infty}(S,\mu)_{*}=L^{1}(S,\mu)$.
Thus, we have $(\cM\ootimes L^{\infty}(S,\mu))_{*}\cong
\cM_{*}\otimes_{\ga}L^{1}(S,\mu)$, where $\ga$ stands for the greatest cross norm,
and by the Grothendieck theorem \cite{Gro55} we have 
$\cM_{*}\otimes_{\ga}L^{1}(S,\mu)\cong L^{1}(S,\mu,\cM_{*})$.
Thus, we have the following.

\begin{theorem}[\protect{\Cite{Sak71}}, Proposition 1.22.12]
\label{Sakai}
Let $\cM$ be a von Neumann algebra and $(S,\cF,\mu)$ a finite measure space.  
Then, the relation
\beql{Sakai}
\bracket{\ph,M\otimes [g]}=\int_{S}g(s)\bracket{f_{\ph}(s),M}\,d\mu(s),
\eeq
where $g\in M^{\infty}(S,\mu)$ and $M\in\cM$,
sets up an isometric isomorphism $\ph\mapsto [f_{\ph}]$
of $( \cM\ootimes L^{\infty}(S,\mu))_{*}$ onto $L^{1}(S,\mu,\cM_*)$.
\end{theorem}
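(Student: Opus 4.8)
The plan is to realise the asserted map as a composition of three canonical isometric identifications, exactly along the route sketched in the paragraph preceding the statement, and then to check the pairing formula \eq{Sakai} on elementary tensors. The first identification is Sakai's description of preduals of W*-tensor products (\Cite{Sak71}): the product functionals $\rh\otimes\si$, determined by $\bracket{\rh\otimes\si,M\otimes N}=\bracket{\rh,M}\bracket{\si,N}$, span a subspace of $(\cM\ootimes\cN)_{*}$ whose closure in the predual norm is all of $(\cM\ootimes\cN)_{*}$, and the norm they induce on $\cM_{*}\aotimes\cN_{*}$ is precisely $\|\cdot\|_{\min^{*}}$, the norm dual to the injective C*-cross norm $\|\cdot\|_{\min}$ on $\cM\aotimes\cN$; this yields $(\cM\ootimes\cN)_{*}\cong\cM_{*}\otimes_{\min^{*}}\cN_{*}$.

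Next I would specialise to $\cN=L^{\infty}(S,\mu)$. Since $L^{\infty}(S,\mu)$ is abelian, hence nuclear, the injective C*-cross norm on $\cM\aotimes L^{\infty}(S,\mu)$ coincides with the Banach-space least cross norm $\la$, so its dual cross norm on $\cM_{*}\aotimes L^{1}(S,\mu)$ is the greatest (projective) cross norm $\ga$; combined with $L^{\infty}(S,\mu)_{*}=L^{1}(S,\mu)$ this gives $(\cM\ootimes L^{\infty}(S,\mu))_{*}\cong\cM_{*}\otimes_{\ga}L^{1}(S,\mu)$. Then I invoke the classical theorem of Grothendieck \cite{Gro55}: for any Banach space $E$ the projective tensor product $L^{1}(S,\mu)\otimes_{\ga}E$ is isometrically isomorphic to the Bochner space $L^{1}(S,\mu,E)$ via $h\otimes e\mapsto(s\mapsto h(s)e)$. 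With $E=\cM_{*}$ this gives $\cM_{*}\otimes_{\ga}L^{1}(S,\mu)\cong L^{1}(S,\mu,\cM_{*})$, and composing the three isometries assigns to each $\ph\in(\cM\ootimes L^{\infty}(S,\mu))_{*}$ a class $[f_{\ph}]\in L^{1}(S,\mu,\cM_{*})$.

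It then remains to verify \eq{Sakai}. Tracing the identifications on an elementary functional $\ph=\rh\otimes[h]$ gives $f_{\ph}(s)=h(s)\rh$, so that $\bracket{\ph,M\otimes[g]}=\bracket{\rh,M}\int_{S}g(s)h(s)\,d\mu(s)=\int_{S}g(s)\bracket{f_{\ph}(s),M}\,d\mu(s)$ for all $M\in\cM$ and $g\in M^{\infty}(S,\mu)$. Both sides of \eq{Sakai} are linear in $M\otimes[g]$ over $\cM\aotimes L^{\infty}(S,\mu)$ and norm-continuous in $\ph$, and finite sums of elementary functionals $\rh\otimes[h]$ are norm-dense in $(\cM\ootimes L^{\infty}(S,\mu))_{*}$; since an element of $\cM\ootimes L^{\infty}(S,\mu)$ is determined by its action on a norm-dense set of normal functionals, \eq{Sakai} extends from elementary tensors to arbitrary $\ph$ and arbitrary $M\otimes[g]$, and hence to all of $\cM\ootimes L^{\infty}(S,\mu)$ by normality.

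I expect the only genuinely delicate ingredient to be the first step — that every normal functional on $\cM\ootimes\cN$ is a predual-norm limit of finite sums of product functionals, with the induced norm exactly dual to $\|\cdot\|_{\min}$ — which is the substance of Sakai's Proposition 1.22.12 and which I would simply cite. Granting that, the remaining steps are routine cross-norm bookkeeping together with Grothendieck's theorem, and the verification of \eq{Sakai} is a direct computation on elementary tensors followed by a density argument.
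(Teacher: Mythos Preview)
Your proposal is correct and follows exactly the route the paper itself takes: the paper does not give a formal proof of this theorem but instead sketches precisely your three-step argument in the paragraph immediately preceding the statement (predual of the W*-tensor product as $\cM_{*}\otimes_{\min^{*}}\cN_{*}$, reduction to the projective norm $\ga$ via nuclearity of $L^{\infty}(S,\mu)$, and Grothendieck's identification $\cM_{*}\otimes_{\ga}L^{1}(S,\mu)\cong L^{1}(S,\mu,\cM_{*})$), and then cites Sakai for the details. Your additional verification of the pairing formula \eq{Sakai} on elementary tensors together with the density argument is a natural completion of that sketch.
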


For further information about vector-valued integrals and tensor products of operator
algebras we refer the reader to \Cites{TT69,Sak71,Tak79}.  

\begin{proposition} \label{Measurable}
Let $\cM$ be a von Neumann algebra on a Hilbert space $\cH$
and $(S,\cF)$ a measurable space.
For every CP instrument $\cI$ for $(\cM,S)$ with  the NEP and normal state $\rh$
on $\cM$, a strongly measurable
family $\{\rh_s\}_{s\in S}$ of posterior states with respect to $(\cI,\rh)$ always exists.
\end{proposition}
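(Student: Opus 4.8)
\emph{Proof proposal.} The plan is to reduce the statement to the already established case of $\B(\cH)$ by exploiting the characterization of the NEP proved in Theorem~\ref{NEP}. Since $\cI$ has the NEP, Theorem~\ref{NEP}~(iii) yields a CP instrument $\widetilde{\cI}$ for $(\B(\cH),S)$ with $\widetilde{\cI}(M,\De)=\cI(M,\De)$ for all $M\in\cM$ and $\De\in\cF$. Next I would extend the given normal state $\rh$ on $\cM$ to a normal state $\widetilde{\rh}$ on $\B(\cH)$: writing $\rh(M)=\sum_n\langle\xi_n,M\xi_n\rangle$ with $(\xi_n)\subset\cH$ and $\sum_n\|\xi_n\|^2=1$, the same formula defines a normal state $\widetilde{\rh}$ on $\B(\cH)$ restricting to $\rh$ on $\cM$. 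Then I would apply the theorem of \Cite{85CA} quoted above to the pair $(\widetilde{\cI},\widetilde{\rh})$, obtaining a strongly measurable family $\{\widetilde{\rh}_s\}_{s\in S}$ of posterior states with respect to $(\widetilde{\cI},\widetilde{\rh})$, i.e.\ each $\widetilde{\rh}_s$ is a normal state on $\B(\cH)$ and
\begin{equation}
\braket{\widetilde{\cI}(\De)\widetilde{\rh},X}=\int_{\De}\braket{\widetilde{\rh}_s,X}\,d\|\widetilde{\cI}(s)\widetilde{\rh}\|,\qquad X\in\B(\cH),\ \De\in\cF.
\end{equation}

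The candidate family is then $\rh_s:=\widetilde{\rh}_s|_{\cM}$. I would check that each $\rh_s\in\SM$, since restriction to a von Neumann subalgebra preserves positivity and ultraweak continuity and $\rh_s(1)=\widetilde{\rh}_s(1)=1$ because $1\in\cM$. Strong measurability descends along restriction: if $\{\widetilde{F}_n\}$ is a sequence of $\B(\cH)_*$-valued simple functions with $\|\widetilde{\rh}_s-\widetilde{F}_n(s)\|\to 0$ for every $s\in S$, then $\{\widetilde{F}_n(\cdot)|_{\cM}\}$ is a sequence of $\cM_*$-valued simple functions and $\|\rh_s-\widetilde{F}_n(s)|_{\cM}\|\le\|\widetilde{\rh}_s-\widetilde{F}_n(s)\|\to 0$ for every $s$. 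I would also note that the outcome measures coincide, $\|\cI(s)\rh\|=\|\widetilde{\cI}(s)\widetilde{\rh}\|$ on $(S,\cF)$, since $\|\cI(\De)\rh\|=\braket{\rh,\cI(1,\De)}=\braket{\widetilde{\rh},\widetilde{\cI}(1,\De)}=\|\widetilde{\cI}(\De)\widetilde{\rh}\|$ using $1\in\cM$ and $\widetilde{\cI}(1,\De)=\cI(1,\De)$.

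With these facts in hand the disintegration identity for $(\cI,\rh)$ follows by transferring the one for $(\widetilde{\cI},\widetilde{\rh})$: for all $M\in\cM$ and $\De\in\cF$,
\begin{align}
\braket{\cI(\De)\rh,M}&=\braket{\rh,\cI(M,\De)}=\braket{\widetilde{\rh},\widetilde{\cI}(M,\De)}=\braket{\widetilde{\cI}(\De)\widetilde{\rh},M}\nonumber\\
&=\int_{\De}\braket{\widetilde{\rh}_s,M}\,d\|\widetilde{\cI}(s)\widetilde{\rh}\|=\int_{\De}\braket{\rh_s,M}\,d\|\cI(s)\rh\|,
\end{align}
which is exactly the relation defining a family of posterior states with respect to $(\cI,\rh)$ (weak$^*$ $\Ir$-measurability of $\{\rh_s\}$ being a consequence of strong measurability). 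Hence $\{\rh_s\}_{s\in S}$ is the required strongly measurable family.

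Since every step is a routine transfer along the restriction map $\B(\cH)_*\to\cM_*$, I do not expect a serious obstacle; the only points needing a little care are the existence of a normal extension $\widetilde{\rh}$ of $\rh$ to $\B(\cH)$ (handled by the vector-series representation of normal states) and the identification of the measures $\|\cI(s)\rh\|$ and $\|\widetilde{\cI}(s)\widetilde{\rh}\|$ (which rests only on $1\in\cM$). An alternative route would be to realize $\cI$ by a faithful measuring process $\M=(\cK,\si,U,E)$ via Theorem~\ref{NEP}~(iv) and build the posterior states from a disintegration of the normal faithful representation $\widetilde{E}$ of $L^{\infty}(S,\cI)$, but the reduction above is shorter and more self-contained.
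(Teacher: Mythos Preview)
Your proof is correct and takes a genuinely different route from the paper's own argument. The paper works directly with the NEP in its defining form: it takes the normal extension $\widetilde{\Psi_\cI}:\cM\ootimes L^\infty(S,\cI)\to\cM$, forms its predual, and invokes Theorem~\ref{Sakai} (the Sakai--Grothendieck isomorphism $(\cM\ootimes L^\infty(S,\nu))_*\cong L^1(S,\nu,\cM_*)$) to identify $(\Phi_\Gamma)_*\rh$ with a Bochner-integrable $\cM_*$-valued function $f_\rh$, which after correction on a $\nu$-null set becomes the desired family of posterior states. Your argument instead uses the equivalence proved in Theorem~\ref{NEP} to pass to a CP instrument on $\B(\cH)$, applies the already-quoted result from \Cite{85CA} for type~I factors, and then pushes everything down along the contractive restriction map $\B(\cH)_*\to\cM_*$.

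Your reduction is shorter and entirely self-contained relative to the results assembled in the paper. The paper's approach, by contrast, avoids appealing to the $\B(\cH)$ case as a black box and, more importantly, sets up exactly the machinery (the identification with $L^1(S,\mu,\cM_*)$) that is reused immediately afterwards in the proof of the converse implication (ii)$\Rightarrow$(i) of Theorem~\ref{NEPAFAPS}; so the two halves of that theorem are proved with a single toolkit. Your route would still leave that converse to be handled by the same Sakai-type argument.
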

\begin{proof}
Let $\cI$ be a CP instrument for $(\cM,S)$ with the NEP.
Let $\vp$ be a faithful normal state on $\cM$.
Let  $\mu$ and $\nu$ be the probability measures on $(S,\cF)$ defined by
$\mu=\|\cI\vp\|$ and $\nu=\|\cI\rh\|$.   Then, it is easy to see that $\nu\ll\mu$,
so that there is $\Ga\in\cF$ such that $\nu\equiv\mu_\Ga$, where
$\mu_\Ga(\De)=\mu(\De\cap\Ga)$  for all $\De\in\cF$
(\Cite{Hal51}, Theorem 47.~2), and hence $L^{\infty}(S,\nu)$ is naturally identified with 
the direct summand $L^{\infty}(\Ga,\mu)$ of $L^{\infty}(S,\mu)$ by
the correspondence $\De\mapsto \De\cap\Ga$.
By the NEP there exists a unital normal CP map 
$\Phi:\cM\ootimes L^{\infty}(S,\mu)\to\cM$ such that 
$\Phi(M\otimes [\chi_\De])=\cI(M,\De)$ for all $M\in\cM$ and $\De\in\cF$. 
Let $\Phi_{\Ga}$ be the unital normal CP map 
$\Phi_{\Ga}:\cM\ootimes L^{\infty}(S,\nu)\to\cM$ defined by
$\Phi_{\Ga}(M\otimes [\chi_\De])=\Phi(M\otimes [\chi_\Ga\chi_\De])$
for all $M\in\cM$ and $\De\in\cF$.
By the normality of $\Phi_{\Ga}$ there exists the predual map  
$(\Phi_{\Ga})_*:\cM_*\to(\cM\ootimes L^{\infty}(S,\nu))_{*}$ of $\Phi_{\Ga}$. 
Hence, we have $(\Phi_{\Ga})_*\rh\in (\cM\ootimes L^{\infty}(S,\nu))_{*}$.
Since $\|\cI(S\setminus \Ga)\rh\|=\nu(S\setminus\Ga)=0$, we have 
$\cI(\De\cap\Ga)\rh=\cI(\De)\rh$ for all $\De\in\cF$, and hence we have
$\bracket{(\Phi_{\Ga})_*\rh,M\otimes [\ch_\De]}=\bracket{\cI(\De)\rh,M}$
for all $\De\in\cF$ and $M\in\cM$.  
Thus, by Theorem \ref{Sakai} there exists $f_\rh\in \cL^{1}(S,\nu,\cM_*)$ such that
\beq
\bracket{\cI(\De)\rh,M}
=\int_{\De}\bracket{f_{\rh}(s),M}\,d\nu(s)
\eeq
for all $\De\in\cF$ and $M\in\cM$.  It follows that the function 
$s\mapsto\bracket{f_{\rh}(s),M}$ is the Radon-Nikodym derivative of the signed 
measure $\bracket{\cI\rh,M}$ with respect to $\nu$, i.e., 
$$
\bracket{f_{\rh}(s),M}=\frac{d\bracket{\cI\rh,M}}{d\nu}(s).
$$
By the properties of Radon-Nikodym derivative, 
there exists $N\in\cF$ such that $\nu(N)=0$ and that $f_{\rh}(s)\ge 0$ and 
$\bracket{f_{\rh}(s),1}=1$ hold for all $s\in S\backslash N$.  Now, we define 
$\rh_{s}= f_{\rh}(s)$ if $s\in S\backslash N$ and $\rh_{s}=\cI(S)\rh$ if $s\in N$.
Then, it is easy to see that $\{\rh_s\}$ is a strongly measurable family 
of posterior states for $(\cI,\rh)$.
\end{proof}

Proposition \ref{Measurable}
states that there exists a family of posterior states for $(\cI,\rh)$
for every normal state $\rh$ on $\cM$ if $\cI$ has  the NEP. 
The converse is also true.

\begin{theorem} \label{NEPAFAPS}
Let $\cM$ be a von Neumann algebra on a Hilbert space $\cH$,
$(S,\cF)$ a measurable space,
and $\cI$ a CP instrument for $(\cM,S)$.
The following conditions are equivalent:

(i) $\cI$ has  the NEP.

(ii) For every normal state $\rh$ on $\cM$,
there exists a strongly $\cF$-measurable family
$\{\rh_s\}_{s\in S}$ of posterior states with respect to $(\cI,\rh)$.

\end{theorem}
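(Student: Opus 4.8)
\ The implication (i)$\Rightarrow$(ii) is exactly Proposition \ref{Measurable}, so the task is to prove (ii)$\Rightarrow$(i): given a strongly $\cF$-measurable family of posterior states for every normal state, I would manufacture a unital normal CP extension $\widetilde{\Psi_\cI}$ of the binormal map $\Psi_\cI$ of Proposition \ref{th:linear}. To set the stage I would fix a faithful normal state $\vp$ on $\cM$, put $\mu=\|\cI\vp\|$ --- a probability measure with $L^{\infty}(S,\cI)=L^{\infty}(S,\mu)$ --- and record that faithfulness of $\vp$ forces $\|\cI\rh\|\ll\mu$ for every normal state $\rh$, with Radon--Nikodym derivative $g_\rh=d\|\cI\rh\|/d\mu\in L^{1}(S,\mu)_{+}$ and $\int_S g_\rh\,d\mu=1$.

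The heart of the matter is to construct the predual of the extension. For a normal state $\rh$, let $\{\rh_s\}_{s\in S}$ be the strongly $\cF$-measurable family of posterior states supplied by (ii) and set $f_\rh(s)=g_\rh(s)\,\rh_s$. Since $\{\rh_s\}$ is strongly measurable with $\|\rh_s\|=1$ and $g_\rh$ is scalar integrable, $f_\rh$ is a Bochner integrable $\cM_*$-valued function with $\int_S\|f_\rh(s)\|\,d\mu(s)=1$; by the isometric isomorphism of Theorem \ref{Sakai} it corresponds to a normal functional $\om_\rh\in(\cM\ootimes L^{\infty}(S,\mu))_*$ of norm one, and the disintegration identity gives
\[
\bracket{\om_\rh,M\otimes[\chi_\De]}=\int_\De\bracket{\rh_s,M}\,d\|\cI\rh\|=\bracket{\cI(\De)\rh,M}\qquad(M\in\cM,\ \De\in\cF).
\]
I would then extend $\rh\mapsto\om_\rh$ linearly to all of $\cM_*$: the $M\otimes[\chi_\De]$ ($M\in\cM$, $\De\in\cF$) span an ultraweakly dense subspace of $\cM\ootimes L^{\infty}(S,\mu)$, so a normal functional is determined by its values on them, and those values depend linearly on $\rh$ by linearity of $\cI(\De)$. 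This produces a bounded linear map $\Theta\colon\cM_*\to(\cM\ootimes L^{\infty}(S,\mu))_*$.

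Next I would set $\widetilde{\Psi_\cI}:=\Theta^{*}\colon\cM\ootimes L^{\infty}(S,\mu)\to\cM$. As the adjoint of a bounded map between preduals it is automatically normal, and evaluating on $M\otimes[\chi_\De]$ yields $\bracket{\rh,\widetilde{\Psi_\cI}(M\otimes[\chi_\De])}=\bracket{\cI(\De)\rh,M}=\bracket{\rh,\cI(M,\De)}$, so $\widetilde{\Psi_\cI}$ restricts to $\Psi_\cI$ on $\cM\motimes L^{\infty}(S,\cI)$ and, taking $M=1$ and $\De=S$, is unital. The only non-routine step is complete positivity: $\widetilde{\Psi_\cI}$ is so far only known to agree with the CP map $\Psi_\cI$ on the ultraweakly dense $C^{*}$-subalgebra $\cM\motimes L^{\infty}(S,\cI)$. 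I would close this gap with the Kaplansky density theorem applied at every amplification level: for each $n$, the self-adjoint part of the unit ball of the $n\times n$ matrix algebra over $\cM\motimes L^{\infty}(S,\cI)$ is ultraweakly dense in that over $\cM\ootimes L^{\infty}(S,\cI)$; the $n$-fold amplification of $\widetilde{\Psi_\cI}$ is normal and coincides there with the positive amplification of $\Psi_\cI$, hence is positive itself. Thus $\widetilde{\Psi_\cI}$ is completely positive; since it takes values in $\cM\subset\B(\cH)$ it is in particular the normal CP extension required by the NEP, so $\cI$ has the NEP.

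I expect the two genuinely delicate points to be (a) that $\rh\mapsto\om_\rh$ is well defined and linear before one may pass to the adjoint --- this rests on the (essentially unique) disintegration structure together with the ultraweak density of $\mathrm{span}\{M\otimes[\chi_\De]\}$, which also makes $f_\rh$ independent of the chosen version of $\{\rh_s\}$ --- and (b) upgrading positivity of $\widetilde{\Psi_\cI}$ to complete positivity via the matrix-level Kaplansky argument. The remaining ingredients, namely the absolute continuity $\|\cI\rh\|\ll\mu$, Bochner integrability of $f_\rh$, and the identification of $s\mapsto\bracket{f_\rh(s),M}$ with a Radon--Nikodym derivative, are routine and already appear in essence in the proof of Proposition \ref{Measurable}.
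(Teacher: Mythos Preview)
Your argument is correct and essentially matches the paper's: both show that for each normal state $\rh$ the functional $\rh\circ\Phi_\cI$ coincides on elementary tensors with the normal functional arising (via Theorem \ref{Sakai}) from $s\mapsto g_\rh(s)\rh_s\in L^1(S,\mu,\cM_*)$, whence $\Phi_\cI$ is ultraweakly continuous and extends normally. The paper phrases this directly as ultraweak continuity of $\Phi_\cI$ rather than assembling a predual map $\Theta$ and taking its adjoint, and it leaves the complete positivity of the extension implicit; your Kaplansky density argument at each matrix level is the standard way to make that step explicit.
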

\begin{proof}

(i)$\Rightarrow$(ii) This part has been given as Proposition \ref{Measurable}.

(ii)$\Rightarrow$(i) 
Let $\cI$ be a CP instrument  for $(\cM,S)$ satisfying assumption (ii)
and $\Ph_{\cI}:\cM\otimes_{\alg} L^{\infty}(S,\cI)\to\cM$ its linear extension.  
In order to show (i) holds, it suffices to show that for any normal state 
$\rh$ on $\cM$ the linear functional $\rh\circ\Ph_{\cI}$ is ultraweakly continuous on 
$\cM\otimes_{\alg} L^{\infty}(S,\cI)$. Then,  $\Ph_{\cI}$ is ultraweakly
continuous on $\cM\otimes_{\alg} L^{\infty}(S,\cI)$ and uniquely extends to
$\widetilde{\Ps}_{\cI}: \cM\ootimes L^{\infty}(S,\cI)\to\cM$
to conclude (i).
Let $\vp$ be a normal faithful state on $\cM$ and $\mu=\|\cI\vp\|$.
Then, $L^{\infty}(S,\cI)$ is *-isomorphic with 
$L^{\infty}(S,\mu)$
and $(\cM\ootimes L^{\infty}(S,\mu))_{*}$ is isomorphic with 
$L^{1}(S,\mu,\cM_{*})$ by Theorem \ref{Sakai}.
Let $\rh\in\SM$ and $\nu=\|\cI\rh\|$.  Then there exists a strongly $\cF$-measurable
family $\{\rh_s\}_{s\in S}$ of posterior states with respect to $(\cI,\rh)$.
By definition, $\rh_s\in L^1(S,\mu,\cM_*)$.
In addition, it holds that $\nu\ll\mu$.
There exists a non-negative 
$\cF$-measurable function $\lambda$ on $S$ such that
$\lambda=d\nu/d\mu$, $\mu$-a.e.
The family $\{\la(s)\rh_s\}_{s\in S}$ is strongly $\cF$-measurable, and 
$$
\int_S \| \la(s)\rh_s\|d\mu(s)=\int_S \la(s)\|\rh_s\|d\mu(s)
 = \int_S \dfrac{d\nu}{d\mu}(s)d\mu(s)=1.\nonumber
$$
Thus, $\{\la(s)\rh_s\}_{s\in S}\in L^1(S, \mu,{\cM_*})$. 
By Theorem \ref{Sakai} there exists an element 
$\ph\in(\cM\ootimes L^{\infty}(S,\mu))_{*}$
such that the function $f_{\ph}(s)=\la(s)\rh_s$ satisfies \Eq{Sakai}.
Let $M\otimes [\ch_{\De}]\in \cM\otimes_{\rm alg}L^{\infty}(S,\mu)$,
where $\De\in\cF$ and $M\in\cM$.
By the definition of a family of posterior states, we have
\beqas
\bracket{\rh\circ\Ph_{\cI},M\otimes[\ch_{\De}]}
&=&
\bracket{\rh,\cI(M,\De)}
=\int_{\De}\bracket{\rh_{s},M}d\nu(s)\\
&=&
\int_{\De}\bracket{\la(s)\rh_s,M}d\mu(s)
=\bracket{\ph,M\otimes[\ch_{\De}]}
\eeqas
Thus, $\rh\circ\Ph_{\cI}$ coincides with an ultraweakly continuous linear functional
on $\cM\ootimes L^{\infty}(S,\mu)$,
and the assertion follows.
\end{proof}

By the proof of Theorem \ref{NEPAFAPS}, 
we see that the following holds for (not necessarily CP) instruments
and gives a condition equivalent to the existence of a family of posterior states
for every normal state.

\begin{corollary}\label{UNPM}
Let $\cM$ be a von Neumann algebra on a Hilbert space $\cH$,
$(S,\cF)$ a measurable space,
and $\cI$ an instrument for $(\cM,S)$.
The following conditions are equivalent:

(i)
 There exists a unital normal positive map $\Phi_\cI:\cM \ootimes  
L^{\infty}(S,\cI)\to\cM$ such that
\begin{equation}
\cI(M,\De)=\Phi_\cI(M\otimes [\chi_\De])
\end{equation}
for all $M\in\cM$ and $\De\in\cF$.

(ii)
 For every normal state $\rh$ on $\cM$,
there exists a strongly measurable family
$\{\rh_s\}_{s\in S}$ of posterior states with respect to $(\cI,\rh)$.
\end{corollary}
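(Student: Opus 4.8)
The plan is to re-run the proofs of Proposition~\ref{Measurable} and Theorem~\ref{NEPAFAPS}, observing that complete positivity was used in neither in an essential way: only positivity, unitality, and normality are invoked, alongside Theorem~\ref{Sakai}, the Radon--Nikodym theorem, and strong measurability. Before starting I would record that, for a (not necessarily CP) instrument $\cI$, the linear extension $\Ph_\cI$ on $\cM\otimes_{\alg}L^{\infty}(S,\cI)$ is a bounded positive map into $\cM$ with $\Ph_\cI(M\otimes[\chi_\De])=\cI(M,\De)$, as already noted before Proposition~\ref{th:linear}.

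For (i)$\Rightarrow$(ii) I would imitate Proposition~\ref{Measurable}. Fix a normal state $\rh$ and a faithful normal state $\vp$ on $\cM$, put $\mu=\|\cI\vp\|$, $\nu=\|\cI\rh\|$, and choose $\Ga\in\cF$ with $\nu\equiv\mu_\Ga$, so that $L^{\infty}(S,\nu)$ is identified with the direct summand $L^{\infty}(\Ga,\mu)$ of $L^{\infty}(S,\mu)\cong L^{\infty}(S,\cI)$ via $[\chi_\De]\mapsto[\chi_{\De\cap\Ga}]$. Composing the (positive, normal) direct-summand embedding $\cM\ootimes L^{\infty}(S,\nu)\hookrightarrow\cM\ootimes L^{\infty}(S,\cI)$ with the given unital normal positive map $\Phi_\cI$ yields a normal positive map $\Phi_\Ga:\cM\ootimes L^{\infty}(S,\nu)\to\cM$ with $\langle\rh,\Phi_\Ga(M\otimes[\chi_\De])\rangle=\langle\cI(\De)\rh,M\rangle$ for all $M\in\cM$ and $\De\in\cF$ (using $\nu(S\setminus\Ga)=0$). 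Its predual map sends $\rh$ to a positive element of $(\cM\ootimes L^{\infty}(S,\nu))_*$, which by Theorem~\ref{Sakai} is represented by some $f_\rh\in\cL^{1}(S,\nu,\cM_*)$ satisfying $\langle\cI(\De)\rh,M\rangle=\int_\De\langle f_\rh(s),M\rangle\,d\nu(s)$. Thus $\langle f_\rh(s),M\rangle=(d\langle\cI\rh,M\rangle/d\nu)(s)$, and the properties of the Radon--Nikodym derivative together with positivity of $\Phi_\Ga$ give $f_\rh(s)\ge0$ and (taking $M=1$) $\langle f_\rh(s),1\rangle=1$ off a $\nu$-null set $N$. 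Setting $\rh_s=f_\rh(s)$ for $s\notin N$ and $\rh_s=\cI(S)\rh$ for $s\in N$ produces the required strongly measurable family of posterior states.

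For (ii)$\Rightarrow$(i) I would follow the proof of Theorem~\ref{NEPAFAPS}. Let $\Ph_\cI$ be the linear extension of $\cI$, fix a normal state $\rh$, put $\nu=\|\cI\rh\|\ll\mu$ and $\la=d\nu/d\mu$, and let $\{\rh_s\}_{s\in S}$ be a strongly measurable family of posterior states for $(\cI,\rh)$ supplied by (ii). Then $\{\la(s)\rh_s\}_{s\in S}$ is strongly measurable with $\int_S\|\la(s)\rh_s\|\,d\mu(s)=\int_S\la(s)\,d\mu(s)=1$, hence defines an element of $L^{1}(S,\mu,\cM_*)$ corresponding, by Theorem~\ref{Sakai}, to some $\ph\in(\cM\ootimes L^{\infty}(S,\mu))_*$. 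By the defining relation of a family of posterior states, $\langle\ph,M\otimes[\chi_\De]\rangle=\int_\De\langle\rh_s,M\rangle\,d\nu(s)=\langle\rh,\cI(M,\De)\rangle=\langle\rh\circ\Ph_\cI,M\otimes[\chi_\De]\rangle$, so $\rh\circ\Ph_\cI$ and $\ph$ agree on the simple tensors $M\otimes[\chi_\De]$, hence on all of $\cM\otimes_{\alg}L^{\infty}(S,\cI)$, and $\rh\circ\Ph_\cI$ is ultraweakly continuous. As $\rh$ ranges over the normal states it follows that $\Ph_\cI$ is ultraweakly continuous and extends uniquely to a normal map $\Phi_\cI:\cM\ootimes L^{\infty}(S,\cI)\to\cM$ with $\Phi_\cI(M\otimes[\chi_\De])=\cI(M,\De)$; unitality is immediate from $\cI(1,S)=1$. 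For positivity, for every normal state $\rh$ the functional $\rh\circ\Phi_\cI$ is the ultraweakly continuous extension $\ph$, whose Sakai representative $s\mapsto\la(s)\rh_s$ is pointwise positive; hence $\rh\circ\Phi_\cI\ge0$, and since the normal states separate $\cM$, $\Phi_\cI$ is positive.

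I do not expect a genuine obstacle, as the statement is just a weakening of Theorem~\ref{NEPAFAPS} with complete positivity replaced by positivity. The one thing to watch is precisely that positivity alone suffices at every step: in (i)$\Rightarrow$(ii) this is delivered by the identity $\langle f_\rh(s),M\rangle=(d\langle\cI\rh,M\rangle/d\nu)(s)$ with the measure $\De\mapsto\langle\cI(\De)\rh,M\rangle$ positive for $M\ge0$ (because $\cI^{*}(\De)$ is positive), and in (ii)$\Rightarrow$(i) by the pointwise positivity of $s\mapsto\la(s)\rh_s$, which makes the normal extension positive without recourse to any Stinespring-type dilation. A minor bookkeeping point is that the $\Ga$-cut-down to the direct summand $L^{\infty}(\Ga,\mu)$ preserves positivity because it is implemented by a normal $*$-homomorphism on the $L^{\infty}$ factor.
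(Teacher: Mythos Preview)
Your proposal is correct and follows exactly the route the paper intends: the paper gives no separate argument for Corollary~\ref{UNPM} but simply remarks that the proof of Theorem~\ref{NEPAFAPS} (together with Proposition~\ref{Measurable}) goes through verbatim for positive maps, and you have carried this out in detail. Your explicit verification of positivity of the normal extension in (ii)$\Rightarrow$(i) via the pointwise positivity of the Sakai representative $s\mapsto\lambda(s)\rho_s$ is a useful addition, since in the non-CP setting one cannot fall back on Proposition~\ref{th:linear} and a Kaplansky-density argument for the C*-tensor product.
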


In the following, we discuss the repeatability of instruments.
The repeatability hypothesis as a general principle has been abandoned,
but the class of instruments satisfying the repeatability is still worth reconsidering.
We shall give a condition for repeatable CP instruments to have  the NEP.

\begin{definition}[Repeatability, weak repeatability and discreteness] Let $\cM$ be
a von Neumann algebra and $(S,\cF)$ a measurable space.

(i) An  instrument $\cI$ for $(\cM,S)$ is said to be repeatable
if it satisfies $\cI(\De_1)\cI(\De_2)=
\cI(\De_1\cap\De_2)$ for all $\De_1,\De_2\in\cF$.

(ii) An instrument $\cI$ for $(\cM,S)$ is said to be weakly repeatable
if it satisfies $\cI(\cI(1,\De_2),\De_1)=
\cI(1,\De_1\cap\De_2)$ for all $\De_1,\De_2\in\cF$.

(iii) An instrument $\cI$ for $(\cM,S)$ is said to be discrete if
there exists a countable subset $S_0$ of $S$ and a map $T:S_0\mapsto P(\cM_{*})$
such that
\begin{equation}
\cI(\De)=\sum_{s\in\De}T(s)
\end{equation}
for all $\De\in\cF$.
\end{definition}

It is obvious that every repeatable instrument is weakly repeatable.
\begin{remark}
Suppose that $(S,\cF)$ is a standard Borel space.
In \Cite{85CA}, an instrument $\cI$ for $(\cM,S)$ is said to be discrete
if there exists a countable subset $S_0$ of $S$ such that $\cI(S\setminus S_0)=0$.
For any standard Borel spaces, two definitions of discreteness are equivalent.
The definition of discreteness in this paper is a natural generalization of that for measures
\cite{KF61}.
\end{remark}

Davies and Lewis  \cite{DL70} conjectured that every repeatable instrument 
for $\B(\cH)$ is discrete.
This conjecture was affirmatively resolved for CP instruments in \Cite{84QC}, Theorem 6.6,
and for the general case in \Cite{85CA}, Theorem 5.1 as follows.

\begin{theorem}[\protect{\Cite{85CC}, Theorems 5.1 and 5.2}]
\label{Ozawa-discreteness}
Let $\cM$ be a von Neumann algebra, $(S,\cF)$ be a standard Borel space.
and $\cI$ a weakly repeatable instrument for $(\cM,S)$.
If for a faithful normal state $\vp$ on $\cM$ there is a family $\{\rh_s\}_{s\in S}$ 
of posterior states with respect to $(\cI,\vp)$, then $\cI$ is discrete.
In particular, for a separable Hilbert space $\cH$, every weakly
repeatable instrument $\cI$ for $(\cB(\cH),S)$ is discrete.
\end{theorem}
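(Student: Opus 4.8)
The plan is to follow the strategy of Ref.~\onlinecite{85CA}: weak repeatability will be exploited to force the posterior states to be ``sharp'' relative to the effect-valued measure $E(\De):=\cI(1,\De)=\cI^*(\De)1$. Note that $E$ is a normalized positive-operator-valued measure on $(S,\cF)$, i.e.\ $0\le E(\De)\le 1$, $E(S)=1$, and $\De\mapsto\bracket{\rh,E(\De)}$ is countably additive for each $\rh\in\cM_*$; and that $\mu:=\|\cI\vp\|$ is the probability measure $\mu(\De)=\bracket{\vp,E(\De)}$. The conceptual crux is to rewrite weak repeatability as $\cI^*(\De_1)E(\De_2)=E(\De_1\cap\De_2)$ and pair it with $\vp$, obtaining
\[
\bracket{\cI(\De_1)\vp,E(\De_2)}=\bracket{\vp,E(\De_1\cap\De_2)}=\mu(\De_1\cap\De_2)\qquad(\De_1,\De_2\in\cF).
\]

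First I would substitute $M=E(\De_2)$ into the defining relation of the family of posterior states, $\bracket{\cI(\De_1)\vp,M}=\int_{\De_1}\bracket{\rh_s,M}\,d\mu(s)$, and compare it with the identity above: for each fixed $\De_2\in\cF$ we get $\int_{\De_1}\bracket{\rh_s,E(\De_2)}\,d\mu(s)=\int_{\De_1}\chi_{\De_2}(s)\,d\mu(s)$ for every $\De_1\in\cF$, whence $\bracket{\rh_s,E(\De_2)}=\chi_{\De_2}(s)$ for $\mu$-a.e.\ $s$. Since $(S,\cF)$ is standard Borel, $\cF$ is generated by a countable Boolean algebra $\cA$. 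Applying the last equality to every member of $\cA$ and discarding the union $N$ of the corresponding $\mu$-null sets yields a single $\mu$-null set $N$ such that $\bracket{\rh_s,E(A)}=\chi_A(s)$ for all $A\in\cA$ and all $s\notin N$. For fixed $s\notin N$ the set function $\De\mapsto\bracket{\rh_s,E(\De)}$ is a probability measure on $\cF$ (by normality of $\rh_s$ and countable additivity of $E$) agreeing on the $\cap$-stable generating family $\cA$ with the Dirac measure $\De\mapsto\chi_\De(s)$; by the uniqueness theorem for measures they coincide on all of $\cF$. Thus $\bracket{\rh_s,E(\De)}=\chi_\De(s)$ for every $\De\in\cF$ and every $s\notin N$.

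Next I would take $\De=\{s\}$, which is a Borel set because $(S,\cF)$ is standard, to get $\bracket{\rh_s,E(\{s\})}=1$, and in particular $E(\{s\})\ne 0$, for every $s\notin N$. Since $\vp$ is faithful and $0\le E(\{s\})$, this forces $\mu(\{s\})=\bracket{\vp,E(\{s\})}>0$ for all $s\notin N$, so $\mu$-a.e.\ point of $S$ is an atom of $\mu$. A finite measure has at most countably many atoms, hence the set $S_0$ of atoms of $\mu$ is countable and $\mu(S\setminus S_0)=0$, i.e.\ $\|\cI(S\setminus S_0)\vp\|=0$. Therefore $\cI(S\setminus S_0)\vp=0$ in $\cM_*$, so $\bracket{\vp,\cI^*(S\setminus S_0)M}=0$ for all $M\in\cM$; faithfulness of $\vp$ and positivity of the normal map $\cI^*(S\setminus S_0)$ give $\cI^*(S\setminus S_0)M=0$ for $M\ge 0$ and hence for all $M\in\cM$, so $\cI(S\setminus S_0)=0$. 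Countable additivity of $\cI$ then gives $\cI(\De)=\cI(\De\cap S_0)=\sum_{s\in\De\cap S_0}\cI(\{s\})$ for all $\De\in\cF$, which is exactly discreteness with $T(s)=\cI(\{s\})$ for $s\in S_0$. For the last assertion, when $\cH$ is separable $\cB(\cH)$ carries a faithful normal state $\vp$, and for separable $\cH$ every instrument for $(\cB(\cH),S)$ has a family of posterior states with respect to $(\cI,\vp)$ (the separable-predual case of the disintegration; cf.\ \Cite{85CA}, Theorem~4.5), so the first part applies.

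The step requiring the most care is the passage in the second paragraph from ``for each $\De_2$, for $\mu$-a.e.\ $s$'' to ``for $\mu$-a.e.\ $s$, for every $\De$'': this is where standardness of $(S,\cF)$ enters (both to provide a countable generating algebra and to guarantee that each $\{s\}$ lies in $\cF$), and it rests on the observation that for each individual $s$ the map $\De\mapsto\bracket{\rh_s,E(\De)}$ is a genuine countably additive probability measure, so that once it is pinned down on a countable generating algebra it is determined everywhere. The remaining ingredients — that weak repeatability is precisely what produces $\bracket{\rh_s,E(\De_2)}=\chi_{\De_2}(s)$, and that faithfulness of $\vp$ turns $E(\{s\})\ne 0$ into $\mu(\{s\})>0$ and $\cI(S\setminus S_0)\vp=0$ into $\cI(S\setminus S_0)=0$ — are short and routine.
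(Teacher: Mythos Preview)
Your argument is correct and is essentially the strategy of \Cite{85CA} (and \Cite{85CC}), which the paper cites without reproducing a proof. The key steps --- using weak repeatability to obtain $\bracket{\rh_s,E(\De)}=\chi_{\De}(s)$ via the posterior-state integral formula, upgrading from ``for each $\De$, a.e.\ $s$'' to ``for a.e.\ $s$, every $\De$'' via a countable generating algebra and the $\pi$--$\lambda$ uniqueness theorem, and then evaluating at $\De=\{s\}$ to force $\mu(\{s\})>0$ by faithfulness of $\vp$ --- match the original argument.

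One small point worth tightening: for the final assertion about $\cB(\cH)$ with $\cH$ separable, the theorem concerns arbitrary (not necessarily CP) weakly repeatable instruments, whereas the version of the posterior-state existence theorem quoted in the present paper (\Cite{85CA}, Theorem~4.5) is stated for CP instruments. The existence of a family of posterior states for \emph{general} instruments on $\cB(\cH)$ with separable $\cH$ is indeed established in \Cite{85CA} (the separable-predual disintegration argument does not use complete positivity), so your citation is appropriate; just be aware that you are invoking the slightly stronger form than the one restated here.
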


We shall strengthen the former result to arbitrary CP instruments with the NEP
on a standard Borel space by using the above theorem and the method in this section 
as shown below.

Let $(S,\cF)$ be a measurable space and $S_0$ be a countable subset of $S$.
We define a binary relation $\sim$ on $S_0$ by $s_1\sim s_2$ if $\{s_1,s_2\}\subset\De$
or  $\{s_1,s_2\}\subset\De^{c}$
for every $\De\in\cF$.  Denote by $[s]$ the equivalence class of $s\in S_0$, i.e., 
$[s]=\{s'\in S_0 | s'\sim s\}$.

\begin{lemma} \label{MMP}
Let $(S,\cF)$ be a measurable space and $S_0$ a countable subset of $S$.
There is an at most countable, mutually disjoint family 
$\{\os\}_{[s]\in S_0/\sim}\subset\cF$ such that $[s]\subset \os$ for all $s\in S_0$.
\end{lemma}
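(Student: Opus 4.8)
The plan is to base the whole argument on the observation that the relation $\sim$ is compatible with the $\sigma$-algebra $\cF$. Indeed, by definition $s_1\sim s_2$ means that for every $\De\in\cF$ either $\{s_1,s_2\}\subset\De$ or $\{s_1,s_2\}\subset\De^{c}$; equivalently $s_1\in\De$ precisely when $s_2\in\De$. (In particular $\sim$ is an equivalence relation.) Consequently, for each class $[s]$ and each $\De\in\cF$ we have either $[s]\subset\De$ or $[s]\subset\De^{c}$. Conversely, if $s_1\not\sim s_2$, there is a $\De\in\cF$ containing exactly one of $s_1,s_2$; replacing $\De$ by $\De^{c}$ if needed, we may assume $[s_1]\subset\De$ and $[s_2]\subset\De^{c}$. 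Thus any two distinct classes can be separated by a member of $\cF$ that engulfs one of them and is disjoint from the other.

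Next I would invoke countability: since $S_0$ is countable, the quotient $S_0/\sim$ is at most countable, so I may list the distinct classes as $C_1,C_2,\dots$. For each pair of indices $n<m$ choose, using the separation just described, a set $D_{nm}\in\cF$ with $C_n\subset D_{nm}$ and $C_m\subset D_{nm}^{c}$, and for $n>m$ put $D_{nm}:=D_{mn}^{c}$. Then for every $n\neq m$ one has $C_n\subset D_{nm}$, $C_m\cap D_{nm}=\emptyset$, and $D_{mn}=D_{nm}^{c}$. Now set, for $[s]=C_n$,
\[
\widetilde{[s]}:=\bigcap_{m\neq n}D_{nm},
\]
with the empty intersection understood as $S$ (which is what happens when there is a single class). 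Since the index set of the intersection is at most countable, $\widetilde{[s]}\in\cF$.

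Finally I would verify the two required properties. From $C_n\subset D_{nm}$ for all $m\neq n$ we get $C_n\subset\widetilde{C_n}$, i.e.\ $[s]\subset\widetilde{[s]}$ for every $s\in S_0$. For mutual disjointness, if $n\neq k$ then $\widetilde{C_n}\subset D_{nk}$ whereas $\widetilde{C_k}\subset D_{kn}=D_{nk}^{c}$, so $\widetilde{C_n}\cap\widetilde{C_k}\subset D_{nk}\cap D_{nk}^{c}=\emptyset$. Hence $\{\widetilde{[s]}\}_{[s]\in S_0/\sim}$ is an at most countable, mutually disjoint subfamily of $\cF$ with $[s]\subset\widetilde{[s]}$, which is the assertion. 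I do not anticipate a genuine obstacle here: the argument is pure bookkeeping once the compatibility of $\sim$ with $\cF$ is noted, and the only points deserving a moment's care are that all set operations remain countable (so that measurability is preserved) and the degenerate case of a single class, which the empty-intersection convention handles.
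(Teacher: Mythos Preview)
Your proof is correct and takes essentially the same approach as the paper: both arguments choose, for each pair of inequivalent points, a set in $\cF$ separating them (with the complementarity constraint $D_{mn}=D_{nm}^{c}$, respectively $F(s_1,s_2)=F(s_2,s_1)^{c}$), and then define $\widetilde{[s]}$ as the countable intersection of all such sets containing $[s]$. The only cosmetic difference is that you index the separating sets by pairs of equivalence classes, whereas the paper indexes them by pairs of elements of $S_0$; the resulting families coincide up to possibly redundant intersectands.
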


\begin{proof}
Let $R=\{(s_1,s_2)\in S_0\times S_0 | s_1\sim s_2\}$.
By the definition of the equivalence $\sim$ on $S_0$,
there is a map $F:(S_0\times S_0)\setminus R\to \cF$ such that
$s_1\in F(s_1,s_2)$, $s_2\notin F(s_1,s_2)$ and $F(s_1,s_2)=F(s_2,s_1)^c$
for every $(s_1,s_2)\in (S_0\times S_0)\setminus R$. For every $s\in S_0$, we define
$\os\in\cF$ by
\begin{equation}
\os=\bigcap_{s_1\sim s} \bigcap_{s_2\not \sim s_1} F(s_1,s_2).
\end{equation}
Then, this $\{\os\}_{[s]\in S_0/\sim}$ is the desired family.
\end{proof}

\begin{proposition} \label{Discrete}
Let $\cM$ be a von Neumann algebra on a Hilbert space $\cH$
and $(S,\cF)$ a measurable space.
Every discrete CP instrument $\cI$ for $(\cM,S)$ has  the NEP.
\end{proposition}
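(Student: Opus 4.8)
The plan is to construct the required normal extension $\widetilde{\Psi_\cI}\colon\cM\ootimes L^{\infty}(S,\cI)\to\cM$ by hand, exploiting that for a discrete instrument the abelian algebra $L^{\infty}(S,\cI)$ is an $\ell^{\infty}$-space over a countable index set, so that $\cM\ootimes L^{\infty}(S,\cI)$ is an $\ell^{\infty}$-direct sum of copies of $\cM$.

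First I would apply Lemma \ref{MMP} to fix an at most countable mutually disjoint family $\{\os\}_{[s]\in S_0/\sim}\subset\cF$ with $[s]\subset\os$. Since $\cI(\De)=\sum_{s\in\De}T(s)$ and $S_0=\bigsqcup_{[s]}[s]$, elementary bookkeeping gives $\os\cap S_0=[s]$, that $\cI$ vanishes on every measurable set disjoint from $S_0$, and hence $\cI^{*}\bigl(\bigcup_{[s]}\os\bigr)=\cI^{*}(S)$. Next, using that every $\cF$-measurable function is constant on each class $[s]$ (the points of $[s]$ being indistinguishable by sets of $\cF$, which is the definition of $\sim$) together with the fact that the $\cI$-mass inside $\os$ sits on $[s]$, one shows that each bounded $\cI$-measurable function is $\cI$-a.e.\ of the form $\sum_{[s]}a_{[s]}\chi_{\os}$. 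This identifies $L^{\infty}(S,\cI)$ with $\ell^{\infty}(I)$, where $I$ is the countable set of classes $[s]$ carrying nonzero $\cI$-mass, via $[\chi_{\De}]\mapsto(\varepsilon_{[s]})_{[s]}$ with $\varepsilon_{[s]}=1$ if $[s]\subset\De$ and $0$ otherwise (the dichotomy again being the definition of $\sim$). Hence $\cM\ootimes L^{\infty}(S,\cI)\cong\bigoplus^{\ell^{\infty}}_{[s]\in I}\cM$, whose predual is $\bigoplus^{\ell^{1}}_{[s]\in I}\cM_{*}$.

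Then I would define $\Lambda\colon\cM_{*}\to\bigoplus^{\ell^{1}}_{[s]}\cM_{*}$ by $\Lambda(\rh)=(\cI(\os)\rh)_{[s]}$. It is bounded, since for positive $\rh$ one has $\sum_{[s]}\|\cI(\os)\rh\|=\|\cI\rh\|\bigl(\bigcup_{[s]}\os\bigr)\le\|\rh\|$, and the general case follows by decomposition; and it is completely positive as a map of preduals because each $\cI(\os)\in\mathrm{CP}(\cM_{*})$ (this is where $\cI$ being a CP instrument and $\os\in\cF$ enters). Its Banach-space adjoint $\widetilde{\Psi_\cI}:=\Lambda^{*}\colon\bigoplus^{\ell^{\infty}}_{[s]}\cM\to\cM$ is then automatically ultraweakly continuous, i.e.\ normal, completely positive, and unital, the last point because $\langle\rh,\Lambda^{*}(1)\rangle=\sum_{[s]}\bracket{\rh,\cI(1,\os)}=\bracket{\rh,\cI(1,S)}=\bracket{\rh,1}$. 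Under the identifications above, $\widetilde{\Psi_\cI}(M\otimes[\chi_{\De}])=\sum_{[s]\subset\De}\cI(M,\os)=\cI\bigl(M,\bigcup_{[s]\subset\De}\os\bigr)=\cI(M,\De)$, the last equality because $\De$ and $\bigcup_{[s]\subset\De}\os$ differ by a set meeting $S_0$ trivially. Finally, since the linear span of $\{M\otimes[\chi_{\De}]\}$ is norm-dense in $\cM\motimes L^{\infty}(S,\cI)$ and both $\widetilde{\Psi_\cI}$ and $\Psi_{\cI}$ are bounded completely positive maps agreeing on it, $\widetilde{\Psi_\cI}|_{\cM\motimes L^{\infty}(S,\cI)}=\Psi_{\cI}$, so $\cI$ has the NEP.

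The only parts requiring care are the measure-theoretic bookkeeping with the classes $[s]$ and the sets $\os$ — the null-set identities and the $\cI$-a.e.\ constancy of measurable functions on classes — which is the step I expect to be the main (if routine) obstacle; the potentially delicate point, namely that the block map $(M_{[s]})\mapsto\sum_{[s]}\cI(M_{[s]},\os)$ converges to a \emph{normal} completely positive map, is circumvented by exhibiting $\widetilde{\Psi_\cI}$ as the adjoint of the predual-level map $\Lambda$, so no separate convergence or normality argument is needed. An alternative route I would mention is that a discrete instrument trivially admits, for each normal state $\rh$, a strongly measurable family of posterior states (constant on each $\os$, equal to $\cI(\os)\rh/\|\cI(\os)\rh\|$ there), whence the NEP follows from Theorem \ref{NEPAFAPS}; but the direct construction above is cleaner and does not consume the posterior-state machinery needed for the subsequent results.
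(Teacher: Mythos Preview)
Your proof is correct, but it takes a genuinely different route from the paper. The paper's argument is the very ``alternative route'' you mention at the end: it defines, for each normal state $\rh$, the family $\rh_s=\cI(\os)\rh/\|\cI(\os)\rh\|$ when $\|\cI(\os)\rh\|>0$ and $\rh_s=\cI(S)\rh$ otherwise, observes that this is a strongly $\cF$-measurable family of posterior states for $(\cI,\rh)$, and then invokes Theorem \ref{NEPAFAPS} to conclude the NEP. That is the entire proof in the paper.

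Your approach instead builds the normal extension $\widetilde{\Psi_\cI}$ explicitly, by identifying $L^{\infty}(S,\cI)\cong\ell^{\infty}(I)$ over the countable atom set and exhibiting $\widetilde{\Psi_\cI}$ as the adjoint of the predual map $\Lambda(\rh)=(\cI(\os)\rh)_{[s]}$. This is more work up front (the measure-theoretic bookkeeping showing each $\os$ is an $\cI$-atom, and the identification $\cM\ootimes\ell^{\infty}(I)\cong\bigoplus^{\ell^{\infty}}\cM$), but it is self-contained: it does not rely on Theorem \ref{NEPAFAPS}, and so could in principle be placed before that theorem rather than after. The paper's proof is shorter precisely because Theorem \ref{NEPAFAPS} has already been established at that point, so there is no cost to consuming it. Your remark that the direct construction ``does not consume the posterior-state machinery needed for the subsequent results'' is a fair observation, though in the paper's logical ordering the machinery is already in hand and is needed for Theorem \ref{WeakRepeatable} anyway.
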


\begin{proof}
Let $\cI$ be a discrete CP instrument for $(\cM,S)$ such that 
$\cI(\De)=\sum_{s\in\De}T(s)$, where $T$ is a $P(\cM_{*})$-valued map on 
a countable set $S_0\subset S$. By Lemma \ref{MMP},
there is an at most countable, mutually disjoint family
 $\{\os\}_{[s]\in S_0/\sim}\subset\cF$ such that
$[s]\subset \os$ for all $s\in S_0$.
Then, a strongly measurable family $\{\rh_s\}_{s\in S}$ of posterior 
states with respect to $(\cI,\rh)$ is defined by
$$
\rh_s=
\left\{
\begin{array}{ll}
\dfrac{\cI(\os)\rh}{\|\cI(\os)\rh\|},
& \mbox{if $\|\cI(\os)\rh\|>0$,}\\
\cI(S)\rh,& \mbox{otherwise.}\nonumber
\end{array}
\right.
$$
By Theorem \ref{NEPAFAPS}, $\cI$ has  the NEP.
\end{proof}

\begin{theorem} \label{WeakRepeatable}
Let $\cM$ be a von Neumann algebra on a Hilbert space $\cH$
and $(S,\cF)$ a standard Borel space.
A weakly repeatable CP instrument $\cI$ for $(S,\cF)$ has  the NEP
if and only if it is discrete.
\end{theorem}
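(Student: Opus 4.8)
The statement is an equivalence, and I would prove the two implications separately; each one reduces to a short combination of results already in hand, so the proof should be brief.

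For the direction "discrete $\Rightarrow$ NEP", there is essentially nothing to do: this is precisely Proposition \ref{Discrete}, which shows that every discrete CP instrument for $(\cM,S)$ has the NEP \emph{without} invoking weak repeatability. So I would simply cite Proposition \ref{Discrete} here.

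For the direction "NEP (and weak repeatability) $\Rightarrow$ discrete", I would argue as follows. Since $\cM$ is $\si$-finite, fix a faithful normal state $\vp$ on $\cM$. Because $\cI$ has the NEP, Theorem \ref{NEPAFAPS} provides a strongly $\cF$-measurable family $\{\vp_s\}_{s\in S}$ of posterior states with respect to $(\cI,\vp)$; in particular $(\cI,\vp)$ admits a family of posterior states in the sense required by Theorem \ref{Ozawa-discreteness}. Now $(S,\cF)$ is a standard Borel space and $\cI$ is weakly repeatable by hypothesis, so Theorem \ref{Ozawa-discreteness} applies and yields that $\cI$ is discrete. Combining the two implications gives the claimed equivalence.

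\textbf{Expected main difficulty.} Frankly there is no substantial obstacle: the theorem is a synthesis of Proposition \ref{Discrete}, Theorem \ref{NEPAFAPS}, and Theorem \ref{Ozawa-discreteness}, and the bridge between them is the observation that the NEP forces the existence of posterior states for \emph{every} normal state (Theorem \ref{NEPAFAPS}), so one may specialize to a faithful normal state, which is exactly the hypothesis Theorem \ref{Ozawa-discreteness} needs. The only points meriting a line of care are: (a) existence of a faithful normal state, which holds by $\si$-finiteness of $\cM$; and (b) that the notion of discreteness used in Proposition \ref{Discrete} and the statement agrees with the one in Theorem \ref{Ozawa-discreteness} on a standard Borel space, which is covered by the Remark following the definition of discreteness. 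So I would keep the write-up to a few lines, essentially quoting the three results in the order above.
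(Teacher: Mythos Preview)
Your proposal is correct and follows essentially the same approach as the paper's own proof, which simply reads: ``By Proposition \ref{Discrete}, discreteness implies the NEP. The converse follows from Theorem \ref{NEPAFAPS} and Theorem \ref{Ozawa-discreteness}.'' Your write-up merely unpacks this by making explicit the choice of a faithful normal state (via $\sigma$-finiteness) needed to invoke Theorem \ref{Ozawa-discreteness}, and by noting the compatibility of the two notions of discreteness on a standard Borel space; both points are handled in the paper exactly as you anticipate.
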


\begin{proof}
By Proposition \ref{Discrete}, discreteness implies  the NEP.
The converse follows from Theorem \ref{NEPAFAPS} and Theorem 
\ref{Ozawa-discreteness}.
\end{proof}

By Corollary \ref{UNPM}, the proof of Proposition \ref{Discrete} and 
Theorem \ref{Ozawa-discreteness},
we have the following.

\begin{corollary}
Let $\cM$ be a von Neumann algebra on a Hilbert space $\cH$
and $(S,\cF)$ a standard Borel space.
For a weakly repeatable instrument $\cI$ for $(\cM,S)$,
the following conditions are equivalent:

(i)
 There exists a unital normal positive map 
 $\Ph_\cI:\cM\ootimes L^{\infty}(S,\cI)\to\cM$ such that
\begin{equation}
\cI(M,\De)=\Ph_\cI(M\otimes [\chi_\De])
\end{equation}
for all $M\in\cM$ and $\De\in\cF$.

(ii)
 For every normal state $\rh$ on $\cM$,
there exists a strongly measurable family $\{\rh_s\}_{s\in S}$ of
posterior states with respect to $(\cI,\rh)$.

(iii)
 For a normal faithful state $\vp$ on $\cM$,
there exists a family $\{\vp_s\}_{s\in S}$ of
posterior states with respect to $(\cI,\vp)$.

(iv) $\cI$ is discrete.
\end{corollary}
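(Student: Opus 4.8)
The plan is to run the cycle (i)$\Leftrightarrow$(ii)$\Rightarrow$(iii)$\Rightarrow$(iv)$\Rightarrow$(ii), almost all of which has already been done elsewhere in the paper. The equivalence of (i) and (ii) is precisely Corollary~\ref{UNPM}, which is stated for arbitrary (not necessarily CP) instruments, so nothing new is needed there. For (ii)$\Rightarrow$(iii) one simply specializes the family in (ii) to the faithful normal state $\rh=\vp$: a strongly measurable family of posterior states with respect to $(\cI,\vp)$ is, in particular, a family of posterior states in the weaker sense demanded by (iii). The implication (iii)$\Rightarrow$(iv) is exactly Theorem~\ref{Ozawa-discreteness}: since $\cI$ is weakly repeatable and $(S,\cF)$ is a standard Borel space, the mere existence of a family of posterior states with respect to $(\cI,\vp)$ for a faithful normal state $\vp$ forces $\cI$ to be discrete.

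The only step that requires work is (iv)$\Rightarrow$(ii), and the point is that the construction in the proof of Proposition~\ref{Discrete} never used complete positivity --- it used discreteness, Lemma~\ref{MMP}, and countable additivity of $\cI$, while complete positivity entered only at the final line through Theorem~\ref{NEPAFAPS}. So, assuming $\cI$ discrete, write $\cI(\De)=\sum_{s\in\De}T(s)$ for a $P(\cM_{*})$-valued map $T$ on a countable set $S_0\subset S$, and apply Lemma~\ref{MMP} to get an at most countable, mutually disjoint family $\{\os\}_{[s]\in S_0/\sim}\subset\cF$ with $[s]\subset\os$; note that $\os\cap S_0=[s]$, so that $\|\cI\rh\|$ is the discrete measure carried by $S_0$ and $\cI(\os)\rh=\sum_{s'\in[s]}T(s')\rh$. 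For any normal state $\rh$ one then sets, as in Proposition~\ref{Discrete},
\[
\rh_s=\frac{\cI(\os)\rh}{\|\cI(\os)\rh\|}\quad\text{if }\|\cI(\os)\rh\|>0,\qquad \rh_s=\cI(S)\rh\quad\text{otherwise},
\]
with $\rh_s=\cI(S)\rh$ for $s$ lying in none of the $\os$. Since $\cI(\os)$ is a positive map on $\cM_{*}$, each $\rh_s$ is a normal state; since the family is constant on each member of a countable measurable partition of $S$, it is strongly $\cF$-measurable (take the partial-sum simple functions); and the disintegration identity $\bracket{\cI(\De)\rh,M}=\int_{\De}\bracket{\rh_s,M}\,d\|\cI(s)\rh\|$ follows by grouping $\De\cap S_0$ into equivalence classes (each $\De\in\cF$ cuts $S_0$ into a union of classes by definition of $\sim$) and using countable additivity. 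Hence (ii) holds, closing the cycle.

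I expect the bookkeeping in (iv)$\Rightarrow$(ii) to be the only delicate part: one must check that $\De\cap S_0$ is genuinely a union of $\sim$-classes (so that the atomwise normalization is consistent), that $\os\cap S_0=[s]$ (so that $\cI(\os)\rh$ is exactly the sum of the $T(s')\rh$ over the class), that the modification on the $\|\cI\rh\|$-null set $\{s:\|\cI(\os)\rh\|=0\}$ leaves the integral identity intact, and that strong --- not merely weak$^*$ --- measurability really follows from the piecewise-constant structure. None of this needs an idea beyond the proofs of Lemma~\ref{MMP} and Proposition~\ref{Discrete}; it is simply where the hypotheses ``standard Borel'' (via Theorem~\ref{Ozawa-discreteness}) and ``weakly repeatable'' genuinely enter the argument.
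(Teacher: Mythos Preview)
Your proposal is correct and follows exactly the route the paper indicates: the paper simply states that the corollary follows from Corollary~\ref{UNPM}, the proof of Proposition~\ref{Discrete}, and Theorem~\ref{Ozawa-discreteness}, and your cycle (i)$\Leftrightarrow$(ii)$\Rightarrow$(iii)$\Rightarrow$(iv)$\Rightarrow$(ii) is precisely how these three ingredients assemble. Your observation that the posterior-state construction in Proposition~\ref{Discrete} never uses complete positivity --- only discreteness, Lemma~\ref{MMP}, and countable additivity --- is exactly the point needed to make (iv)$\Rightarrow$(ii) go through for a merely positive instrument, and your bookkeeping checks (that $\De\cap S_0$ is a union of $\sim$-classes, that $\os\cap S_0=[s]$, and that the countably-valued function is strongly measurable) are all sound.
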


We shall show another example of a CP instrument without  the NEP in addition to Example \ref{ex85}.

\begin{example}\label{example-2}
Let $\cM$ be an AFD von Neumann algebra of type $\mathrm{II}_1$
on a separable Hilbert space $\cH$.
Let $A$ be a self-adjoint operator with continuous spectrum affiliated with $\cM$,
$E^A$ the spectral measure of $A$,
and $\mathcal{E}$ a normal conditional expectation of $\cM$ onto
$\{A\}'\cap\cM$,
where $\{A\}'=\{E^{A}(\De)\mid \De\in \cB(\R)\}'$; the existence of a normal 
conditional expectation of a $\si$-finite and finite von Neumann algebra
onto its subalgebras was first found by Umegaki \cite{Ume54}.
We define a CP instrument $\cI_A$ for $(\cM,\R)$ by 
\beql{repeatable}
\cI_A(M,\De)=\mathcal{E}(M)E^A(\De)
\eeq
for all $M\in\cM$ and $\De\in\mathcal{B}(\mathbb{R})$.
Then, $\cI_A$ is not discrete by the continuity of the spectrum of $A$,
and by the property of conditional expectation $\cI_A$ is (weakly) repeatable
as discussed by Davies-Lewis \cite[Theorem 9]{DL70}.
Hence it does not have the NEP but has ANEP by the injectivity of $\cM$.
\end{example}

Instruments and measuring processes play different but a sort of complementary roles
in quantum measurement theory.  Von Neumann \cite{vN32} introduced the repeatability
hypothesis solely from a statistical requirement extracted from the Compton-Simmons 
experiment, and derived the famous measurement-induced state change rule, called 
the projection postulate, for non-degenerate observables, which was eventually 
extended to degenerate observables by L\"{u}ders \cite{Lud51}.  
Then, von Neumann \cite{vN32} raised the problem as to the consistency 
of the projection postulate with fundamental postulates for the standard quantum mechanics. 
To solve this problem, von Neumann introduced quantum mechanical description 
for process of measurement by the interaction, consistent with
the Schr\"{o}dinger equation, between the object and the apparatus as well as
by the subsequent direct measurement,
consistent with the Born rule,  of the meter observable in the apparatus, and showed 
that the state change described by the projection postulate can be consistently described 
by such a  description of the process of measurement.

In out attempt to local quantum measurement theory, the above scenario
has been ultimately generalized as the representation theorem (Theorem \ref{mainse4})
and the density theorem (Theorem \ref{NEPANEP}) for CP instruments 
with the NEP.   The representation theorem states that every CP instruments with the NEP,
as a statistical description of measurement, is consistent with the dynamical description 
represented by a measuring process.  The density theorem states that every CP instrument
on an injective von Neumann algebra can be realized by a measuring process within arbitrary
error limits.  Note that in local quantum physics, a local algebra $\cM$ is broadly shown 
to be an AFD von Neumann algebra on a separable Hilbert space, so that $\cM$ is injective,
An interesting aspect of this new scenario is to allow a more flexible approach to the
repeatability hypothesis for continuous observables.  By Theorem \ref{WeakRepeatable}
no weakly repeatable instrument for a continuous observable has the corresponding 
measuring process.
However, accepting that  local algebras are injective,  a continuous observable
affiliated with a local algebra may have a weakly repeatable CP instrument 
as defined by \Eq{repeatable} and this instrument is considered to be realizable 
within arbitrary error limit by the density theorem.
This is a strong contrast to measurement theory for quantum systems with finite degrees 
of freedom, in which no weakly repeatable instruments for continuous observables 
exists.

In the present and the preceding sections,  
we have developed the theory of CP instruments defined on general von Neumann algebras,
and greatly deepened our understanding of measurement in quantum systems 
with infinite degrees of freedom.
Especially, the normal extension property introduced in Section \ref{se:completely} 
plays a decisive role, which was shown to be equivalent to the existence of a measuring process 
and that of a strongly measurable family of
posterior state for every normal state. Furthermore, we established that
all CP instruments defined on von Neumann algebras describing most of physical systems 
have ANEP.  We finally apply our method to local measurements in AQFT in the next section.

\section{DHR-DR Theory and Local Measurement}
\label{se:dhr-dr}
First, we shall list assumptions for algebraic quantum field theory.
We refer readers to Refs.~\onlinecite{Ara00,Haa96} for standard references 
on algebraic quantum field theory (AQFT).
In AQFT  it is taken for granted that we can  measure only elements of  the 
set ${\cA} (\cO)$ of observables
in bounded regions $\cO$ of four-dimensional Minkowski space
and that each ${\cA} (\cO)$ forms an operator algebra.
Thus, in AQFT, we approach the nature of quantum fields exactly through the family
$\{{\cA} (\cO)\}_{\cO\in\cK}$ of observable algebras.
The purpose of AQFT is to select families of 
operator algebras suitable for the description of quantum fields. 
We introduce the concept of local nets of observables as follows.

\begin{description}
 \item[1 (Local net)] 
Let $\{{\cA} (\cO)\}_{\cO\in\cK}$ be a family of $W^*$-algebras over a causal poset $\cK$ of bounded subregions of
four-dimensional Minkowski space $(\mathbb{R}^4,\eta)$,
where $\eta$ is the Minkowski metric on $\mathbb{R}^4$, satisfying the following four conditions:

(i) $\cO_1\subset \cO_2\in\cK$ $\Rightarrow$
${\cA} (\cO_1)\subset{\cA} (\cO_2)$.

(ii) if $\cO_1$ and $\cO_2$ are causally separated from each other,
then ${\cA} (\cO_1)$ and ${\cA} (\cO_2)$ mutually commute.

(iii) $\bigcup_{\cO\in\cK} {\cA} (\cO)$ is a dense *-subalgebra of
a C*-algebra ${\cA} $.

(iv) there is a strongly continuous automorphic action $\alpha$ on ${\cA} $
of the Poincare group $\mathcal{P}_+^{\uparrow}$ such that $\alpha_{(a,L)}({\cA} (\cO))={\cA} (L\cO+a)=
{\cA} (k_{(a,L)}\cO)$
for any $g=(a,L)\in\mathcal{P}_+^{\uparrow}=
\mathbb{R}^4\rtimes \cL_+^{\uparrow}$ and $\cO\in\cK$, where $\cL_+^{\uparrow}$ is the Lorentz group and 
$k_g:\mathbb{R}^4\rightarrow \mathbb{R}^4$ is defined 
for every $g=(a,L)\in\mathcal{P}_+^{\uparrow}$
by $k_{(a,L)}x=L x+a$ for all $x\in \mathbb{R}^4$.
\end{description}

We call a family $\{{\cA} (\cO)\}_{\cO\in\cK}$ of
W*-algebras satisfyng the above conditions a (W*-)local net of observables.

Next, we shall consider physical states and representations of ${\cA} $
in the case where a vacuum is fixed as a reference state%
\footnote{One may consider that we can prepare states only in bounded local regions,
i.e., (normal) states on ${\cA} (\cO)$ for some $\cO\in\cK$.
This intuition is realized as the concept of local state \cite{OOS15}.
A local state $T$ on ${\cA} $ with an inclusion pair of regions
$(\cO_1,\cO_2)$
 is defined as a unital CP map $T$ on ${\cA} $ satisfying the following conditions,
where $\cO_1,\cO_2\in\cK$ such that $\cO_1\subsetneq\cO_2$:
(i) $T(AB)=T(A)B$ for all $A\in{\cA} $ and $B\in{\cA} ((\cO_2)')$;
(ii) There exists $\varphi\in{\cA} (\cO_1)_{\ast,1}$ such that
$T(A)=\varphi(A)1$ for all $A\in{\cA} (\cO_1)$.
On the other hand, we need states on ${\cA} $ to describe the given physical system.
By using local states, states on ${\cA} $ can be regarded as local states with $(\cO_1,\cO_2)$
in the limit of $\cO_1$ tending to the whole space $\mathbb{R}^4$.
The authors believe that the use of local states would be helpful for developing measurement theory
in local quantum physics in the future.\\ \\ }.
In the setting of algebraic quantum field theory,
it is assumed that all physically realizable states on ${\cA} $
and representations of ${\cA} $ are locally normal, i.e.,
normal on ${\cA} (\cO)$ for all $\cO\in\cK$.
One of the most typical reference states is a vacuum state,
which is a state on lowest every in some coordinate (\Cite{Ara00}, Definition 4.3).
For simplicity, we define a vacuum state $\omega_0$ as follows
(see also \Cite{Ara00}, Theorem 4.5 and \Cite{Haa96} for details).

\begin{description}
 \item[2 (Vacuum state and representation)] 
A vacuum state $\omega_0$ is a $\mathcal{P}_+^{\uparrow}$-invariant locally normal
pure state on ${\cA} $. We denote by $(\pi_0,\cH_0,U,\Omega)$ the GNS representation of
$({\cA} ,\mathcal{P}_+^{\uparrow},\alpha,\omega_0)$.
In addition, it is assumed that the spectrum of the generator $P=(P_\mu)$ of the translation part of $U$
is contained in the closed future lightcone $\overline{V_+}$.
\end{description}

For every $\cO\in\cK$, we denote by $\overline{\cO}$ the closure of $\cO$
and define the causal complement $\cO'$ of $\cO$
by 
\beq
\cO'=\{x\in\mathbb{R}^4\;|\;\eta(x-y,x-y) (=(x-y)^2)<0, y\in\cO\}.
\eeq
For $\cO_1,\cO_2\in\cK$, we denote by $\cO_1\Subset \cO_2$
whenever $\overline{\cO_1}\subsetneq \cO_2$.
We denote by $\cK^{DC}$ the subset of $\cK$ consisting of double cones, i.e.,
\beq
\cK^{DC}=\{(a+V_+)\cap(b-V_+)\in\cK\;|\;a,b\in\mathbb{R}^4\}.
\eeq
Furthermore, we adopt the following notations:
\begin{align}
\cK_\Subset=\{&(\cO_1,\cO_2)\in\cK\times\cK\;|\;
\cO_1\Subset \cO_2\},\\
\cK_\Subset^{DC}=\{&(\cO_1,\cO_2)\in \cK_\Subset\;|\;
\cO_1\; \text{and}\; \cO_2\;\text{are}\;\text{double}\;\text{cones}\}.
\end{align}
For a local net $\{{\cA} (\cO)\}_{\cO\in\cK}$ and
a vacuum state $\omega_0$ on ${\cA} $, we assume the following three conditions:

\begin{description}
 \item[A (Property B)] 
$\{{\cA} (\cO)\}_{\cO\in\cK}$ has property B, i.e., 
for every pair $(\cO_1,\cO_2)\in \cK_\Subset$ of regions
and projection operator $E\in{\cA} (\cO_1)$, there is an isometry operator
$W\in{\cA} (\cO_2)$ such that $WW^*=E$ and $W^* W=1$.

 \item[B (Haag duality)] 
We define the dual net $\{{\cA} ^d(\cO)\}_{\cO\in\cK^{DC}}$
of $\{{\cA} (\cO)\}_{\cO\in\cK}$ with respect to the vacuum representation $\pi_0$
by ${\cA} ^d(\cO)=\pi_0({\cA} (\cO'))'$
for all $\cO\in\cK^{DC}$, where ${\cA} (\cO')=
\overline{\bigcup_{\cO_1\in\cK,\cO_1\subset\cO'}
{\cA} (\cO_1)}^{\|\cdot\|}$.
Then, 
$\{{\cA} (\cO)\}_{\cO\in\cK}$ satisfies Haag duality in $\pi_0$,
i.e., 
${\cA} ^d(\cO)=\pi_0({\cA} (\cO)){''}$
for all $\cO\in\cK^{DC}$.

 \item[C (Separability)] $\cH_0$ is separable.
\end{description}

In the case where a local net $\{{\cA} (\cO)\}_{\cO\in\cK}$ and
a vacuum state $\omega_0$ on ${\cA} $ are fixed and satisfy the above conditions,
we are in a typical situation appearing in
the Doplicher-Haag-Roberts and  Doplicher-Roberts theory (DHR-DR theory, for short),
which selects a local excitations.
A well-known condition selecting physical representations which describe local excitations
is called the DHR selection criterion.
A representation $\pi$ of ${\cA} $ on a Hilbert space $\cH$
is said to satisfy the DHR selection criterion
in support with a bounded region $\cO$ if
the restriction $\pi|_{{\cA} (\cO')}$ of $\pi$
to ${\cA} (\cO')$ is unitarily equivalent to $\pi_0|_{{\cA} (\cO')}$, i.e., 
\begin{equation}
\pi|_{{\cA} (\cO')}\cong \pi_0|_{{\cA} (\cO')}.
\end{equation}
This condition means that, if a local excitation specified by $\pi$
is localized in $\cO$, we cannot distinguish the excitation and the vacuum
in the causal complement $\cO'$.

For a representation $\pi$ of ${\cA} $ on $\cH$ satisfying the DHR selection criterion
in support with a bounded region $\cO$, there exists a unitary operator
$U:\cH_0\rightarrow\cH$ such that $\pi(A)U=U\pi_0(A)$ for all $A\in{\cA} (\cO')$.
We can define a representation $\pi'$ on $\cH_0$ by
\begin{equation}
\pi'(A)=U^* \pi(A)U
\end{equation}
for all $A\in{\cA} $. Then $\pi'$ satisfies
\begin{equation}
\pi'(A)=U^* \pi(A)U=U^* U\pi_0(A)=\pi_0(A)
\end{equation}
for all $A\in{\cA} (\cO')$, i.e.,
$\pi'|_{{\cA} (\cO')}= \pi_0|_{{\cA} (\cO')}$.
Therefore, we may only consider representations $\pi$ of ${\cA} $ on $\cH_0$ satisfying
\begin{equation}
\pi|_{{\cA} (\cO')}= \pi_0|_{{\cA} (\cO')}
\end{equation}
from the beginning, instead of representations of ${\cA} $ on different Hilbert spaces 
satisfying the DHR criterion.

In the DHR-DR theory, it is usually assumed that all physically relevant
factor representations satisfying the DHR selection criterion
are quasi-equivalent to irreducible ones. 
Here, we also assume this.
By this assumption and the categorical analysis by Doplicher and Roberts \cite{DR89b,DR90},
all representations satisfying the DHR selection criterion generate
atomic von Neumann algebras.
There then exists a normal conditional expectation
$\mathcal{E}_\pi:\B(\cH_0)\rightarrow\pi({\cA} ){''}$
for any representation $\pi$ on $\cH_0$ satisfying the DHR selection criterion.

Now, we shall enter into measurement theory in AQFT, the main subject of this section.
Measurement theory in AQFT has been discussed in several investigations.
For example, Doplicher \cite{Dop09} deepened
the relation between concepts of traditional measurement theory and those of AQFT.
Here, we develop the theory by applying the results given in the previous sections.

We have mentioned in section \ref{se:approximations} that local algebras 
${\cA} (\cO)$ are AFD and separable
under very general postulates, e.g., the Wightman axioms, nuclearity, 
and asymptotic scale invariance \cite{BDF87};
then we can assume that ${\cA} $ acts on $\mathcal{H}_0$ from the beginning, 
i.e., $\pi_0=\id$.
These postulates are strongly related to standard settings of quantum field theory and
are supposed to hold for typical models to which we can apply the DHR-DR theory.
In particular, the nuclearity condition is often assumed
since a quantum field modeled by a local net satisfying this condition possesses
a reasonable particle interpretation.
Furthermore, it is shown in \Cite{BW86} that the nuclearity condition implies 
the split property introduced as follows.

\begin{definition}[Split property]
Let $\{{\cA} (\cO)\}_{\cO\in\cK}$ be a family of W$^\ast$-algebras.
A pair $(\cO_1,\cO_2)\in\cK_\Subset$ is called a split pair
for $\{{\cA} (\cO)\}_{\cO\in\cK}$
if there exists a type $\mathrm{I}$ factor $\mathcal{N}$ such that
${\cA} (\cO_1)\subset\mathcal{N}\subset{\cA} (\cO_2)$.
We say that $\{{\cA} (\cO)\}_{\cO\in\cK}$ satisfies the split property
if every $(\cO_1,\cO_2)\in\cK_\Subset$ is a split pair
for $\{{\cA} (\cO)\}_{\cO\in\cK}$.
\end{definition}

If a local net $\{{\cA} (\cO)\}_{\cO\in\cK}$, which 
are von Neumann algebras on a Hilbert space,
satisfies the split property, then 
\beq
{\cA} (\cO_1)\vee{\cA} (\cO_2)
\cong{\cA} (\cO_1)\;\overline{\otimes}\;{\cA} (\cO_2)
\eeq
 holds
for any $\cO_1,\cO_2\in\cK$ such that
$\overline{\cO_1}\subsetneq(\cO_2)'$.

In the spirit of algebraic quantum theory, it is natural to consider 
the observable algebra $\pi({\cA} (\cO_1))$
only for a double cone $\cO_1$, 
where $\pi$ is a representation of ${\cA} $ on $\cH_0$
such that $\pi|_{{\cA} ((\cO_1)')}
= \pi_0|_{{\cA} ((\cO_1)')}$, or 
the case where local excitations exist only in a double cone $\cO_1$ 
for simplicity, since for every $\cO\in\cK$
there is a double cone $\cO_1$ such that $\cO\subset\cO_1$.
Then a measuring apparatus for the system 
specified by $\pi({\cA} (\cO_1))$, with the output variable 
taking values in a measurable space $(S,\mathcal{F})$, corresponds to
an instrument ${\cI} $ for $(\pi({\cA} (\cO_1)),S)$.
On the other hand, we have to accept an obvious fact that $\pi({\cA} (\cO_1))$ 
is just one of observable algebras of a quantum field described by the local net 
$\{\pi({\cA} (\cO))\}_{\cO\in\cK}$, and that 
any measurement carried out in a local region $\cO_1$ can also be consistently
described as a measurement taking place in any larger region $\cO$  including 
the original region $\cO_1$.  
Hence we demand that ${\cI} $ can be regarded as the restriction 
of an instrument $\widetilde{{\cI} }$ for $(\pi({\cA} ){''},S)$ 
satisfying some locality condition to $\pi({\cA} (\cO_1))$.
Then $\widetilde{{\cI} }$ and ${\cI} $ have to satisfy
$\widetilde{\cI}(A,\De)=\cI(A,\De)$
for all $\De\in\mathcal{F}$ and $A\in \pi({\cA} (\cO_1))$,
and we call $\widetilde{{\cI} }$ a global extension of ${\cI} $.
Here, we define a local instrument ${\cI} $ on $\pi({\cA} ){''}$  as follows:
We will see later that it is adequate to ensure the existence of a global extension 
$\widetilde{{\cI} }$ of an instrument ${\cI} $ for $(\pi({\cA} (\cO_1)),S)$.

\begin{definition}
Let $(S,\mathcal{F})$ be a measurable space.
Let $\pi$ be a representation of ${\cA} $ on $\cH_0$
such that $\pi|_{{\cA} ((\cO_0)')}
= \pi_0|_{{\cA} ((\cO_0)')}$ for a bounded region $\cO_0\in\cK$.
Let $(\cO_1,\cO_2)\in\cK_\Subset$ be a split pair for
$\{\pi({\cA} (\cO))\}_{\cO\in\cK}$.
A local instrument $\cI$ for $(\pi({\cA} ){''},S,\cO_1,\cO_2)$
is an instrument for $(\pi({\cA} ){''},S)$
satisfying
\begin{equation}\label{locality}
\cI(AB,\De)=\cI(A,\De)B
\end{equation}
for all $\De\in\mathcal{F}$, $A\in\pi({\cA} ){''}$
and $B\in\pi({\cA} ((\cO_2)')){''}$, and
\begin{equation}
\cI(A,\De)\in\pi({\cA} (\cO_1))
\end{equation}
for all $\De\in\mathcal{F}$ and $A\in\pi({\cA} (\cO_1))$.
\end{definition}

Let $\cO_1$ be a double cone and
 $\pi$ a representation of ${\cA} $ on $\cH_0$
such that $\pi|_{{\cA} ((\cO_1)')}=\pi_0|_{{\cA} ((\cO_1)')}$.
Suppose that $\{\pi_0({\cA} (\cO))\}_{\cO\in\cK}$
satisfies the split property. Let $(S,\mathcal{F})$ be a measurable space.
Let ${\cI} $ be an instrument for $(\pi({\cA} (\cO_1)),S)$.
By the above assumption, for every instrument ${\cI} $ for
$({\cA} (\cO_1),S)$,
there exists a local instrument $\widetilde{\cI}$ 
for $(\pi({\cA} ){''},S,\cO_1,\cO_2)$
such that $(\cO_1,\cO_2)\in\cK_{\Subset}$ and $\widetilde{\cI}(A,\De)=\cI(A,\De)$
for all $\De\in\mathcal{F}$ and $A\in\pi({\cA} (\cO_1))$.
Under the W$^\ast$-isomorphism $\iota:\pi({\cA} (\cO_1))
\vee\pi_0({\cA} (\cO_3))
\rightarrow\pi({\cA} (\cO))
\;\overline{\otimes}\;\pi_0({\cA} (\cO_3))$,
the restriction ${\cI} '$ of $\widetilde{\cI}$ to
$\pi({\cA} (\cO_1))\;\overline{\otimes}\;\pi_0({\cA} (\cO_3))$
must satisfy the following equality,
where $\cO_3\in\cK$ such that $\overline{\cO_1}\subsetneq(\cO_3)'$
and that there exists $\cO_4\in\cK$
satisfying $\overline{\cO_4}\subsetneq \cO_3$:
\begin{equation}
{\cI} '(A\otimes B,\De)={\cI} (A,\De)\otimes \id_{\pi_0({\cA} (\cO_3))}(B)
\end{equation}
for all $\De\in\mathcal{F}$, $A\in \pi({\cA} (\cO_1))$
and $B\in \pi_0({\cA} (\cO_3))$.
The split property implies the existence of a type I factor $\mathcal{N}$
such that $\pi_0({\cA} (\cO_4))\subset\mathcal{N}\subset\pi_0({\cA} (\cO_3))$.
Thus ${\cI} $ is completely positive since ${\cI} '(\cdot,\De)$ is positive
for all $\De\in\mathcal{F}$ and
$\mathcal{N}$ contains a weakly dense C$^\ast$-algebra
isomorphic to the algebra of compact operators on a separable infinite-dimensional Hilbert space.
Therefore, an instrument having a global extension is always completely positive.

\cut{\red{
Under the W$^\ast$-isomorphism $\iota:\pi({\cA} (\cO_1))
\vee\pi_0({\cA} ((\cO_2)'))
\rightarrow\pi({\cA} (\cO))
\;\overline{\otimes}\;\pi_0({\cA}( (\cO_2)'))$,
the restriction ${\cI} '$ of $\widetilde{\cI}$ to
$\pi({\cA} (\cO_1))\;\overline{\otimes}\;\pi_0({\cA}( (\cO_2)'))$
must satisfy the following equality,
where $\cO_3\in\cK$ such that $\overline{\cO_1}\subsetneq(\cO_3)'$
and that there exists $\cO_4\in\cK$
satisfying $\overline{\cO_4}\subsetneq \cO_3$:
\begin{equation}
{\cI} '(\Delta,A\otimes B)={\cI} (\Delta,A)\otimes id_{\pi_0({\cA} (\cO_3))}(B)
\end{equation}
for all $\Delta\in\mathcal{F}$, $A\in \pi({\cA} (\cO_1))$
and $B\in \pi_0({\cA} (\cO_3))$.
The split property implies the existence of a type I factor $\mathcal{N}$
such that $\pi_0({\cA} (\cO_4))\subset\mathcal{N}\subset\pi_0({\cA} (\cO_3))$.
Thus ${\cI} $ is completely positive since ${\cI} '(\Delta,\cdot)$ is positive
for all $\Delta\in\mathcal{F}$ and
$\mathcal{N}$ contains a weakly dense C$^\ast$-algebra
isomorphic to the algebra of compact operators on a separable infinite-dimensional Hilbert space.
Therefore, an instrument having a global extension is always completely positive.
}}

\begin{definition}[Minimal dilation]
For a CP instrument for $(\B(\cH),S)$,
the triplet $(\cK,E,V)$ is called a minimal dilation of $\cI$
if $\cK$ is a Hilbert space,
$E:\cF\rightarrow\B(\cK)$ is a spectral measure 
and $V$ is an isometry from $\cH$ into $\cH\otimes\cK$ such that
\begin{equation}
\cI(X,\De)=V^* (X\otimes E(\De))V
\end{equation}
for all $X\in\B(\cH)$, and
\begin{equation}
\cH\otimes\cK= \overline{\mathrm{span}}
\{(X\otimes E(\De))V\xi\;|\;\xi\in\cH,X\in\B(\cH), \De\in\cF\}.
\end{equation}
\end{definition}

Let $(S,\mathcal{F})$ be a measurable space, $\cO_1$ a double cone, and 
$\pi$ a representation of ${\cA} $ on $\cH_0$
such that $\pi|_{{\cA} ((\cO_1)')}
= \pi_0|_{{\cA} ((\cO_1)')}$.
Under the split property, we can prove that, for
every CP instrument $\cI$ for $(\pi({\cA} (\cO_1)),S)$ with the NEP,
 there exists a local CP instrument
$\widetilde{\cI}$ for $(\pi({\cA} ){''},S,\cO_1,\cO_2)$,
a global extension of $\cI$, as follows.

\begin{theorem}[]\label{localCP}
Suppose that $\{\pi_0({\cA} (\cO))\}_{\cO\in\cK}$ satisfies the
split property. Let $(S,\mathcal{F})$ be a measurable space, $\cO_1$ a double cone, and 
$\pi$ a representation of ${\cA} $ on $\cH_0$
such that $\pi|_{{\cA} ((\cO_1)')}
= \pi_0|_{{\cA} ((\cO_1)')}$.
For every CP instrument $\cI$ for $(\pi({\cA} (\cO_1)),S)$ with the NEP
and any double cone $\cO_2$ such that $(\cO_1,\cO_2)\in\cK_\Subset^{DC}$,
there exists a local CP instrument
$\widetilde{\cI}$ for $(\pi({\cA} ){''},S,\cO_1,\cO_2)$ with the NEP
such that $\widetilde{\cI}(A,\De)=\cI(A,\De)$
for all $A\in\pi({\cA} (\cO_1))$ and $\De\in\mathcal{F}$.
For the above local CP instrument $\widetilde{\cI}$ and
every CP instrument $\cI'$ for $(\B(\cH_0),S)$
such that $\cI'(A,\De)=\widetilde{\cI}(A,\De)$
for all $A\in\pi({\cA} ){''}$ and $\De\in\mathcal{F}$,
the minimal dilation $(\cK,E,V)$ of $\cI'$ satisfies the following intertwining relation:
\begin{equation}
VA=(A\otimes 1)V
\end{equation}
for all $A\in\pi({\cA} ((\cO_2)')){''}$.
\end{theorem}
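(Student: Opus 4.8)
The plan is to build the global instrument $\widetilde{\cI}$ by amplifying a faithful measuring process of $\cI$ through the split, and then to read the intertwining relation off the locality of $\widetilde{\cI}$. For the split: since $\pi$ agrees with $\pi_0$ on $\cA((\cO_1)')$ and $\{\pi_0(\cA(\cO))\}_{\cO\in\cK}$ has the split property, the transfer of the split property to DHR representations makes $(\cO_1,\cO_2)$ a split pair for $\{\pi(\cA(\cO))\}_{\cO\in\cK}$; so I would fix a type $\mathrm{I}$ factor $\cN$ with $\pi(\cA(\cO_1))\subseteq\cN\subseteq\pi(\cA(\cO_2))$ and write $\cH_0=\cH_a\otimes\cH_b$, $\cN=\B(\cH_a)\otimes 1$, $\cN'=1\otimes\B(\cH_b)$ (with $\cH_a,\cH_b$ separable by assumption~C). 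Then $\pi(\cA((\cO_2)'))''\subseteq\pi(\cA(\cO_2))'\subseteq\cN'$ by locality, and — since $\cN\subseteq\pi(\cA)''$ is a type $\mathrm{I}$ factor acting as $\B(\cH_a)\otimes 1$ — the algebra $\pi(\cA)''$ splits off this factor, $\pi(\cA)''=\B(\cH_a)\ootimes\mathcal{Q}$ for some von Neumann algebra $\mathcal{Q}$ on $\cH_b$; this factorization is what will make the construction land inside $\pi(\cA)''$.

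Next, since $\cI$ has the NEP, Theorem~\ref{NEP} furnishes a faithful measuring process $\M_0=(\cL,\si,U_0,E_0)$ for $(\pi(\cA(\cO_1)),S)$, viewed on $\cH_a$ via $\cN=\B(\cH_a)\otimes 1$, with $U_0$ unitary on $\cH_a\otimes\cL$ and $\cI(A,\De)=(\id\otimes\si)[U_0^{*}(A\otimes E_0(\De))U_0]$ for $A\in\pi(\cA(\cO_1))$, $\De\in\cF$. I would then amplify: under $\cH_0\otimes\cL\cong(\cH_a\otimes\cL)\otimes\cH_b$ set $U=U_0\otimes 1_{\cH_b}$ and define $\widetilde{\cI}(X,\De)=(\id\otimes\si)[U^{*}(X\otimes E_0(\De))U]$ for $X\in\B(\cH_0)$. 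On simple tensors one computes $\widetilde{\cI}(X_a\otimes X_b,\De)=\cI_{\M_0}(X_a,\De)\otimes X_b$, so $\widetilde{\cI}(\,\cdot\,,\De)$ equals $\cI_{\M_0}^{*}(\De)\otimes\id$ on $\B(\cH_a)\ootimes\B(\cH_b)=\B(\cH_0)$ and hence, by the factorization above, carries $\pi(\cA)''=\B(\cH_a)\ootimes\mathcal{Q}$ into itself. Thus $\M=(\cL,\si,U,E_0)$ is a measuring process for $(\pi(\cA)'',S)$, $\widetilde{\cI}$ restricted to $\pi(\cA)''$ is a CP instrument for $(\pi(\cA)'',S)$, and it has the NEP by Theorem~\ref{NEP} — equivalently, since $\pi(\cA)''$ is atomic in the DHR--DR setting, by Theorem~\ref{NCE}.

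It remains to check locality and, afterwards, the intertwining relation. For $A\in\pi(\cA(\cO_1))\subseteq\cN$ one has $A=A_a\otimes 1$ and $\widetilde{\cI}(A,\De)=\cI(A,\De)\in\pi(\cA(\cO_1))$, which yields both $\widetilde{\cI}(A,\De)=\cI(A,\De)$ (the global-extension property) and $\widetilde{\cI}(A,\De)\in\pi(\cA(\cO_1))$; for $B\in\pi(\cA((\cO_2)'))''\subseteq\cN'$ the operator $B\otimes 1_{\cL}$ commutes with $U=U_0\otimes 1_{\cH_b}$, so $U^{*}(AB\otimes E_0(\De))U=U^{*}(A\otimes E_0(\De))U\,(B\otimes 1_{\cL})$, and since $\id\otimes\si$ is a $\B(\cH_0)$-bimodule map, $\widetilde{\cI}(AB,\De)=\widetilde{\cI}(A,\De)B$; hence $\widetilde{\cI}$ is a local CP instrument for $(\pi(\cA)'',S,\cO_1,\cO_2)$ with the NEP and a global extension of $\cI$ (and the same works for any double cone $\cO_2$ with $(\cO_1,\cO_2)\in\cK_\Subset^{DC}$). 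Now let $\cI'$ be any CP instrument for $(\B(\cH_0),S)$ with $\cI'(A,\De)=\widetilde{\cI}(A,\De)$ for $A\in\pi(\cA)''$, and let $(\cK,E,V)$ be its minimal dilation, so $\cI'(X,\De)=V^{*}(X\otimes E(\De))V$ with $V$ isometric and $E(S)=1$. For $B\in\pi(\cA((\cO_2)'))''\subseteq\pi(\cA)''$, the locality of $\widetilde{\cI}$ together with $\widetilde{\cI}(1,S)=1$ gives $\widetilde{\cI}(B,S)=B$ and $\widetilde{\cI}(B^{*}B,S)=B^{*}B$, i.e. $V^{*}(B\otimes 1)V=\cI'(B,S)=B$ and $V^{*}(B^{*}B\otimes 1)V=B^{*}B$, so for every $\xi\in\cH_0$
\begin{equation*}
\|VB\xi-(B\otimes 1)V\xi\|^{2}=\|B\xi\|^{2}-2\|B\xi\|^{2}+\|B\xi\|^{2}=0 ,
\end{equation*}
using $V^{*}V=1$ and the two displayed identities; therefore $VB=(B\otimes 1)V$ for all $B\in\pi(\cA((\cO_2)'))''$, and in fact minimality of the dilation is not needed for this step.

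The step I expect to be the main obstacle is the first one: one must know that the split property descends from the vacuum net to the DHR representation $\pi$ localized in $\cO_1$, so that the separating type $\mathrm{I}$ factor $\cN$ can be taken \emph{inside} $\pi(\cA(\cO_2))$ — hence inside $\pi(\cA)''$ — which is precisely what produces the factorization $\pi(\cA)''=\B(\cH_a)\ootimes\mathcal{Q}$ and forces the amplified channel of the second step to preserve $\pi(\cA)''$. Once that is in place, everything else reduces to the representation theory of Theorem~\ref{NEP} together with short bimodule and Cauchy--Schwarz computations.
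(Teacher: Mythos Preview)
Your proof is correct, and the argument for the intertwining relation is essentially identical to the paper's (expand $\|VA\xi-(A\otimes 1)V\xi\|^{2}$ and use $\cI'(A,S)=A$, $\cI'(A^{*}A,S)=A^{*}A$ from locality). The construction of $\widetilde{\cI}$, however, follows a genuinely different route.

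The paper does not transfer the split to the $\pi$-net. It keeps the type~I factor in the \emph{vacuum} net, $\pi(\cA(\cO_1))\subset\pi_0(\cA(\cO_1))\subset\cN\subset\pi_0(\cA(\cO_2))$, invokes Buchholz's theorem to obtain the spatial isomorphism $\pi(\cA(\cO_1))\vee\pi_0(\cA(\cO_2))'\cong\pi(\cA(\cO_1))\ootimes\pi_0(\cA(\cO_2))'$, defines $\widetilde{\cI}_0(X\otimes Y,\De)=\cI(X,\De)\otimes Y$ on that algebra, extends to $\B(\cH_0)$ via Theorem~\ref{NEP}(iii), and then composes with the normal conditional expectation $\cE_\pi:\B(\cH_0)\to\pi(\cA)''$ (available because $\pi(\cA)''$ is atomic in the DHR--DR setting) to force the range into $\pi(\cA)''$. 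Your approach instead pushes the split into the $\pi$-net, factorises $\cH_0=\cH_a\otimes\cH_b$ with $\cN=\B(\cH_a)\otimes 1\subset\pi(\cA)''$, realises $\cI$ by a measuring process on $\cH_a$, and amplifies by $\otimes\,1_{\cH_b}$; the factorisation $\pi(\cA)''=\B(\cH_a)\ootimes\cQ$ then makes $\widetilde{\cI}$ preserve $\pi(\cA)''$ automatically, with no need for $\cE_\pi$ or for the atomicity of $\pi(\cA)''$. The trade-off is exactly the one you flag: you must justify that the split transfers to $\pi$ (which is standard---the split property is an abstract property of a $W^{*}$-inclusion and passes along locally normal, hence locally faithful, representations when the local algebras are factors), whereas the paper sidesteps this by paying the price of $\cE_\pi$ at the end. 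Your construction is more explicit (it hands you a concrete measuring process for $\widetilde{\cI}$) and slightly more general in its hypotheses on $\pi(\cA)''$; the paper's is more modular, separating the split input from the DHR--DR atomicity input.
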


\begin{proof}
Let $\cI$ be a CP instrument for $(\pi({\cA} (\cO_1)),S)$ with  the NEP.
Let $\cO_2$ be such that $(\cO_1,\cO_2)\in\cK_\Subset^{DC}$.
Since there exists a type I factor $\cN$ such that
$\pi({\cA} (\cO_1))\subset
\pi_0({\cA} (\cO_1))\subset\cN\subset\pi_0({\cA} (\cO_2))$,
it holds by \Cite{Buc74} that
\begin{equation}
\pi({\cA} (\cO_1))
\vee\pi_0({\cA} (\cO_2))^{'}
\cong\pi({\cA} (\cO_1))
\;\overline{\otimes}\;\pi_0({\cA} (\cO_2))^{'}.
\end{equation}
There then exists a CP instrument $\widetilde{\cI}_0$
for $(\pi({\cA} (\cO_1))
\;\overline{\otimes}\;\pi({\cA} ((\cO_2)'){''},S)$
with  the NEP such that
\begin{equation}
\widetilde{\cI}_0 (X\otimes Y,\De)=\cI(X,\De)\otimes Y
\end{equation}
for all $\De\in\mathcal{F}$, $X\in\pi({\cA} (\cO_1))$
and $Y\in\pi({\cA} ((\cO_2)')){''}$.
We identify the CP instrument $\widetilde{\cI}_0$
for $(\pi({\cA} (\cO_1))\;\overline{\otimes}\;\pi({\cA} ((\cO_2)')){''},S)$
with that for $(\pi({\cA} (\cO_1))\vee\pi({\cA} ((\cO_2)')){''},S)$ with the NEP.
Let $\widetilde{\cI}_{1}$ be a CP instrument for $(\B(\cH_0),S)$ obtained by
Theorem \ref{NEP} (iii) the NEP such that 
$\widetilde{\cI}_{1}(X,\De)=\widetilde{\cI}_0(X,\De)$
for all $\De\in\mathcal{F}$ and $X\in\pi({\cA} (\cO))
\vee\pi_0({\cA} ((\cO_2)')){''}$.
We define a CP instrument $\widetilde{\cI}_2$ for $(\B(\cH_0),S)$
by
\begin{equation}
\widetilde{\cI}_2(X,\De)=\mathcal{E}_\pi(\widetilde{\cI}_{1}(X,\De))
\end{equation}
for all $\De\in\mathcal{F}$ and $X\in\B(\cH_0)$,
and a CP instrument $\widetilde{\cI}$ for $(\pi({\cA} ){''},S)$ by
\begin{equation}
\widetilde{\cI}(X,\De)=\widetilde{\cI}_{2}(X,\De)
\end{equation}
for all $\De\in\mathcal{F}$ and $X\in\pi({\cA} ){''}$.
It is easily seen that $\widetilde{\cI}$ is a local CP instrument
 for $(\pi({\cA} ){''},S,\cO_1,\cO_2)$
such that $\widetilde{\cI}(A,\De)=\cI(A,\De)$
for all $\De\in\mathcal{F}$ and $A\in\pi({\cA} (\cO_1))$,
and that $\widetilde{\cI}(X,\De)=\widetilde{\cI}_2(X,\De)$
for all $\De\in\mathcal{F}$ and $X\in\pi({\cA} ){''}$.

Let $\cI'$ be a CP instrument for $(\B(\cH_0),S)$ such that $\cI'(A,\De)=\widetilde{\cI}(A,\De)$
for all $A\in\pi({\cA} ){''}$ and $\De\in\mathcal{F}$, and $(\cK,E,V)$ the minimal dilation of $\cI'$.
For every $A\in\pi({\cA} ((\cO_2)')){''}$, it holds that
\begin{align}
 &\;\;\;\; (VA-(A\otimes 1)V)^*(VA-(A\otimes 1)V)\nonumber \\
 &= A^* V^* VA-V^*(A^*\otimes 1)VA-
 A^* V^*(A\otimes 1)V+V^*(A^* A\otimes 1)V\nonumber \\
 &=A^* A-\cI'(A^*,S)A-
 A^*\cI'(A,S)+\cI'(A^*A,S) \nonumber\\
 &=A^* A-\cI(A^*,S)A-
 A^*\cI(A,S)+\cI(A^*A,S) \nonumber\\
 &=A^* A-A^* A-A^* A+A^* A=0.
\end{align}
We used here Eq.(\ref{locality}) to derive the last line from the fourth line.
Thus we have $VA=(A\otimes 1)V$ for all $A\in\pi({\cA} ((\cO_2)')){''}$.
\end{proof}

In the case where each ${\cA} (\cO)$ is injective and acts on $\cH_0$,
each $\pi({\cA} (\cO))$ is also injective
for every representation $\pi$ of ${\cA} $ on $\cH_0$
such that $\pi|_{{\cA} ((\cO_1)')}
= \pi_0|_{{\cA} ((\cO_1)')}$ for some double cone $\cO_1$.
Then we have $\mathrm{CPInst}_{\mathrm{AN}}(\pi({\cA} (\cO_1),S)=
\mathrm{CPInst}(\pi({\cA} (\cO_1)),S)$. 
By Theorem \ref{localCP} and by the previous discussions
in this section and in section \ref{se:approximations}, 
we established, in physically reasonable situations
for a quantum field modeled by a local net 
$\{\mathcal{A}(\mathcal{O})\}_{\mathcal{O}\in\cK}$,
that for every measuring apparatus $\bA(\bx)$ in a double cone $\mathcal{O}_1$, 
where excitations specified by a representation $\pi$ are localized,
there exists a CP instrument $\mathcal{I}$ defined on $\pi(\mathcal{A}(\mathcal{O}_1))$
that describes the statistical properties of $\bA(\bx)$,
and for any $\varepsilon>0$ there exists a measuring process
$\M$ on $\pi(\mathcal{A})^{\prime\prime}$ that 
defines a local CP instrument, which approximates $\mathcal{I}$ 
within the error limit $\varepsilon$.

We would like to emphasize that there is much room for improvement of Theorem \ref{localCP}.
For example, the following theorem holds as a variant of Theorem \ref{localCP}.
\begin{theorem}[]
Let $(S,\mathcal{F})$ be a measurable space, and 
$\pi$ a representation of ${\cA} $ on $\cH_0$
such that $\pi|_{{\cA} ((\cO_0)')}
= \pi_0|_{{\cA} ((\cO_0)')}$ for a bounded region $\cO_0$.
Let $\cO_1\in\cK$.
If there is a split pair $(\cO_1,\cO_2)\in\cK_\Subset$ for
$\{\pi({\cA} (\cO))\}_{\cO\in\cK}$,
for every CP instrument $\cI$ for $(\pi({\cA} (\cO_1)),S)$ with the NEP,
there exists a local CP instrument
$\widetilde{\cI}$ for $(\pi({\cA} ){''},S,\cO_1,\cO_2)$ with the NEP
such that $\widetilde{\cI}(A,\De)=\cI(A,\De)$
for all $A\in\pi({\cA} (\cO_1))$ and $\De\in\mathcal{F}$.
\end{theorem}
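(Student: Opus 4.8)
The plan is to reuse the three-step strategy of the proof of Theorem~\ref{localCP} --- lift $\cI$ to a CP instrument on $\B(\cH_0)$, then push it back onto $\pi({\cA}){''}$ by a normal conditional expectation --- but to run it from the weaker hypotheses at hand. Two structural ingredients are available. First, $\pi|_{{\cA}((\cO_0)')}=\pi_0|_{{\cA}((\cO_0)')}$ means $\pi$ satisfies the DHR selection criterion in support with $\cO_0$, so by the standing DHR--DR assumptions $\pi({\cA}){''}$ is atomic and there is a normal conditional expectation $\mathcal{E}_\pi:\B(\cH_0)\to\pi({\cA}){''}$. Second, the split pair $(\cO_1,\cO_2)\in\cK_\Subset$ for $\{\pi({\cA}(\cO))\}_{\cO\in\cK}$ provides a type~I factor $\cN$ with $\pi({\cA}(\cO_1))\subset\cN\subset\pi({\cA}(\cO_2))$; fixing the induced spatial decomposition $\cH_0\cong\cH_1\otimes\cH_1'$ with $\cN=\B(\cH_1)\otimes 1$ and $\cN'=1\otimes\B(\cH_1')$, one has $\pi({\cA}(\cO_1))\subset\B(\cH_1)\otimes 1$, and since $(\cO_2)'$ is causally separated from $\cO_2$, locality gives $\pi({\cA}((\cO_2)')){''}\subset\pi({\cA}(\cO_2))'\subset\cN'=1\otimes\B(\cH_1')$ (the split inclusion structure, cf.~\Cite{Buc74}).

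Now I would write $\cM_1$ for $\pi({\cA}(\cO_1))$ regarded as acting on $\cH_1$ and use the NEP of $\cI$ with Theorem~\ref{NEP}(iii) to obtain a CP instrument $\cI^{(1)}$ for $(\B(\cH_1),S)$ with $\cI^{(1)}(X,\De)=\cI(X,\De)$ for $X\in\cM_1$. The decisive choice --- instead of an arbitrary Arveson extension --- is to take $\widetilde{\cI}_1:=\cI^{(1)}\otimes\id_{\B(\cH_1')}$ on $\B(\cH_0)=\B(\cH_1)\ootimes\B(\cH_1')$, i.e.\ $\widetilde{\cI}_1(X\otimes Y,\De)=\cI^{(1)}(X,\De)\otimes Y$. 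This is again a CP instrument for $(\B(\cH_0),S)$; it restricts on $\pi({\cA}(\cO_1))\vee\pi({\cA}(\cO_2))'$ to the evident instrument $(X\otimes Y,\De)\mapsto\cI(X,\De)\otimes Y$; and, crucially, it is a right module map over the second tensor leg, $\widetilde{\cI}_1(ZB,\De)=\widetilde{\cI}_1(Z,\De)B$ for all $Z\in\B(\cH_0)$ and $B\in 1\otimes\B(\cH_1')$, hence for all $B\in\pi({\cA}((\cO_2)')){''}$. I then set $\widetilde{\cI}_2:=\mathcal{E}_\pi\circ\widetilde{\cI}_1$, which is a CP instrument for $(\B(\cH_0),S)$ with values in $\pi({\cA}){''}$, and let $\widetilde{\cI}$ be its restriction to $(\pi({\cA}){''},S)$.

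The rest is verification. Complete positivity, normality on each factor, countable additivity and $\widetilde{\cI}(1,S)=\mathcal{E}_\pi(1)=1$ pass down from $\widetilde{\cI}_1$ and $\mathcal{E}_\pi$. For $A\in\pi({\cA}(\cO_1))$, $\widetilde{\cI}_1(A,\De)=\cI(A,\De)\in\pi({\cA}(\cO_1))\subset\pi({\cA}){''}$, which $\mathcal{E}_\pi$ fixes, so $\widetilde{\cI}(A,\De)=\cI(A,\De)\in\pi({\cA}(\cO_1))$: this yields both the asserted agreement with $\cI$ and the second defining condition of a local instrument. For the locality relation, given $A\in\pi({\cA}){''}$ and $B\in\pi({\cA}((\cO_2)')){''}\subset\pi({\cA}){''}$, the module property of $\widetilde{\cI}_1$ and the $\pi({\cA}){''}$-bimodularity of $\mathcal{E}_\pi$ give $\widetilde{\cI}(AB,\De)=\mathcal{E}_\pi(\widetilde{\cI}_1(A,\De)B)=\mathcal{E}_\pi(\widetilde{\cI}_1(A,\De))B=\widetilde{\cI}(A,\De)B$. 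Finally $\widetilde{\cI}_2$ is a CP instrument for $(\B(\cH_0),S)$ restricting to $\widetilde{\cI}$ on $\pi({\cA}){''}$, so $\widetilde{\cI}$ has the NEP by Theorem~\ref{NEP}(iii). I expect the one real obstacle to be the step that dictated the product form of $\widetilde{\cI}_1$: a generic normal CP extension of $(X\otimes Y,\De)\mapsto\cI(X,\De)\otimes Y$ from $\pi({\cA}(\cO_1))\vee\pi({\cA}((\cO_2)')){''}$ up to $\B(\cH_0)$ need not be a right module map over $\pi({\cA}((\cO_2)')){''}$, because that subalgebra is proper in $\pi({\cA}){''}$ (it omits the algebra of the ``shell'' $\cO_2\setminus\overline{\cO_1}$), so the locality of $\widetilde{\cI}$ genuinely depends on building the $\B(\cH_0)$-extension inside the tensor splitting furnished by $\cN$; the remaining steps are bookkeeping with Theorem~\ref{NEP} and the conditional expectation.
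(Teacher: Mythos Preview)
Your proof is correct and follows the same overall architecture as the paper's argument for Theorem~\ref{localCP} (lift to $\B(\cH_0)$ via the NEP, then project back with $\mathcal{E}_\pi$), which the paper says carries over verbatim here.  The one genuine difference is in how the $\B(\cH_0)$-extension is built.  The paper first forms $\widetilde{\cI}_0(X\otimes Y,\De)=\cI(X,\De)\otimes Y$ on $\pi({\cA}(\cO_1))\vee\pi({\cA}((\cO_2)')){''}$, then invokes Theorem~\ref{NEP}(iii) to obtain \emph{some} extension $\widetilde{\cI}_1$ to $\B(\cH_0)$ and asserts without detail that the resulting $\widetilde{\cI}$ is local; the suppressed step is a multiplicative-domain argument for the associated unital normal CP map $\Psi_1:\B(\cH_0)\ootimes L^\infty(S,\cI)\to\B(\cH_0)$, using that $\Psi_1(B\otimes 1)=B$ for $B\in\pi({\cA}((\cO_2)')){''}$.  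You instead use the spatial splitting $\cH_0\cong\cH_1\otimes\cH_1'$ furnished by $\cN$ to write $\widetilde{\cI}_1=\cI^{(1)}\otimes\id_{\B(\cH_1')}$, which makes the right-module property over $\cN'\supset\pi({\cA}((\cO_2)')){''}$ manifest and sidesteps the multiplicative-domain argument entirely; this is cleaner and more transparent.

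One small point deserves a sentence of justification: when you apply Theorem~\ref{NEP}(iii) to obtain $\cI^{(1)}$ for $(\B(\cH_1),S)$, you are using that the NEP is intrinsic to the abstract von Neumann algebra $\cM_1\cong\pi({\cA}(\cO_1))$ rather than tied to the ambient Hilbert space $\cH_0$.  This is indeed the case---the normal extension $\widetilde{\Psi_\cI}$ automatically takes values in $\cM$, as noted immediately after the NEP definition---but it is worth stating, since Theorem~\ref{NEP}(iii) is phrased relative to a fixed $\cH$.
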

The proof is almost the same as that of Theorem \ref{localCP}.
The existence of a conditional expectation
$\mathcal{E}_\pi:\B(\mathcal{H}_0)\rightarrow\pi({\cA} ){''}$ 
due to the atomicity of $\pi(\cA)''$ is crucial also here.

We can consider another locality condition for CP instruments.
Strictly local CP instruments are defined as follows.

\begin{definition}[]
Let $(S,\mathcal{F})$ be a measurable space and 
$\pi$ a representation of ${\cA} $ on $\cH_0$
such that $\pi|_{{\cA} ((\cO_0)')}
= \pi_0|_{{\cA} ((\cO_0)')}$ for a bouned region $\cO_0\in\cK$.
Let $\cO_1\in\cK$.
A strictly local instrument $\cI$ for $(\pi({\cA} ){''},S,\cO_1)$
is an instrument for $(\pi({\cA} ){''},S)$
satisfying
\begin{equation}
\cI(AB,\De)=\cI(A,\De)B
\end{equation}
for all $\De\in\mathcal{F}$, $A\in\pi({\cA} ){''}$
and $B\in\pi({\cA} ((\cO_1)')){''}$, and
\begin{equation}
\cI(A,\De)\in\pi({\cA} (\cO_1))
\end{equation}
for all $\De\in\mathcal{F}$ and $A\in\pi({\cA} (\cO_1))$.
\end{definition}

If there is a split pair $(\cO_1,\cO_2)\in\cK_\Subset$ for
$\{\pi({\cA} (\cO))\}_{\cO\in\cK}$,
a strictly local instrument $\cI$ for $(\pi({\cA} ){''},S,\cO_1)$
is of course a local instrument $\cI$ for $(\pi({\cA} ){''},S,\cO_1,\cO_2)$.
This definition is a generalization of that of Halvorson \cite{Hal04JMP}
to general representations satisfying the DHR selection criterion.
The following proposition then holds:

\begin{proposition}[] \label{localCPinst}
Let $(S,\mathcal{F})$ be a measurable space and 
$\cO_1$ a double cone,
$\pi$ a representation of ${\cA} $ on $\cH_0$
such that $\pi|_{{\cA} ((\cO_1)')}
= \pi_0|_{{\cA} ((\cO_1)')}$.
Then every strictly local CP instrument $\cI$
for $(\pi({\cA} ){''},S,\cO_1)$ has  the NEP, and
the minimal dilation $(\cK,E,V)$ of the $\mathcal{E}_\pi$-canonical extension
$\widetilde{\cI}=\cE_{\pi}^*\cI$ of $\cI$ satisfies the following intertwining relation:
\begin{equation}
VA=(A\otimes 1)V
\end{equation}
for all $A\in\pi({\cA} ((\cO_1)')){''}$.
\end{proposition}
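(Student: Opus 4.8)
The plan is to dispose of the two assertions separately. For the NEP claim, the crucial point is that $\pi(\cA){''}$ is \emph{atomic}: under the DHR-DR assumptions recalled above, every representation of $\cA$ satisfying the DHR selection criterion generates an atomic von Neumann algebra, and hence there is a normal conditional expectation $\cE_\pi:\B(\cH_0)\to\pi(\cA){''}$. Granting this, Theorem \ref{NCE} applies verbatim and yields the NEP for \emph{every} CP instrument for $(\pi(\cA){''},S)$, in particular for every strictly local one; so nothing beyond the atomicity input is needed for this half of the statement, and strict locality will enter only through the intertwining relation.

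For the intertwining relation I would mimic the computation in the proof of Theorem \ref{localCP}. Let $\widetilde{\cI}=\cE_\pi^*\cI$ be the $\cE_\pi$-canonical extension, so that $\widetilde{\cI}$ is a CP instrument for $(\B(\cH_0),S)$ with $\widetilde{\cI}(X,\De)=\cI(\cE_\pi(X),\De)$, and let $(\cK,E,V)$ be the minimal dilation of it — whose existence follows from Proposition \ref{Naimark-Ozawa} together with Corollary \ref{vNhom2} — so that $\widetilde{\cI}(X,\De)=V^*(X\otimes E(\De))V$ for all $X\in\B(\cH_0)$, with $V^*V=1$ and $E(S)=1_\cK$. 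Now fix $A\in\pi(\cA((\cO_1)')){''}$. Two elementary observations drive the argument. First, since $\pi(\cA((\cO_1)')){''}\subset\pi(\cA){''}$ and $\cE_\pi$ restricts to the identity on its range $\pi(\cA){''}$, one has $\widetilde{\cI}(A,S)=\cI(A,S)$, and likewise for $A^*$ and $A^*A$. Second, applying the strict-locality identity (\ref{locality}) with leading factor $1$ and $\De=S$, together with the normalization $\cI(1,S)=1$, gives $\cI(A,S)=\cI(1,S)A=A$, and similarly $\cI(A^*,S)=A^*$ and $\cI(A^*A,S)=A^*A$. Substituting these into the expansion
\[
(VA-(A\otimes 1)V)^*(VA-(A\otimes 1)V)=A^*A-A^*\widetilde{\cI}(A,S)-\widetilde{\cI}(A^*,S)A+\widetilde{\cI}(A^*A,S)
\]
gives $A^*A-A^*A-A^*A+A^*A=0$, whence $VA=(A\otimes 1)V$ for all $A\in\pi(\cA((\cO_1)')){''}$.

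I do not anticipate a genuine obstacle here: this proposition is essentially a specialization of the reasoning already carried out for Theorem \ref{localCP}. The two points deserving care are (i) importing from the DHR-DR framework the atomicity of $\pi(\cA){''}$, which simultaneously furnishes $\cE_\pi$ and the fact that $\cE_\pi$ fixes $\pi(\cA){''}$ pointwise, and (ii) specializing the locality condition (\ref{locality}) in the correct order — leading factor arbitrary in $\pi(\cA){''}$, trailing factor in $\pi(\cA((\cO_1)')){''}$ — in order to recover $\cI(A,S)=A$ and the two companion identities.
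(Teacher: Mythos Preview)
Your proposal is correct and follows essentially the same approach as the paper, which simply states that ``the proof of this proposition is similar to the last part of the proof of Theorem \ref{localCP}.'' You have faithfully reconstructed that argument: the NEP follows from atomicity of $\pi(\cA){''}$ via Theorem \ref{NCE}, and the intertwining relation is obtained by the same positivity computation as in Theorem \ref{localCP}, using strict locality to collapse $\cI(A,S)$, $\cI(A^*,S)$, $\cI(A^*A,S)$ to $A$, $A^*$, $A^*A$ respectively.
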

The proof of this proposition is similar to the last part of the proof of Theorem \ref{localCP}.

A typical example of strictly local CP instruments is the instrument 
$\cI$ for a von Neumann model of measurement of an observable $A$
affiliated to $\pi({\cA} (\cO_1))$ (cf. \Cite{93CA}), i.e., 
\beq
\cI(M,\De)=\int_{\De}\al(x1-A)^{*}M\,\al(x1-A)\,dx,
\eeq
where $\al\in L^2(\R)$, $\|\al\|_2=1$, $\De\in\cB(\R)$, and $M\in\pi({\cA} ){''}$.
Even if a CP instrument $\cI$ for $(\pi({\cA} (\cO_1)),S)$ has  the NEP,
there does not always exist a strictly local CP instrument $\widetilde{\cI}$ for
$(\pi({\cA} ){''},S,\cO_1)$
such that $\widetilde{\cI}(A,\De)=\cI(A,\De)$
for all $A\in\pi({\cA} (\cO_1))$ and $\De\in\mathcal{F}$.
A future work is to find a condition that
a CP instrument $\cI$ for $(\pi({\cA} (\cO_1)),S)$ has
a strictly local CP instrument $\widetilde{\cI}$ for
$(\pi({\cA} ){''},S,\cO_1)$ such that $\cI(A,\De)=\widetilde{\cI}(A,\De)$
for all $A\in\pi({\cA} (\cO))$ and $\De\in\mathcal{F}$.

In this section, we formulated local measurement on the basis of algebraic quantum field theory.
Our attempt is very natural because actual measurements are genuinely local.
On the other hand, there exist observables such as charges and as particle numbers, both of which are affiliated to global algebras but not to local algebras \cite{BDL86,DDFL87}.
This fact follows from origins of them. It is, however, known that we can actually measure them
in local regions. A typical example is photon counting measurement in quantum theoretical light detections,
which should be taken into account even when we treat light as quantum electro-magnetic field.
It is proved in \Cite{DDFL87} that there exists a net of self-adjoint operators affiliated to
W*-algebras of local observables which converges to a global charge under the split property.
In addition, in terms of nonstandard analysis \cite{CK90,Rob66},
we can describe ``infinitely large" local regions.
Therefore, it is expected that we can mathematically justify local measurements of 
global charges and of particle numbers.
This may be related to the reason why particle numbers should be treated as non-conserved quantities in wide situations in contrast to low-energy situations in which
their amounts are conserved.
Temperature and chemical potential are of the same kind in thermal situations.

In conclusion, it should be mentioned that any mathematical frameworks of quantum 
field theory are incomplete yet, and  algebraic quantum field theory is not an exception;
the most important and longstanding difficulty in AQFT is to show the existence of models on four-dimensional Minkowski space with non-trivial interactions. 
For further development in analysis of local measurements especially in concrete models 
in the near future,  considerable difficulties would be anticipated to be originated from those
difficulties, compared with quantum measurement theory for the systems with finite degrees
of freedom.
On the other hand, we expect that quantum measurement theory gives new insights 
into the description of interactions in AQFT.  We hope this paper stimulates readers to participate
the development in this attractive field of interplay between quantum measurement theory 
and AQFT.

\section*{Acknowledgement}
\label{se:acknowledgement}
K.O. would like to thank Prof. Izumi Ojima and Dr. Hayato Saigo 
for their warm encouragement and useful comments.
The authors thank the anonymous referee for useful comments.
This work was supported by the John Templeton Foundations, No.~35771
and the JSPS KAKENHI, No.~26247016.
M.O. acknowledges the support by the JSPS KAKENHI, No.~15K13456.

%

\end{document}